\journal{Journal of Latex templates}
\newtheorem{theorem}{Theorem}
\newtheorem{proposition}{Proposition}
\newtheorem{conjecture}{Conjecture}
\newtheorem{lemma}{Lemma}
\newtheorem{corollary}{Corollary}
\newtheorem{problem}{Problem}
\newenvironment{proof}{ {\bf Proof:}} {$\Box$}
\begin{document}

\begin{frontmatter}

\title{$S_{12}$ and $P_{12}$-colorings of cubic graphs}

\author[label0]{Anush Hakobyan}
\address[label0]{Department of Informatics and Applied Mathematics, Yerevan State University, Yerevan, 0025, Armenia}
\ead{ashunik94@gmail.com}






\author[label1]{Vahan Mkrtchyan\corref{cor1}}
\address[label1]{Dipartimento di Informatica,
Universita degli Studi di Verona, Strada le Grazie 15, 37134 Verona, Italy}
\ead{vahanmkrtchyan2002@ysu.am}

\cortext[cor1]{Corresponding author}








\begin{abstract}
\medskip
If $G$ and $H$ are two cubic graphs, then an $H$-coloring of $G$ is a proper edge-coloring $f$ with edges of $H$, such that for each vertex $x$ of $G$, there is a vertex $y$ of $H$ with
$f(\partial_G(x))=\partial_H(y)$. If $G$ admits an $H$-coloring, then we will write $H\prec G$.
The Petersen coloring conjecture of Jaeger ($P_{10}$-conjecture) states that for any bridgeless cubic graph $G$, one has: $P_{10}\prec G$. The Sylvester coloring conjecture ($S_{10}$-conjecture) states that for any cubic graph $G$, $S_{10}\prec G$. In this paper, we introduce two new conjectures that are related to these conjectures. The first of them states that any cubic graph with a perfect matching admits an $S_{12}$-coloring. The second one states that any cubic graph $G$ whose edge-set can be covered with four perfect matchings, admits a $P_{12}$-coloring. We call these new conjectures $S_{12}$-conjecture and $P_{12}$-conjecture, respectively. Our first results justify the choice of graphs in $S_{12}$-conjecture and $P_{12}$-conjecture. Next, we characterize the edges of $P_{12}$ that may be fictive in a $P_{12}$-coloring of a cubic graph $G$. Finally, we relate the new conjectures to the already known conjectures by proving that $S_{12}$-conjecture implies $S_{10}$-conjecture, and $P_{12}$-conjecture and $(5,2)$-Cycle cover conjecture together imply $P_{10}$-conjecture. Our main tool for proving the latter statement is a new reformulation of $(5,2)$-Cycle cover conjecture, which states that the edge-set of any claw-free bridgeless cubic graph can be covered with four perfect matchings.
\end{abstract}

\begin{keyword}
Cubic graph; Petersen graph; Petersen coloring conjecture; Sylvester graph; Sylvester coloring conjecture
\MSC[2010] 05C15 \sep 05C70
\end{keyword}

\end{frontmatter}


\section{Introduction}
\label{sec:intro}

In this paper, we consider finite, undirected graphs. They do not contain loops,
though they may contain parallel edges. We also consider pseudo-graphs, which may
contain both loops and parallel edges, and simple graphs, which contain neither loops nor parallel edges. As usual, a loop contributes to the degree of a vertex by two.

Within the frames of this paper, we assume that graphs, pseudo-graphs and simple graphs are considered up to isomorphisms. This implies that the equality $G=G'$ means that $G$ and $G'$ are isomorphic.

For a graph $G$, let $V(G)$ and $E(G)$ be the set of vertices and edges of $G$, respectively. Moreover, let $\partial_{G}(x)$ be the set of edges of $G$ that are incident to the vertex $x$ of $G$. A matching of $G$ is a set of edges of $G$ such that any two of them do not share a vertex. A matching of $G$ is perfect, if it contains $\frac{|V(G)|}{2}$ edges. A block of $G$ is a maximal $2$-connected subgraph of $G$. An end-block is a block of $G$ containing at most one vertex that is a cut-vertex of $G$. A subgraph $H$ of $G$ is even, if every vertex of $H$ has even degree in $H$. A subgraph $H$ is odd, if every vertex of $G$ has odd degree in $H$. Sometime, we will refer to odd subgraphs as joins. Observe that a perfect matching is a join of a cubic graph. A subgraph $H$ is a parity subgraph if for every vertex $v$ of $G$ $d_{G}(v)$ and $d_{H}(v)$ have the same parity. Observe that $H$ is a parity subgraph of $G$ if $G-E(H)$ is an even subgraph of $G$.



If $G$ is a cubic graph, and $K$ is a triangle in $G$, then one can obtain a cubic pseudo-graph by contracting $K$. We will denote this pseudo-graph by $G/K$. If $G/K$ is a graph, we will say that $K$ is contractible. Observe that if $K$ is not contractible, two vertices of $K$ are joined with two parallel edges, and the third vertex is incident to a bridge (see the end-blocks of the graph from Figure \ref{SylvGraph}). If $K$ is a contractible triangle, and $e$ is an edge of $K$, then let $f$ be the edge of $G$ that is incident to a vertex of $K$ and is not adjacent to $e$. $e$ and $f$ will be called opposite edges.


Let $G$ and $H$ be two cubic graphs. An $H$-coloring of $G$ is a mapping $f:E(G)\rightarrow E(H)$, such that for each vertex $x$ of $G$ there is a vertex $y$ of $H$, such that $f(\partial_{G}(x)) = \partial_{H}(y)$. If $G$ admits an $H$-coloring, then we will write $H
\prec G$. It can be easily seen that if $H\prec G$ and $K\prec H$, then $K\prec G$. In other words, $\prec$ is a transitive relation defined on the set of cubic graphs.


If $H \prec G$ and $f$ is an $H$-coloring of $G$, then for any adjacent edges $e$, $e'$
of $G$, the edges $f(e)$, $f(e')$ of $H$ are adjacent. Moreover, if the graph $H$
contains no triangle, then the converse is also true, that is, if a mapping $f : E(G) \to E(H)$
has a property that for any two adjacent edges $e$ and $e'$ of $G$, the edges $f(e)$
and $f(e')$ of $H$ are adjacent, then $f$ is an $H$-coloring of $G$.

\begin{figure}[ht]
\centering
\begin{minipage}[b]{.5\textwidth}
  \begin{center}
		\begin{tikzpicture}[style=thick]
\draw (18:2cm) -- (90:2cm) -- (162:2cm) -- (234:2cm) --
(306:2cm) -- cycle;
\draw (18:1cm) -- (162:1cm) -- (306:1cm) -- (90:1cm) --
(234:1cm) -- cycle;
\foreach \x in {18,90,162,234,306}{
\draw (\x:1cm) -- (\x:2cm);
\draw[fill=black] (\x:2cm) circle (2pt);
\draw[fill=black] (\x:1cm) circle (2pt);
}
\end{tikzpicture}
	\end{center}
	\caption{The graph $P_{10}$.}\label{PetersenGraph}
\end{minipage}%
\begin{minipage}[b]{.5\textwidth}
  
  \begin{center}
	
	\tikzstyle{every node}=[circle, draw, fill=black!50,
                        inner sep=0pt, minimum width=4pt]
	
		\begin{tikzpicture}
																							
			\node[circle,fill=black,draw] at (-5.5,-1) (n1) {};
			
			
			
			

			\node[circle,fill=black,draw] at (-6, -0.5) (n2) {};
																								
			\node[circle,fill=black,draw] at (-5,-0.5) (n3) {};
																								
			\node[circle,fill=black,draw] at (-3.5,-1) (n4) {};
																								
			\node[circle,fill=black,draw] at (-4, -0.5) (n5) {};
																								
			\node[circle,fill=black,draw] at (-3,-0.5) (n6) {};
																								
			\node[circle,fill=black,draw] at (-1.5,-1) (n7) {};
																								
			\node[circle,fill=black,draw] at (-2, -0.5) (n8) {};
																								
			\node[circle,fill=black,draw] at (-1,-0.5) (n9) {};
																								
			\node[circle,fill=black,draw] at (-3.5,-2) (n10) {};

			\path[every node]
			(n1) edge  (n2)

			edge  (n3)
			edge (n10)
																								   	
			(n2) edge (n3)
			edge [bend left] (n3)
																								       
			(n3) 
			(n4) edge (n5)
			edge (n6)
			edge (n10)
																								    
			(n5) edge (n6)
			edge [bend left] (n6)
			(n6)
																								   
			(n7) edge (n8)
			edge (n9)
			edge (n10)
																								    
			(n8) edge (n9)
			edge [bend left] (n9)
																								  
			;
		\end{tikzpicture}
																
	\end{center}
								
	\caption{The graph $S_{10}$.}
	\label{SylvGraph}
\end{minipage}
\end{figure}


Let $P_{10}$ be the well-known Petersen graph (Figure \ref{PetersenGraph}) and let $S_{10}$ be the graph from Figure \ref{SylvGraph}. $S_{10}$ is called the Sylvester graph \cite{Schrijver}. We would like to point out that usually the name “Sylvester graph” is used for a particular strongly regular
graph on $36$ vertices, and this graph should not be confused with $S_{10}$, which has $10$
vertices.

The Petersen coloring conjecture of Jaeger states:
\begin{conjecture}\label{conj:P10conjecture} (Jaeger, 1988 \cite{Jaeger}) For any bridgeless cubic
graph $G$ $P_{10} \prec G$.
\end{conjecture}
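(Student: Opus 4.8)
The plan is to work with the reformulation made available by the excerpt: since $P_{10}$ is triangle-free, a map $f : E(G) \to E(P_{10})$ is a $P_{10}$-coloring precisely when it sends adjacent edges to adjacent edges. I would therefore prove the conjecture by constructing, for every bridgeless cubic $G$, such an adjacency-preserving map, and I would organize the construction around a minimal counterexample $G$ (fewest vertices) together with a battery of reductions that force $G$ to be highly structured.

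First I would dispose of the $3$-edge-colorable (Class 1) case outright: if $G$ has a proper $3$-edge-coloring with classes $M_1, M_2, M_3$, fix any vertex $y_0 \in V(P_{10})$ with incident edges $e_1, e_2, e_3$ and send every edge of $M_i$ to $e_i$. Adjacent edges of $G$ lie in distinct classes and hence map to distinct edges of $\partial_{P_{10}}(y_0)$, so $f$ is adjacency-preserving and $P_{10} \prec G$. Thus a minimal counterexample is not $3$-edge-colorable. Next I would eliminate small edge-cuts using transitivity of $\prec$ and a gluing argument: across a $2$-edge-cut or a nontrivial $3$-edge-cut I would split $G$ into two smaller bridgeless cubic graphs, color each by minimality, and recombine. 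The recombination is where the strong symmetry of $P_{10}$ (its automorphism group is arc-transitive, indeed transitive on short paths) is exploited: it lets me apply an automorphism to one side so that the colors assigned to the two (resp. three) crossing edges agree consistently across the cut. After these reductions the minimal counterexample must be a cyclically $4$-edge-connected snark of girth at least $5$.

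The core of the proof is then to handle such snarks, and for this I would attempt a reducible-configuration/discharging scheme in the style of the Four Color Theorem: assemble a list of local configurations, prove each is $P_{10}$-reducible (if $G$ contains it, a smaller graph $G'$ with $P_{10} \prec G'$ yields $P_{10} \prec G$ by reattaching a locally prescribed coloring), and prove via discharging that every cyclically $4$-edge-connected cubic graph contains one of them. Equivalently, one may try to realize the coloring as a structured family of perfect matchings of $G$ whose trace at each vertex reproduces a vertex of $P_{10}$, i.e. to encode $P_{10} \prec G$ as a matching/flow structure and build it by induction.

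The hard part — and I expect to be genuinely stuck here — is the snark core. No known finite set of configurations is simultaneously unavoidable in cyclically $4$-edge-connected cubic graphs and $P_{10}$-reducible; snarks of large girth resist every local reduction, and there is no global structural decomposition of all snarks on which to induct. The reductions above are essentially routine, but constructing the adjacency-preserving map on an arbitrary nontrivial snark is precisely the content of the conjecture, so this final step cannot be completed without a genuinely new structural or flow-theoretic theorem about snarks.
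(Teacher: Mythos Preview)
The statement you are trying to prove is not a theorem of the paper; it is Jaeger's Petersen coloring conjecture, which the paper records as Conjecture~\ref{conj:P10conjecture} and treats as open throughout. The paper never attempts a proof of it. Its contributions lie elsewhere: it introduces the $S_{12}$- and $P_{12}$-conjectures, proves uniqueness results for $P_{12}$ and $S_{12}$ as coloring targets, and shows implications such as ``$P_{12}$-conjecture $+$ $(5,2)$-cycle-cover conjecture $\Rightarrow$ $P_{10}$-conjecture''. So there is no proof in the paper against which to compare your proposal.

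As for the proposal itself, the preliminary reductions you outline (3-edge-colorable case via a single vertex of $P_{10}$; elimination of $2$-cuts and nontrivial $3$-cuts using arc-transitivity of $P_{10}$ to glue partial colorings) are standard and essentially correct, and they do reduce the problem to cyclically $4$-edge-connected snarks of girth at least~$5$. But you yourself identify the genuine gap: the ``core'' step of producing a $P_{10}$-coloring of an arbitrary such snark. No unavoidable-and-reducible configuration list is known, no discharging scheme is known to close, and encoding the problem via perfect matchings or flows is exactly equivalent to the conjecture (this is Jaeger's normal-$5$-edge-coloring reformulation mentioned in the paper). Your final paragraph is an accurate diagnosis: the plan is not a proof but a restatement of why the conjecture is hard. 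Since the paper does not claim otherwise, there is nothing further to correct; just be aware that what you have written is a strategy sketch for an open problem, not a proof.
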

Sometimes, we will call this conjecture as $P_{10}$-conjecture. The conjecture is difficult to prove, since it can be seen that it implies the
following classical conjectures:

\begin{conjecture}
\label{conj:Berge}(Berge, unpublished) For any bridgeless cubic graph $G$ $k(G)\leq 5$.
\end{conjecture}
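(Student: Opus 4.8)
The plan is to attack the bound $k(G)\le 5$ directly, by induction on $|V(G)|$ organized around the cyclic edge-connectivity of $G$. Here $k(G)$ denotes the least number of perfect matchings whose union is all of $E(G)$. It is worth noting at the outset that $|E(G)|=\tfrac{3}{2}|V(G)|$ while each perfect matching has exactly $\tfrac12|V(G)|$ edges, so five matchings cover each edge on average $5/3$ times: the obstruction is never a counting deficit but always one of placement.

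First I would dispose of the reducible cases so as to isolate a well-connected core. Since $G$ is bridgeless it is $2$-edge-connected, and a nontrivial $2$-edge-cut can be removed by the standard splice (add, on each side, the chord joining the two cut-endpoints to restore a smaller bridgeless cubic graph). The substantive reduction is for a nontrivial $3$-edge-cut $\{e_1,e_2,e_3\}$ separating $G$ into $G_1,G_2$: I form two smaller bridgeless cubic graphs by adding to each side a new vertex joined to the three cut-edges, apply the induction hypothesis to cover each side with five perfect matchings, and then splice the covers. This splicing is available because in a cubic graph every perfect matching meets a $3$-edge-cut in exactly one or exactly three of its edges (the crossing number has the same parity as the cut size, which is odd here), so the intersection patterns of the five matchings with $\{e_1,e_2,e_3\}$ on the two sides can be aligned. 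That bookkeeping is the routine part, and it reduces everything to the case where $G$ is cyclically $4$-edge-connected.

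For this core case I would build the five matchings greedily, anchored on Edmonds' matching-polytope theorem. For a bridgeless cubic graph the uniform vector $\tfrac13\mathbf{1}$ satisfies $\tfrac13\cdot 3=1$ at each vertex and $\tfrac13|\partial(S)|\ge 1$ across every odd cut (since such a cut has size at least $3$), hence lies in the perfect-matching polytope. This exhibits $E(G)$ as a nonnegative rational combination of perfect matchings, and in particular certifies the Petersen-type fact that every edge lies in some perfect matching. From such a fractional cover I would select matchings $M_1,\dots,M_4$ so as to maximize the number of edges covered, with the twin goals of leaving at most $\tfrac12|V(G)|$ edges uncovered and of arranging that those residual edges form an independent set that extends to a single perfect matching $M_5$.

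The hard part — and precisely the reason the statement is a conjecture rather than a theorem — is this last step: controlling which edges survive after four well-chosen matchings and certifying that they constitute a matchable independent set. The extremal obstruction is the Petersen graph $P_{10}$, for which $k(P_{10})=5$ and no four perfect matchings suffice, so any successful argument must thread exactly through, and do no better than, the Petersen configuration. I expect that forcing the fifth matching in the cyclically $4$-edge-connected case would require a discharging or structural analysis ruling out a local obstruction to covering the final third of the edges, and no such structural theorem is presently known. This is exactly why the bound is stated here as a conjecture and is instead approached indirectly, through the $P_{10}$-, $S_{12}$-, and $P_{12}$-colorings developed in the remainder of the paper.
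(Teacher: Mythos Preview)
The paper does not prove this statement. Conjecture~\ref{conj:Berge} is Berge's conjecture, recorded in the introduction as a known \emph{open} problem that would follow from Conjecture~\ref{conj:P10conjecture}; no argument for it is given or claimed anywhere in the text. There is thus no ``paper's own proof'' to compare against, and your write-up itself concedes this in its final paragraph (``precisely the reason the statement is a conjecture rather than a theorem''). What you have produced is a discussion of standard reductions and the polytope heuristic, not a proof.

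Even as a reduction sketch, one step is looser than you present it. Splicing two five-matching covers across a nontrivial $3$-edge-cut is not routine bookkeeping: each of the five perfect matchings on either side meets $\{e_1,e_2,e_3\}$ in a subset of size $1$ or $3$, and to glue you must simultaneously permute the two families so that the trace on the cut agrees matching-by-matching. The multiset of trace patterns on one side need not coincide with that on the other, so no such alignment need exist; the $3$-cut reduction therefore already requires either an extra argument or a strictly stronger induction hypothesis than ``$k\le 5$''. This is a genuine gap, separate from the acknowledged gap in the cyclically $4$-edge-connected core.
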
 Here $k(G)$ is the smallest number of perfect matchings that are needed to cover the edge-set of $G$

\begin{conjecture} (Berge-Fulkerson, 1972 \cite{Fulkerson,Seymour}) Any bridgeless
cubic graph $G$ contains six (not necessarily distinct) perfect matchings
$F_1, \ldots , F_6$ such that any edge of $G$ belongs to exactly two of them.
\end{conjecture}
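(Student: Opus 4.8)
The plan is to establish exactly the implication asserted in the sentence preceding the statement, namely that the $P_{10}$-conjecture yields the Berge--Fulkerson property for every bridgeless cubic graph. So I would take an arbitrary bridgeless cubic graph $G$, assume $P_{10}\prec G$, fix a $P_{10}$-coloring $f\colon E(G)\to E(P_{10})$, and transport a Berge--Fulkerson cover of the Petersen graph back along $f$ to produce the six required perfect matchings of $G$.

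First I would record the structure of $P_{10}$ that drives everything. The Petersen graph has exactly six perfect matchings $M_1,\dots,M_6$; since it has $15$ edges and each $M_i$ has $5$ edges, the total number of (edge, matching) incidences is $30$, and by the edge-transitivity of $P_{10}$ each edge lies in exactly two of the $M_i$. Hence $M_1,\dots,M_6$ is itself a Berge--Fulkerson cover of $P_{10}$, and the whole task reduces to showing that a Petersen coloring pulls such a cover back to $G$.

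Next, for each $i\in\{1,\dots,6\}$ I would set $F_i=f^{-1}(M_i)=\{e\in E(G):f(e)\in M_i\}$ and verify the two defining properties of a Berge--Fulkerson cover. For the matching property, fix a vertex $x$ of $G$ and choose $y\in V(P_{10})$ with $f(\partial_G(x))=\partial_{P_{10}}(y)$; because $f$ is a proper edge-coloring, its restriction to the three edges of $\partial_G(x)$ is a bijection onto $\partial_{P_{10}}(y)$, and as $M_i$ meets $\partial_{P_{10}}(y)$ in exactly one edge, exactly one edge of $\partial_G(x)$ lies in $F_i$; thus each $F_i$ is a perfect matching of $G$. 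For the covering property, an edge $e$ of $G$ lies in $F_i$ if and only if $f(e)\in M_i$, so $e$ lies in exactly as many of the $F_i$ as $f(e)$ lies among the $M_i$, which is exactly two. Therefore $F_1,\dots,F_6$ are six (not necessarily distinct) perfect matchings of $G$ with every edge belonging to exactly two of them, which is precisely the asserted statement.

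The conceptual heart of the argument is the local bijectivity of $f$ on edge-stars used in the matching step: this is where properness of the coloring is indispensable, and it is what guarantees that the preimage of a perfect matching of $P_{10}$ is again a perfect matching of $G$. I expect the only genuine difficulty to lie in the hypothesis itself, the $P_{10}$-conjecture, since everything downstream of a Petersen coloring is the routine transport described above; in particular, the identical pullback with $M_1,\dots,M_6$ replaced by any family of perfect matchings covering $E(P_{10})$ shows more generally that a Petersen coloring transports covers, which is the uniform mechanism behind implications of this kind.
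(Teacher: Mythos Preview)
Your proposal is correct and follows essentially the same route as the paper: the paper does not prove the conjecture itself (it is open) but records the implication from the $P_{10}$-conjecture via Lemma~\ref{lem:propHcolorings}(c) and (e), whose proofs are exactly your local-bijection argument that $f^{-1}$ of a perfect matching is a perfect matching and hence that a Berge--Fulkerson cover of $P_{10}$ pulls back to one of $G$. Your additional remark that the six perfect matchings of $P_{10}$ already form such a cover (via edge-transitivity and the count $6\cdot 5=2\cdot 15$) supplies the one ingredient the paper leaves implicit.
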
 This list of six perfect matchings usually is called a Berge-Fulkerson cover of $G$.


\begin{conjecture}\label{conj:5CDC}
((5, 2)-even-subgraph-cover conjecture, \cite{Celmins1984,Preiss1981}) Any bridgeless
graph $G$ (not necessarily cubic) contains five even subgraphs such that any
edge of $G$ belongs to exactly
two of them.
\end{conjecture}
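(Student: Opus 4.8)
The plan is to prove the conjecture by the classical reduction to bridgeless cubic graphs, followed by a case split according to the edge-chromatic class. First I would show that it suffices to establish the statement for bridgeless cubic graphs. Given an arbitrary bridgeless graph $G$, I would repeatedly apply two degree-reducing operations that preserve bridgelessness: suppressing a vertex of degree $2$ (replacing its two incident edges by a single edge), and splitting off a suitable pair of edges at a vertex of degree at least $4$, where by Fleischner's splitting lemma a pair can always be chosen so that the resulting graph remains bridgeless. Iterating these operations yields a bridgeless cubic graph $G'$. A $(5,2)$-even-subgraph cover of $G'$ lifts back to one of $G$: when an edge of $G'$ comes from suppressing a path of degree-$2$ vertices, every edge of that path is placed in the same two even subgraphs as the replacing edge, and the merge that inverts a split preserves evenness of degrees because a sum of even numbers is even. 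Hence the problem reduces to the cubic bridgeless case.

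Next I would dispose of the $3$-edge-colorable (Class $1$) bridgeless cubic graphs explicitly. A proper $3$-edge-coloring partitions $E(G')$ into three perfect matchings $M_1,M_2,M_3$, and each pairwise union $M_i\cup M_j$ is a $2$-factor, hence an even subgraph. An edge of $M_i$ lies in precisely the two $2$-factors $M_i\cup M_j$ with $j\neq i$, so the three $2$-factors already cover every edge exactly twice. Adjoining two copies of the empty even subgraph (legitimate here, since the edgeless spanning subgraph has all degrees $0$) turns this into the required list of five even subgraphs. Thus Class $1$ cubic graphs satisfy the conjecture with room to spare.

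The entire difficulty therefore concentrates in the bridgeless cubic graphs that are \emph{not} $3$-edge-colorable, that is, the snarks, and this is where I expect the main obstacle to lie. For a snark no global partition into $2$-factors as above is available, so I would instead try to build the five even subgraphs directly. Encoding the membership of each edge as the set of indices $i$ with $e\in C_i$, a $(5,2)$-cover is the same as an assignment to every edge of a $2$-element subset of a fixed $5$-element set such that the three subsets meeting at any vertex are exactly the three $2$-subsets of some common $3$-element set; this is precisely the local pattern underlying Petersen-type colorings, and it forces each color class to have even degree at every vertex. Producing such an assignment on an arbitrary snark is, to the best of present knowledge, as hard as finding a cycle double cover using only five even subgraphs, and no unconditional construction is available; this is exactly the step that keeps the conjecture open.

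For this reason I anticipate that a complete proof must either exhibit the five even subgraphs on every snark through a flow- or embedding-theoretic argument, or reduce the snark case to a structurally restricted family that can be treated separately. In either case, the first two steps above are routine, while the construction on general snarks carries the genuine mathematical content of the conjecture and is the part I would expect to resist a direct attack.
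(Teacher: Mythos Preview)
The paper does not prove this statement: it is stated as Conjecture~\ref{conj:5CDC}, an open problem attributed to Celmins and Preissmann, and the paper only \emph{uses} it (together with Conjecture~\ref{conj:P12conjecture}) as a hypothesis in a later theorem. There is therefore no proof in the paper against which to compare your attempt.

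Your proposal is not a proof either, and you are explicit about this. The reduction to bridgeless cubic graphs via suppression and Fleischner-type splitting is standard and correct, and your handling of the $3$-edge-colorable case by taking the three $2$-factors (plus two empty even subgraphs) is fine. You then correctly observe that the snark case is exactly what keeps the conjecture open, and that the required edge-labelling by $2$-subsets of a $5$-set is essentially a nowhere-zero $\mathbb{Z}_5$-flow / Petersen-coloring condition for which no unconditional argument is known. So your write-up is an accurate survey of the state of the problem, but it should be labelled as such rather than as a proof proposal: the genuine gap is the entire snark case, and nothing in your outline closes it.
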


Related with the Jaeger conjecture, the following conjecture has been introduced in \cite{PetersenRemark}:
\begin{conjecture}(V. V. Mkrtchyan, 2012 \cite{PetersenRemark})
\label{conj:S10Conjecture}
For any cubic graph $G$ $S_{10} \prec G$. 
\end{conjecture}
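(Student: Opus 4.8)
The plan is to argue by induction on $|V(G)|$, guided by two structural features of $S_{10}$. First, the only triangles of $S_{10}$ lie inside its three pendant gadgets---each gadget being a triangle with one doubled edge, joined to the center by a bridge---so any triangle of $G$ is essentially forced to fold onto such a gadget. Second, $S_{10}$ has no perfect matching: the center must be matched into one gadget, after which the two remaining gadgets, each a triangle with a doubled edge, admit no internal perfect matching. The first feature will drive the inductive step via triangle contraction; the second shows that cubic graphs without a perfect matching are genuinely in scope and must be dealt with on their own.

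I would dispose of the graphs without a perfect matching first. By Petersen's theorem a bridgeless cubic graph always has a perfect matching, so a cubic graph with none must contain a bridge. I would then cut $G$ along its bridges, color the bridgeless pieces, fold each cut vertex carrying several bridges onto the center of $S_{10}$, and send every bridge of $G$ to one of the three bridge-colors, coloring the piece hanging from it into the corresponding gadget. The center-plus-three-bridges shape of $S_{10}$ is exactly what is needed to realize the local bridge pattern of a cubic graph, and this step reduces the conjecture to cubic graphs that possess a perfect matching.

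For a cubic graph $G$ with a perfect matching I would look for a contractible triangle $K=xyz$, form the smaller cubic graph $G/K$ (still with a perfect matching), color it by the induction hypothesis, and try to blow the colored vertex $[K]$ back up into a gadget. Writing $f_x,f_y,f_z$ for the three legs leaving $K$, the edges of $K$ are pairwise adjacent, so by the adjacency property of colorings their images are three distinct pairwise-adjacent edges of $S_{10}$, hence either a gadget-triangle or a star. In the triangle case the two edges of $K$ meeting at the apex-image are sent to the two apex-edges $ab,ac$ (so the corresponding leg receives the bridge-color), while the third edge $yz$ is forced onto one parallel edge, say $p^1$; since the base vertices $b,c$ both see $p^1$ and $p^2$, the two remaining legs $f_y,f_z$ are then both forced onto $p^2$.

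This forced collision is the obstacle I expect to be decisive. The blown-up vertex $[K]$ would have to carry the degenerate ``palette'' $\{\text{bridge},p^2,p^2\}$, which equals $\partial_{S_{10}}(y)$ for no vertex $y$; so an $S_{10}$-coloring of $G/K$ cannot in general be extended across the triangle, and the induction does not close against $S_{10}$ itself (the star case is similarly over-constrained). The natural remedy, and the one I would ultimately pursue, is to relax the rigid doubled-edge gadget into a richer gadget that carries a genuine perfect matching and lets the two base legs be separated: prove that every cubic graph with a perfect matching admits a coloring by the larger graph $S_{12}$, verify $S_{10}\prec S_{12}$, and conclude $S_{10}\prec G$ from the transitivity of $\prec$. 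I therefore expect the real content to sit in the $S_{12}$-level extension lemma, together with the bookkeeping for non-contractible triangles and for the bridge gluings, rather than in any direct global construction against $S_{10}$.
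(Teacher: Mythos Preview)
The statement you are attempting to prove is a \emph{conjecture}, not a theorem: the paper does not contain a proof of it, and indeed it remains open. So there is no ``paper's own proof'' to compare against. Your proposal is not a proof either: after correctly identifying that a naive triangle-contraction induction against $S_{10}$ jams, you propose to pass through the auxiliary statement that every cubic graph with a perfect matching admits an $S_{12}$-coloring, and then use $S_{10}\prec S_{12}$ and transitivity. But that auxiliary statement is precisely Conjecture~\ref{conj:S12conjecture} of the paper, which is also open. You have rediscovered the paper's implication ``$S_{12}$-conjecture $\Rightarrow$ $S_{10}$-conjecture'' (its Theorem after Proposition~\ref{prop:P12prop}), not a proof of the $S_{10}$-conjecture itself.

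Your treatment of the ``no perfect matching'' case is also a genuine gap even at the reduction level. Cutting along bridges does not leave cubic graphs; the pieces have degree-$2$ vertices, and you have not said how to color such pieces or why the local patterns can always be glued back consistently at the cut vertices. The paper's reduction avoids this entirely: it starts from an arbitrary cubic $G$, replaces a \emph{minimal} set $U$ of vertices by triangles so that the resulting cubic graph has a perfect matching, applies the (assumed) $S_{12}$-conjecture, and then argues that every inserted triangle must receive exactly the central contractible triangle of $S_{12}$, so that contracting both yields an $S_{10}$-coloring of $G$. That minimality-plus-forcing argument is the actual content of the implication, and your sketch does not reach it.
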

In direct analogy with the Jaeger conjecture, we call Conjecture
\ref{conj:S10Conjecture} the Sylvester coloring conjecture or just $S_{10}$-conjecture.

A $k$-edge-coloring is an assignment of colors to edges of a graph from a set of $k$ colors such that adjacent edges receive different colors. The smallest $k$ for which a graph $G$ admits a $k$-edge-coloring is called a chromatic index of $G$ and is denoted by $\chi'(G)$. If $\alpha$ is a $k$-edge-coloring of a cubic graph $G$, then an edge $e=uv$ is called poor (rich) in $\alpha$, if the five edges of $G$ incident to $u$ or $v$ are colored with three (five) colors. $\alpha$ is called a normal $k$-edge-coloring of $G$ if any edge of $G$ is either poor or rich in $\alpha$. Observe that not all cubic graphs admit a normal $k$-edge-coloring for some $k$. An example of such a graph is the Sylvester graph (Figure \ref{SylvGraph}). On the positive side, all simple cubic graphs admit a normal $7$-edge-coloring \cite{NormalCubics7}. The smallest $k$ (if it exists) for which a cubic graph $G$ admits a normal $k$-edge-coloring is called a normal chromatic index of $G$ and is denoted by $\chi'_N(G)$.

Normal colorings were introduced by Jaeger in \cite{Jaeger1985}, where among other results, he showed that for a cubic graph $G$ $\chi'_N(G)\leq 5$ if and only if $G$ admits a $P_{10}$-coloring. This allowed him to obtain a reformulation of Conjecture \ref{conj:P10conjecture}, which states that for any bridgeless cubic graph $G$ $\chi'_N(G)\leq 5$.



In this paper, we introduce two new conjectures that are related to Conjectures \ref{conj:P10conjecture} and \ref{conj:S10Conjecture}. In section \ref{sec:aux}, we discuss some auxiliary results that will be useful later in the paper. In section \ref{sec:hereditary}, we briefly discuss so-called hereditary classes of cubic graphs that allowed us to come up with these two new conjectures. Then in section \ref{sec:main}, we present our main results. Finally, in section \ref{sec:future}, we discuss some open problems. Terms and concepts that we do not define in the paper can be found in \cite{West,Zhang1997}.

\section{Auxiliary results}
\label{sec:aux}

In this section, we present some auxiliary results that will be useful later. Our first result is a lemma about some properties of $H$-colorings of cubic graphs. Though all these properties are known before, for the sake of completeness we give complete proofs.

\begin{lemma}\label{lem:propHcolorings}
Suppose that $G$ and $H$ are cubic graphs with $H \prec G$, and let $f$ be an $H$-coloring of
$G$. Then:
\begin{enumerate}[(a)]
\item If $M$ is any matching of $H$, then $f^{-1}(M)$ is a matching of $G$;

\item $\chi'(G) \leq \chi'(H)$;

\item If $M$ is a perfect matching of $H$, then $f^{-1}(M)$ is a perfect matching of $G$;

\item $k(G)\leq k(H)$;

\item If $H$ admits a Berge-Fulkerson cover, then $G$ also admits a Berge-Fulkerson cover;

\item For every even subgraph $C$ of $H$, $f^{-1}(C)$ is an even subgraph of $G$;

\item For every bridge $e$ of $G$, the edge $f(e)$ is a bridge of $H$.

\item If $H$ is bridgeless, then $G$ is bridgeless as well;

\item $\chi'_N(G)\leq \chi'_N(H)$.

\end{enumerate}
\end{lemma}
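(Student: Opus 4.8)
My plan rests on one observation that drives all nine parts: for every vertex $x$ of $G$, if $y$ is the vertex of $H$ with $f(\partial_G(x))=\partial_H(y)$, then $f$ restricts to a bijection from $\partial_G(x)$ onto $\partial_H(y)$, simply because both sets have exactly three elements. Hence $f$ sends two edges of $G$ meeting at $x$ to two \emph{distinct} edges of $H$ meeting at $y$; in other words $f$ carries adjacent edges of $G$ to adjacent (in particular, distinct) edges of $H$. Part (a) is then immediate: if two edges $e,e'\in f^{-1}(M)$ were adjacent, then $f(e)$ and $f(e')$ would be distinct adjacent edges of $H$ both lying in the matching $M$, which is impossible.

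Parts (b)--(e) are bookkeeping on top of this. For (b) I would fix a proper $\chi'(H)$-edge-colouring $c$ of $H$ and note that $c\circ f$ is a proper edge-colouring of $G$, since adjacent edges of $G$ map to adjacent edges of $H$, which receive different $c$-colours. For (c), $f^{-1}(M)$ is a matching by (a), and it is perfect because for each $x\in V(G)$ the perfect matching $M$ meets $\partial_H(y)$ and the edge of $M\cap\partial_H(y)$ pulls back through the bijection to an edge of $f^{-1}(M)$ incident with $x$. For (d) and (e) the key remark is that an edge $e$ of $G$ lies in $f^{-1}(M_i)$ exactly when $f(e)\in M_i$; hence a family of perfect matchings of $H$ covers $f(e)$ with a given multiplicity iff its pullback covers $e$ with the same multiplicity. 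Applying this to a minimum perfect-matching cover of $E(H)$ gives $k(G)\le k(H)$, and applying it to a Berge-Fulkerson cover of $H$ gives one of $G$, each pullback being a perfect matching by (c).

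For (f) I would use the vertex-wise bijection again: for $x\in V(G)$ with $f(\partial_G(x))=\partial_H(y)$, the degree of $x$ in $f^{-1}(C)$ equals $|\{e\in\partial_G(x):f(e)\in C\}|=|\partial_H(y)\cap C|$, which is the degree of $y$ in $C$ and hence even; so $f^{-1}(C)$ is an even subgraph of $G$. Part (g) follows by contraposition: if $f(e)$ is not a bridge of $H$ it lies on a cycle $C$ of $H$, so $f^{-1}(C)$ is an even subgraph of $G$ containing $e$, whence $e$ lies on a cycle and is not a bridge. Part (h) is the special case of (g) in which $H$ has no bridges at all.

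Part (i) is the only one needing genuine care. Starting from a normal $\chi'_N(H)$-edge-colouring $\phi$ of $H$ (if $H$ admits none there is nothing to prove), I would claim that $\phi\circ f$ is a normal edge-colouring of $G$ using the same number of colours; properness is the argument from (b). Fix an edge $e=xx'$ of $G$ and let $z,z'$ be the vertices of $H$ with $f(\partial_G(x))=\partial_H(z)$ and $f(\partial_G(x'))=\partial_H(z')$, so that $f(e)\in\partial_H(z)\cap\partial_H(z')$. The colours occurring on the edges of $G$ incident with $x$ or $x'$ form the set $\phi(\partial_H(z))\cup\phi(\partial_H(z'))$. If $z\ne z'$, then $f(e)=zz'$ and this set coincides with the colour set around $f(e)$ in $H$, so $e$ is poor (respectively rich) in $\phi\circ f$ exactly when $f(e)$ is poor (respectively rich) in $\phi$; normality of $\phi$ then makes $e$ poor or rich. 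If $z=z'$, the set reduces to $\phi(\partial_H(z))$, which has size three, so $e$ is poor outright. I expect the degenerate situation $z=z'$ --- which can arise only when $H$ fails to be simple around that vertex --- to be the one point demanding attention; away from it everything reduces to the elementary vertex-wise bijection recorded at the outset, so I anticipate no real obstruction.
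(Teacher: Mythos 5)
Your proposal is correct and follows essentially the same route as the paper: the vertex-wise bijection $\partial_G(x)\to\partial_H(y)$ drives (a)--(f), pulling back matchings, perfect matchings and covers gives (b)--(e), and (i) is handled exactly as in the paper by composing with a normal colouring of $H$ and declaring the degenerate case $z=z'$ poor. The only cosmetic difference is in (g), where you conclude from the even subgraph $f^{-1}(C)$ that $e$ lies on a cycle, while the paper instead uses the parity of the intersection of an even subgraph with the one-edge cut $\partial_G(X)=\{e\}$ --- the same idea in different clothing.
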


\begin{proof}
(a) The proof of this statement follows from the definition of $H$-coloring: as adjacent edges of $G$ must be colored with adjacent edges of $H$, then clearly the pre-image of a matching in $H$ must be a matching in $G$.

(b) Assume that $\chi'(H)=s$ and let $M_1,..., M_s$ be the color classes of $H$ in an $s$-edge-coloring. Consider $f^{-1}(M_1)$,...,$f^{-1}(M_s)$. Observe that due to (a), they are forming $s$ matchings covering the edge-set of $G$. Thus, $\chi'(G) \leq s= \chi'(H)$.

(c) Let $M$ be a perfect matching of $H$. Then by (a), $f^{-1}(M)$ is a matching of $G$. Let us show that it covers all vertices of $G$. Let $v$ be a vertex of $G$. Then the three edges incident to $v$ are colored by a similar three edges of $H$. Since $M$ is a perfect matching of $H$, one of these three edges must belong to $M$, hence $f^{-1}(M)\cap \partial_G(v)\neq \emptyset$. Thus, $f^{-1}(M)$ is a perfect matching of $G$.

(d) The proof of this is similar to that of (b): let $k(H)=s$ and let $M_1,..., M_s$ be the $s$ perfect matchings of $H$ covering $E(H)$. Consider $f^{-1}(M_1)$,...,$f^{-1}(M_s)$. Observe that due to (c), they are forming $s$ perfect matchings covering the edge-set of $G$. Thus, $k(G) \leq s= k(H)$.

(e) Let $C=(F_1,...,F_6)$ be a Berge-Fulkerson cover of $H$. Consider the list $f^{-1}(C)=(f^{-1}(F_1),...,f^{-1}(F_6))$. Observe that due to (c) they are forming a list of six perfect matchings of $G$. Moreover, every edge of $G$ belongs to at least two of these perfect matchings. Hence $f^{-1}(C)$ is a Berge-Fulkerson cover of $G$.

(f) Let $C$ be an even subgraph of $H$. Let us show that any vertex $v$ of $G$ has even degree in $f^{-1}(C)$. Since $H$ is cubic, $C$ is a disjoint union of cycles. Assume that in $f$ the three edges incident to $v$ are colored with three edges incident to a vertex $w$ of $H$. Then if $w$ is isolated in $C$, then clearly $v$ is isolated in $f^{-1}(C)$. On the other hand, if $w$ has degree two in $C$, then $v$ is of degree two in $f^{-1}(C)$. Thus, $v$ always has even degree in $f^{-1}(C)$, or $f^{-1}(C)$ is an even subgraph of $G$.

(g) Let $e$ be a bridge of $G$ and let $(X, V(G)\backslash X)$ be a partition of $V(G)$, such that $\partial_G(X)=\{e\}$. Assume that the edge $f(e)$ is not a bridge in $H$. Then there is a cycle $C$ in $H$ that contains the edge $f(e)$. By (f) $f^{-1}(C)$ is an even subgraph of $G$ that has non-empty intersection with $\partial_G(X)$. Since the intersection of an even subgraph with $\partial_G(X)$ always contains an even number of edges, we have that $\partial_G(X)$ contains at least two edges which contradicts our assumption.

(h) This follows from (g): if $H$ has no a bridge, then any edge of $G$ cannot be a bridge, as otherwise its color in $f$ will be a bridge in $H$.

(i) Assume that $\chi'_N(H)=s$, and let $g$ be a normal $s$-edge-coloring of $H$. Consider a mapping $h$ defined on the edge-set of $G$ as follows: for any edge $e$ of $G$, let $h(e)=g(f(e))$. Let us show that $h$ is a normal $s$-edge-coloring of $G$. Let $e=vw$ be any edge of $G$. Assume that in $f$ the three edges incident to $v$ are colored by the three edges incident to a vertex $u_1$ of $H$, the three edges incident to $w$ are colored by the three edges incident to a vertex $u_2$ of $H$. 

If $u_1=u_2$, then the edge $e$ is poor in $h$. Thus, we can assume that $u_1\neq u_2$. Since $e\in \partial_G(v)\cap \partial_G(w)$, we have that $u_1u_2\in E(H)$ and $f(e)=u_1u_2$. Now, observe that since $g$ is a normal edge-coloring, we have that if $f(e)$ is a poor edge in $g$, then $e$ is a poor edge in $h$, and if $f(e)$ is a rich edge in $g$, then $e$ is a rich edge in $h$. Thus, $h$ is a normal $s$-edge-coloring of $G$. Hence $\chi'_N(G)\leq s=\chi'_N(H)$. The proof of the lemma is complete.
\end{proof}

\bigskip

We will need some results on claw-free bridgeless cubic graphs. Recall that a graph $G$ is claw-free, if it does not contain $4$ vertices, such that the subgraph of $G$ induced on these vertices is isomorphic to $K_{1,3}$. In \cite{ChudSeyClawFreeChar}, arbitrary claw-free graphs are characterized. In \cite{sang-il_oum:2011}, Oum has characterized simple, claw-free bridgeless cubic graphs. In order to formulate Oum's result, we need some definitions. In a claw-free simple cubic graph $G$ any vertex belongs to $1$, $2$, or $3$ triangles. If a vertex $v$ belongs to $3$ triangles of $G$, then the component of $G$ containing $v$ is isomorphic to $K_4$ (Figure \ref{fig:K4}). An induced subgraph of $G$ that is isomorphic to $K_4-e$ is called a diamond \cite{sang-il_oum:2011}. It can be easily checked that in a claw-free cubic graph no $2$ diamonds intersect.

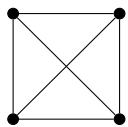
\begin{figure}[!htbp]
\begin{center}
\begin{tikzpicture}[scale=0.35]

 \tikzstyle{every node}=[circle, draw, fill=black!,
                        inner sep=0pt, minimum width=4pt]
  
  \node[circle,fill=black,draw] at (-2,2) (n2) {};

   \node[circle,fill=black,draw] at (-2,-2) (n3) {};

    \node[circle,fill=black,draw] at (-6,2) (n4) {};
    
    \node[circle,fill=black,draw] at (-6,-2) (n5) {};

 \draw (n2)--(n3);
 
 \draw (n2)--(n4);
 \draw (n3)--(n4);
 \draw (n3)--(n5);
 \draw (n2)--(n5);
 
 \draw (n4)--(n5);

\end{tikzpicture}
\end{center}
\caption{The graph $K_4$.} \label{fig:K4}
\end{figure}

A string of diamonds of $G$ is a maximal sequence $F_{1},...,F_{k}$ of diamonds, in which $F_{i}$ has a vertex adjacent to a vertex of $F_{i+1}$, $1\leq i \leq k-1$.  A string of diamonds has exactly $2$ vertices of degree $2$, which are called the head and the tail of the string. Replacing an edge $e = uv$ with a string of diamonds with the head $x$ and the tail $y$ is to remove $e$ and add edges $(u,x)$ and $(v,y)$.

If $G$ is a connected claw-free simple cubic graph such that each vertex lies in a diamond, then $G$ is called a ring of diamonds. It can be easily checked that each vertex of a ring of diamonds lies in exactly $2$ diamonds. As in \cite{sang-il_oum:2011}, we require that a ring of diamonds contains at least $2$ diamonds.

\begin{proposition}\label{prop:OumClawfreebridgelessCharac} (Oum, \cite{sang-il_oum:2011}) $G$ is a connected claw-free simple bridgeless cubic graph, if and only if
\begin{enumerate}
    \item [(1)] $G$ is isomorphic to $K_4$, or
    
    \item [(2)] $G$ is a ring of diamonds, or
    
    \item [(3)] there is a connected bridgeless cubic graph $H$, such that $G$ can be obtained from $H$ by replacing some edges of $H$ with strings of diamonds, and by replacing any vertex of $H$ with a triangle.
\end{enumerate}
\end{proposition}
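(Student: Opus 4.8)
The plan is to prove both directions, with essentially all of the work in the ``only if'' direction. For the ``if'' direction I would just verify directly that each of the three constructions yields a connected claw-free simple bridgeless cubic graph. This is routine: $K_4$ is $3$-edge-connected and any two of its vertices are adjacent, so it has no induced $K_{1,3}$; for a ring of diamonds, and for the graph obtained in~(3), cubicity, connectivity and simplicity are immediate from the construction, bridgelessness follows because each diamond (hence each string of diamonds) is internally $2$-edge-connected and a bridge of the constructed graph would have to correspond to a bridge of $H$, and claw-freeness follows because every newly created vertex has two of its three neighbours joined by an edge (they lie in a common triangle), so it cannot be the centre of an induced claw.

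For the ``only if'' direction, suppose $G$ is a connected claw-free simple bridgeless cubic graph. Using the facts recalled before the statement, every vertex of $G$ lies in $1$, $2$ or $3$ triangles; if some vertex lies in $3$ triangles then $G=K_4$ and we are in case~(1), so assume every vertex lies in $1$ or $2$ triangles. The key structural step is to show that $V(G)$ is partitioned into the vertex sets of the diamonds of $G$ and of the ``free'' triangles (triangles not contained in a diamond). First, if a vertex $v$ lies in two triangles, these triangles share an edge $vx$ (since $v$ has only three neighbours), and inspecting the neighbourhoods of $v$ and $x$ shows that $\{v,x\}$ together with the two remaining vertices induces $K_4$ (excluded) or a diamond; in particular every vertex lying in two triangles lies in a diamond, and the vertex set of a diamond $D$ is exactly $\{w\}\cup N(w)$ for a degree-$3$-in-$D$ vertex $w$. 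Combining this with the recalled fact that no two diamonds of a claw-free cubic graph intersect, and with the observation that a vertex of a free triangle lies in no other triangle (else it would lie in two triangles, hence in a diamond, forcing that triangle to lie in the diamond), yields the desired partition.

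Next I would contract each free triangle to a single vertex and each maximal string of diamonds to a single edge, and let $H$ be the resulting multigraph (possibly with parallel edges, which is why in~(3) one only requires $H$ to be a graph). I must check that $H$ is a loopless connected bridgeless cubic graph. Cubicity is clear (a free triangle emits exactly three edges), and connectivity is preserved under contraction. The absence of loops is exactly where bridgelessness of $G$ enters: the two ends of a maximal string of diamonds attach to free triangles rather than to further diamonds by maximality, and if a string had both ends attached to the same free triangle $T$ then the unique remaining edge leaving $T$, together with that string, would form a bridge of $G$, a contradiction. Bridgelessness of $H$ follows because a bridge of $H$ would pull back, through the reverse of the string-expansion and of the vertex-to-triangle replacement, to a bridge of $G$. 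Reversing the two contractions now exhibits $G$ from $H$ precisely as in case~(3) — unless the ``string'' containing some diamond closes up into a cycle that never reaches a free triangle, in which case, $G$ being connected, $G$ is a ring of diamonds and we are in case~(2).

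I expect the main obstacle to be the structural partition step: proving cleanly that the diamonds and free triangles of $G$ are pairwise vertex-disjoint and together cover $V(G)$, which comes down to pinning down the exact local configuration around a vertex lying in two triangles and interfacing it with the ``no two diamonds meet'' fact. Once that partition is established, the contraction argument together with the loop/bridge bookkeeping (and the symmetric verifications in the ``if'' direction) are routine.
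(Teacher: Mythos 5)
The paper itself gives no proof of this proposition: it is quoted from Oum \cite{sang-il_oum:2011} and used purely as a black box in the proof of Theorem~\ref{thm:Giuseppe5CDCClawFreeEq}, so there is no in-paper argument to compare yours against. Judged on its own, your reconstruction is correct in outline and is essentially the standard (Oum-style) decomposition argument: the ``if'' direction is indeed routine (in a cubic graph a vertex is the centre of an induced claw iff it lies in no triangle, which disposes of claw-freeness, and bridges of the constructed graph project to bridges of $H$); for ``only if'', your key partition of $V(G)$ into diamonds and free triangles is argued correctly — two triangles through a vertex share an edge and span an induced $K_4$ (giving case (1)) or a diamond, diamonds are pairwise disjoint by the fact recalled in the paper, and a free triangle meeting a diamond would itself lie in a diamond. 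The contraction to $H$, the exclusion of loops via the bridge created by a string returning to the same triangle, and the ring-of-diamonds escape case are all handled. Two small points worth making explicit if you write this up in full: a loop of $H$ could a priori also arise from a single edge of $G$ joining two vertices of one free triangle, which is excluded by simplicity (it would be a parallel edge); and in the reversal you should note that the vertices of $H$ are exactly the contracted free triangles, so \emph{every} vertex of $H$ gets replaced by a triangle, as condition (3) demands (and $H$ may legitimately have parallel edges, consistent with the paper's convention that graphs are loopless but not necessarily simple). Neither point is a genuine gap.
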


The next auxiliary result allows us to relate coverings with even subgraphs to coverings with specific parity subgraphs.

\begin{theorem}
\label{thm:CQthm} (Theorem 3.3 in \cite{CQJoins}) For a graph $G$, the following two conditions are equivalent:
\begin{enumerate}[(1)]
    \item $G$ contains five even subgraphs such that any edge of $G$ belongs to exactly two of them;
    
    \item $G$ contains four parity subgraphs such that each edge belongs to  either one or two of the parity subgraphs.
\end{enumerate}
\end{theorem}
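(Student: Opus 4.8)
The plan is to translate everything into $\mathbb{F}_2$-linear algebra on the edge space. Identify each subgraph with its edge set, viewed as a vector in $\mathbb{F}_2^{E(G)}$; then the even subgraphs form the cycle space $\mathcal{C}$, and, since $E(G)$ itself is a parity subgraph, the parity subgraphs form the single coset $E(G)+\mathcal{C}$. Two facts will be used repeatedly: a symmetric difference of even subgraphs is even, while a symmetric difference of an odd (resp.\ even) number of parity subgraphs is again a parity subgraph (resp.\ an even subgraph); and $H\mapsto E(G)\setminus E(H)$ is a bijection between parity subgraphs and even subgraphs. Via that bijection (take $D_j:=E(G)\setminus E(P_j)$), condition (2) is equivalent to: $G$ has four \emph{even} subgraphs $D_1,D_2,D_3,D_4$ with every edge in two or three of them. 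So the theorem becomes ``five even subgraphs covering each edge exactly twice $\iff$ four even subgraphs covering each edge twice or three times''.

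Both implications will come from one fixed combinatorial gadget plus routine counting. Let $A_1=\{2,3,4\}$, $A_2=\{2,3,5\}$, $A_3=\{2,4,5\}$, $A_4=\{3,4,5\}$ be the four $3$-element subsets of $\{2,3,4,5\}$. Given even subgraphs $C_1,\dots,C_5$, I would set $D_j:=\triangle_{i\in A_j}C_i$; conversely, given even subgraphs $D_1,\dots,D_4$, I would set $C_1:=D_1\triangle D_2\triangle D_3\triangle D_4$, $C_2:=D_1\triangle D_2\triangle D_3$, $C_3:=D_1\triangle D_2\triangle D_4$, $C_4:=D_1\triangle D_3\triangle D_4$, $C_5:=D_2\triangle D_3\triangle D_4$. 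A short $\mathbb{F}_2$-computation shows these two operations are mutually inverse. Each $D_j$ and each $C_i$ so produced is automatically even, being a symmetric difference of even subgraphs; in fact $d_{D_j}(v)\equiv\sum_{i\in A_j}d_{C_i}(v)\pmod 2$ for every vertex $v$, so the entire family of parity constraints ``all $C_i$ even'' is \emph{equivalent} to ``all $D_j$ even''. Hence in either direction the only thing left to check is the per-edge count.

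For $(1)\Rightarrow(2)$ I would take even $C_1,\dots,C_5$ with each edge $e$ in exactly two of them, let $S_e$ be the $2$-set recording which $C_i$ contain $e$, form $D_1,\dots,D_4$ as above, and note $e\in D_j$ iff $|S_e\cap A_j|$ is odd, i.e.\ iff exactly one element of $S_e$ lies in $A_j$. Splitting into the cases $1\in S_e$ and $S_e\subseteq\{2,3,4,5\}$ (using that $1$ lies in no $A_j$, and that the $A_j$ are the four $3$-subsets of $\{2,3,4,5\}$), one finds $e$ is in exactly three of the $D_j$ in the first case and exactly two in the second; then $P_j:=E(G)\setminus E(D_j)$ are four parity subgraphs with each edge in $4-3=1$ or $4-2=2$ of them, giving (2). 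For $(2)\Rightarrow(1)$ I would start from $P_1,\dots,P_4$ as in (2), put $D_j:=E(G)\setminus E(P_j)$ (even, each edge in two or three), define $C_1,\dots,C_5$ by the inverse formulas, and write $D_e$ for the set of $j$ with $e\in D_j$, so $|D_e|\in\{2,3\}$; then $e\in C_1$ iff $|D_e|=3$, and for $2\le i\le 5$ one has $e\in C_i$ iff $|D_e\cap B_i|$ is odd, where $B_2,\dots,B_5$ are the four $3$-subsets of $\{1,2,3,4\}$. A case check on $|D_e|$ then shows $e$ lies in exactly two of $C_1,\dots,C_5$ in both cases, which is (1).

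The main obstacle, such as it is, is finding the gadget — after that everything is bookkeeping. Abstractly what is needed is a $5$-element set $W\subseteq\mathbb{F}_2^4$ all of whose pairwise Hamming distances lie in $\{2,3\}$, for instance $W=\{0000,1110,1101,1011,0111\}$: then its $\binom{5}{2}=10$ pairwise sums are exactly the $10$ vectors of weight $2$ or $3$ in $\mathbb{F}_2^4$, which is precisely the bijection ``which two of the five $C_i$'' $\leftrightarrow$ ``which two-or-three of the four $D_j$'' that drives both implications; the sets $A_j$ are recovered as $A_j=\{\,x\in\{1,\dots,5\}:(\text{$x$-th vector of }W)_j=1\,\}$. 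The construction being essentially self-inverse, the two directions are two readings of the same computation, so the verification really only has to be done once.
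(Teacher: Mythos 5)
Your argument is correct, but note that there is no in-paper proof to compare it with: the paper states this result only as a quotation of Theorem~3.3 of Hou--Lai--Zhang \cite{CQJoins} and gives no proof, so your write-up has to stand on its own, and it does. The reduction of (2) to ``four even subgraphs $D_1,\dots,D_4$ with every edge in exactly two or three of them'' via $D_j=E(G)\setminus E(P_j)$ is exactly right (using that $H$ is a parity subgraph iff $G-E(H)$ is even), and I checked both counting arguments: with $A_1=\{2,3,4\}$, $A_2=\{2,3,5\}$, $A_3=\{2,4,5\}$, $A_4=\{3,4,5\}$, an edge with $1\in S_e$ lands in three of the $D_j$ and an edge with $S_e\subseteq\{2,3,4,5\}$ in two; conversely, with $|D_e|=2$ the edge lies in none of $C_1$ and exactly two of $C_2,\dots,C_5$, and with $|D_e|=3$ it lies in $C_1$ and in exactly one of $C_2,\dots,C_5$ (namely the $C_i$ with $B_i=D_e$, since two distinct $3$-subsets of a $4$-set meet in two elements). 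Evenness of all the subgraphs produced is automatic from closure of the cycle space under symmetric difference, as you say. One small imprecision: the aside that the two operations are ``mutually inverse'' is only true as stated in the direction $D\mapsto C\mapsto D$; composing the other way returns $C_2,\dots,C_5$ but sends $C_1$ to $C_2\triangle C_3\triangle C_4\triangle C_5$, which equals $C_1$ only under the relation $C_1\triangle\cdots\triangle C_5=\emptyset$ (which does hold under hypothesis (1), each edge being covered exactly twice). Since neither implication actually uses invertibility -- each is a direct per-edge count -- this does not affect the proof, but you should either drop the remark or add the caveat.
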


Conjecture \ref{conj:Berge} implies that for any claw-free bridgeless cubic graph $G$ $k(G)\leq 5$. We are unable to find an example attaining this bound. We suspect that:

\begin{conjecture}
\label{conj:ClawFreeBerge} For any claw-free bridgeless cubic graph $G$ $k(G)\leq 4$.
\end{conjecture}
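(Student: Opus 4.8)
\bigskip

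\noindent\textbf{Proof proposal.} The plan is to reduce the conjecture, via Oum's structure theorem, to a single essential case, and then to recognise that case as equivalent to Conjecture \ref{conj:5CDC}. Since $k(G)$ is the maximum of $k$ over the connected components of $G$, and since the claw-free bridgeless cubic graphs that are not simple form a very restricted family that can be disposed of directly, we may assume $G$ is connected and simple. Proposition \ref{prop:OumClawfreebridgelessCharac} then leaves three cases: $G=K_4$; $G$ is a ring of diamonds; or $G$ is obtained from a connected bridgeless cubic graph $H$ by replacing every vertex with a triangle and some edges with strings of diamonds.

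The first two cases give $k(G)=3$. Indeed $\chi'(K_4)=3$, so $E(K_4)$ is a disjoint union of three perfect matchings. If $G$ is a ring of diamonds $F_1,\dots,F_k$, with $F_i$ spanned by $a_i,b_i,c_i,d_i$ and edges $a_ib_i, a_ic_i, b_ic_i, b_id_i, c_id_i$ and the connecting edges $d_ia_{i+1}$ (indices mod $k$), then
\[
\{\,d_ia_{i+1}:i\,\}\cup\{\,b_ic_i:i\,\},\qquad \{\,a_ib_i,c_id_i:i\,\},\qquad \{\,a_ic_i,b_id_i:i\,\}
\]
are three perfect matchings whose union is $E(G)$; the underlying fact is that a perfect matching of a ring of diamonds uses either all of the connecting edges or none of them.

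All the content lies in the third case, which I would handle through two gadget analyses and a link to Conjecture \ref{conj:5CDC}. For a \emph{triangle expansion} --- replacing a vertex $v$ with incident edges $e_1,e_2,e_3$ by a triangle $v^1v^2v^3$, with $e_i$ attached to $v^i$ --- any perfect matching $N$ of the host graph extends to the expansion by taking at $v$ the triangle edge opposite to the stub used by $N$; moreover, if four perfect matchings of the host cover $E(\text{host})$ with every edge in at most two of them, then a short count at $v$ (each of $e_1,e_2,e_3$ lying in one or two of the four forces $v$ to have degree one in at least three of the four, and those three ``chosen'' stubs are then forced to be all of $e_1,e_2,e_3$) shows the four extensions still cover every edge, still with multiplicity at most two. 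For a \emph{string of diamonds} inserted on an edge $uv$, every perfect matching of the host either runs through the string (using both stubs together with a fixed internal matching) or stops at both ends (each diamond independently taking one of its two perfect matchings); so, starting from four perfect matchings covering $E(\text{host})$ with multiplicity at most two, for each string at least one extension runs through it --- covering the connecting and central edges --- while at least two stop there, and choosing their internal matchings oppositely covers the remaining diamond edges, again with multiplicity at most two. Composing these lifts reduces $k(G)\le 4$ to the existence on $H$ of four parity subgraphs covering $E(H)$ with each edge in one or two of them, which by Theorem \ref{thm:CQthm} is exactly a $(5,2)$-even-subgraph-cover of $H$. Hence Conjecture \ref{conj:5CDC} implies Conjecture \ref{conj:ClawFreeBerge}.

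The expected obstacle is that this implication is in fact an equivalence, so the conjecture is not provable outright. For the converse, given a bridgeless cubic graph $H$, form $G$ by replacing every vertex of $H$ with a triangle and every edge with a single diamond; then $G$ is a claw-free bridgeless cubic graph, and reading a cover of $G$ by four perfect matchings backwards through the same gadgets produces four parity subgraphs of $H$ covering $E(H)$ with each edge in one or two of them --- here the ``runs through versus stops'' dichotomy on the inserted diamonds is precisely what caps the multiplicity at two --- i.e.\ a $(5,2)$-even-subgraph-cover of $H$ via Theorem \ref{thm:CQthm}; and it is classical that Conjecture \ref{conj:5CDC} need only be verified for bridgeless cubic graphs. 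Thus the realistic target is the equivalence of Conjecture \ref{conj:ClawFreeBerge} with Conjecture \ref{conj:5CDC} --- a new reformulation of the $(5,2)$-even-subgraph-cover conjecture --- and the cases $G=K_4$ and $G$ a ring of diamonds are the only instances that can be settled unconditionally.
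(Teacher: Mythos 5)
You were asked to prove what the paper itself states only as a conjecture (Conjecture~\ref{conj:ClawFreeBerge}): the paper offers no proof of it, and your conclusion that it is not provable outright but is instead equivalent to Conjecture~\ref{conj:5CDC} is exactly the paper's position, recorded as Theorem~\ref{thm:Giuseppe5CDCClawFreeEq}. Your conditional argument follows essentially the same route as the paper's proof of that theorem: Oum's characterization (Proposition~\ref{prop:OumClawfreebridgelessCharac}) to dispose of $K_4$, rings of diamonds, non-simple graphs and strings of diamonds, and Theorem~\ref{thm:CQthm} to translate between $(5,2)$-even-subgraph covers and four parity subgraphs covering each edge once or twice, with the same local count showing at most one join can use all three edges at a vertex. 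The only real differences are cosmetic: the paper strips digons and diamond strings by induction rather than by lifting covers through the gadgets, and in the converse direction it only expands vertices into triangles and then contracts triangles one at a time in the $5$-even-subgraph cover, whereas your edge-to-diamond insertion is unneeded --- four perfect matchings covering a cubic graph automatically cover each edge at most twice, since at every vertex the four matchings contribute four edge-slots to three edges each covered at least once.
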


Our final auxiliary result is a theorem proved by Giuseppe Mazzuoccolo which offers a new reformulation of Conjecture \ref{conj:5CDC}.

\begin{theorem}
\label{thm:Giuseppe5CDCClawFreeEq} Conjectures \ref{conj:5CDC} and \ref{conj:ClawFreeBerge} are equivalent.
\end{theorem}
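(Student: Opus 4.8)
The plan is to prove the two implications separately. The direction "Conjecture \ref{conj:5CDC} $\Rightarrow$ Conjecture \ref{conj:ClawFreeBerge}" is the easy one: given a claw-free bridgeless cubic graph $G$, Conjecture \ref{conj:5CDC} provides five even subgraphs covering each edge exactly twice, so by Theorem \ref{thm:CQthm} (applied to $G$ itself) we obtain four parity subgraphs of $G$ covering each edge once or twice; since $G$ is cubic, a parity subgraph is an odd subgraph, i.e. a perfect matching. Thus $G$ is covered by four perfect matchings and $k(G)\le 4$.

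For the reverse direction "Conjecture \ref{conj:ClawFreeBerge} $\Rightarrow$ Conjecture \ref{conj:5CDC}" the first reduction is to note that Conjecture \ref{conj:5CDC} for arbitrary bridgeless graphs is equivalent to the cubic case: one repeatedly suppresses degree-$2$ vertices and splits vertices of degree $\ge 4$ (the standard reduction for cycle-cover-type statements), so it suffices to produce five even subgraphs, each edge covered exactly twice, for every bridgeless cubic graph $G$. Equivalently, by Theorem \ref{thm:CQthm} again, it suffices to cover $E(G)$ with four perfect matchings, i.e. to show $k(G)\le 4$ — but of course this is \emph{not} true for all bridgeless cubic graphs (the Petersen graph needs five), so the real content is how claw-free graphs enter. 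The key idea is to pass from $G$ to a claw-free graph via Proposition \ref{prop:OumClawfreebridgelessCharac}: take an arbitrary bridgeless cubic $G$, replace every vertex by a triangle to obtain a graph $\widetilde{G}$ which is claw-free, bridgeless, cubic (this is construction (3) of Oum's theorem with no strings of diamonds inserted), apply Conjecture \ref{conj:ClawFreeBerge} to get four perfect matchings of $\widetilde{G}$, and then push this cover back down to $G$.

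The heart of the argument — and the step I expect to be the main obstacle — is this last "push-down": showing that a cover of $\widetilde{G}$ by four perfect matchings yields a $(5,2)$-even-subgraph cover of $G$. Contracting the triangles sends a perfect matching of $\widetilde{G}$ to a set of edges of $G$, but this image need not be a perfect matching or even a matching of $G$; what one should track instead is the induced even subgraph structure. Concretely, four perfect matchings $M_1,\dots,M_4$ of $\widetilde G$ covering each edge once or twice determine the four complementary even subgraphs $\widetilde G - M_i$, and each edge of $\widetilde G$ lies in two or three of these; one then argues that contracting each triangle $T_v$ back to the vertex $v$ sends each $\widetilde G - M_i$ to an even subgraph of $G$ (a triangle meets any even subgraph in $0$ or $2$ edges, so parities at $v$ are preserved), producing four even subgraphs of $G$ — and a careful bookkeeping of which of the three triangle-edges at each $v$ belong to which $M_i$ is needed to convert the resulting cover of $E(G)$ into one of the exact ratio required, possibly adding a fifth even subgraph. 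Getting the counting exactly right here, so that each edge of $G$ ends up in precisely two of five even subgraphs rather than merely being covered an even number of times, is the delicate part; one likely needs to invoke Theorem \ref{thm:CQthm} in the form "four parity subgraphs with multiplicity one or two" on the $G$ side and verify that the contracted matchings restricted to $G$-edges still satisfy that multiplicity constraint.

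Once the push-down is established, the equivalence is complete: Conjecture \ref{conj:ClawFreeBerge} applied to $\widetilde G$ gives, after contraction, the four parity (= perfect matching or near-matching) subgraphs of $G$ needed to trigger Theorem \ref{thm:CQthm}, hence the $(5,2)$-even-subgraph cover of $G$, hence Conjecture \ref{conj:5CDC}. I would present the two implications as separate paragraphs, keeping the triangle-expansion/contraction correspondence and its interaction with even subgraphs as a short self-contained lemma preceding the main argument.
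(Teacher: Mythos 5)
Your proof of the direction Conjecture~\ref{conj:5CDC} $\Rightarrow$ Conjecture~\ref{conj:ClawFreeBerge} has a genuine gap: Theorem~\ref{thm:CQthm} yields four \emph{parity subgraphs} (joins) covering each edge once or twice, and in a cubic graph a parity subgraph is an odd subgraph but need \emph{not} be a perfect matching, since a vertex may have degree three in it. So applying Theorem~\ref{thm:CQthm} to $G$ itself does not give a cover of $E(G)$ by four perfect matchings, and $k(G)\le 4$ does not follow. The same conflation appears in your parenthetical remark that, via Theorem~\ref{thm:CQthm}, the cubic case of Conjecture~\ref{conj:5CDC} would amount to $k(G)\le 4$: the Petersen graph satisfies Conjecture~\ref{conj:5CDC} (hence has four joins covering each edge once or twice) while $k(P_{10})=5$. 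This is precisely the point where claw-freeness must do real work, and your argument never uses it in this direction. The paper proves this implication by induction on $|V(G)|$: after reducing to simple graphs without strings of diamonds via Proposition~\ref{prop:OumClawfreebridgelessCharac}, the claw-free graph $G$ is obtained from a bridgeless cubic $H$ by blowing each vertex up into a triangle; one applies Conjecture~\ref{conj:5CDC} and Theorem~\ref{thm:CQthm} to $H$ to get four joins of $H$ covering each edge once or twice (at each vertex of $H$ either all four joins use exactly one incident edge, or one join uses all three and the others one each), and then uses the triangle at each vertex of $H$ to extend these joins to four genuine perfect matchings of $G$. That extension step is the missing idea in your proposal.

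For the converse direction your plan (reduce Conjecture~\ref{conj:5CDC} to cubic graphs, blow vertices of $G$ up into triangles to get a claw-free bridgeless cubic graph, apply Conjecture~\ref{conj:ClawFreeBerge}, and push the cover back down) agrees with the paper, but the push-down you yourself flag as delicate is left unresolved, and your suggestion of contracting the complements of the four matchings and repairing the multiplicities is not worked out. The clean way to finish, as in the paper, is to first convert the four perfect matchings of the expanded graph into five even subgraphs covering each edge exactly twice (perfect matchings are parity subgraphs, so Theorem~\ref{thm:CQthm} applies), and then contract the triangles one at a time: in such a cover the three edges of a $3$-edge-cut are covered in the pattern $Ev_1\cap Ev_2$, $Ev_1\cap Ev_3$, $Ev_2\cap Ev_3$, with the remaining two even subgraphs disjoint from the cut, so the restriction to the contracted graph is again a cover by five even subgraphs with each edge covered exactly twice. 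With that observation your second direction goes through; the first direction, however, needs to be redone along the lines sketched above.
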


\begin{proof} Assume that Conjecture \ref{conj:ClawFreeBerge} is true. Let us show that Conjecture \ref{conj:5CDC} is also true. It is known that it suffices to prove Conjecture \ref{conj:5CDC} for cubic graphs \cite{Zhang1997}. Let $G$ be an arbitrary bridgeless cubic graph. Consider the cubic graph $H$ obtained from $G$ by replacing every vertex of $G$ with a triangle. Observe that $H$ is a claw-free bridgeless cubic graph. By Conjecture \ref{conj:ClawFreeBerge}, the edge-set of $H$ can be covered with four perfect matchings. Observe that perfect matchings are parity subgraphs in cubic graphs, hence by Theorem \ref{thm:CQthm}, $H$ admits a list of $5$ even subgraphs covering each edge exactly twice.

In order to complete the proof, let us observe that if a cubic graph $K$ admits a list of $5$ even subgraphs covering each edge exactly twice and it contains a triangle $T$, then the graph $K/T$ also admits a list of $5$ even subgraphs covering each edge exactly twice. In order to see this, let $\mathcal{C}=(Ev_1,...,Ev_5)$ be the list of $5$ even subgraphs covering the edges of $K$ twice. Then it is easy to see that the edges of the $3$-cut $\partial_K(T)$ are covered as follows: first edge belongs to $Ev_1$ and $Ev_2$, the second edge belongs to $Ev_1$ and $Ev_3$, and finally the third edge belongs to $Ev_2$ and $Ev_3$. Moreover, $Ev_4$ and $Ev_5$ do not intersect the $3$-cut. One can always achieve this by renaming the even subgraphs. Now, if we consider the restrictions of $(Ev_1,...,Ev_5)$ to $K/T$, we will have that they are forming a list of $5$ even subgraphs covering each edge of $K/T$ exactly twice. 

Applying this observation $|V(G)|$ times to $H$, we will get the statement for the original cubic graph $G$.

For the proof of the converse statement, let us assume that Conjecture \ref{conj:5CDC} is true, and show that any claw-free bridgeless cubic graph $G$ can be covered with four perfect matchings. We prove the latter statement by induction on $|V(G)|$. If $|V(G)|=2$, the the statement is trivially true. Assume that it is true for all claw-free bridgeless cubic graphs with less $n$ vertices and let us consider a claw-free bridgeless cubic graph $G$ with $n\geq 4$ vertices.

Clearly, we can assume that $G$ is connected. Let us show that we can assume that $G$ is simple. On the opposite assumption, consider the vertices $u$ and $v$ that are joined with two edges. Let $u'$ and $v'$ be the the other neighbors of $u$ and $v$, respectively. Consider a cubic graph $G'$ obtained from $G-\{u,v\}$ by adding a possibly parallel edge $u'v'$. Observe that $G'$ is a bridgeless cubic graph with $|V(G')|<n$. Moreover, it is claw-free. Thus, by induction hypothesis, $G'$ can be covered with four perfect matchings. Now, it is easy to see that using these list of four perfect matchings of $G'$ we can construct a list of four perfect matchings of $G$ covering $E(G)$. 

Thus, without loss of generality, we can assume that $G$ is simple. Hence, we can apply Proposition \ref{prop:OumClawfreebridgelessCharac}. If $G$ meets the first or the second condition of the proposition, then it is easy to see that $G$ is $3$-edge-colorable, hence it can be covered with three perfect matchings. Thus, we can assume that there is a connected bridgeless cubic graph $H$ such that $G$ can be obtained from $H$ by replacing some edges of $H$ with strings of diamonds and every vertex of $H$ with a triangle.  

Let us show that we can also assume that $G$ has no string of diamonds. Assume it has one. Let it be $S$ whose head and tails are $u$ and $v$, respectively. Let $u'$ and $v'$ be the neighbors of $u$ and $v$, respectively, that lie outside $S$. Consider a graph $G'$ obtained from $G-V(S)$ by adding a possibly parallel edge $u'v'$. Observe that $G'$ is a bridgeless cubic graph with $|V(G')|<n$. Moreover, it is claw-free. Thus, by induction hypothesis, $G'$ can be covered with four perfect matchings. Now, it is easy to see that using these list of four perfect matchings of $G'$ we can construct a list of four perfect matchings of $G$ covering $E(G)$.

Thus, without loss of generality, we can assume that $G$ can be obtained from the connected bridgeless cubic graph $H$ by replacing its every vertex with a triangle. By Conjecture \ref{conj:5CDC}, $H$ has a list of five even subgraphs covering its edges exactly twice. By Theorem \ref{thm:CQthm}, we have that $H$ admits a cover with four joins such that each edge of $H$ is covered once or twice. Let $v$ any vertex of $H$ and let $C=(T_1, T_2, T_3, T_4)$ be the cover of $H$ with four joins. Since each edge of $H$ is covered once or twice in $C$, we have that there is at most one join in $C$ that contains all three edges incident to $v$. Thus, for any vertex $v$ we have that either one of joins contains all three edges incident to $v$ and the other three joins contain exactly one edge incident to $v$, or all joins contain exactly one edge incident to $v$. Now, it is not hard to see that these four joins covering $H$ can be extended to four perfect matchings of $G$ so that they cover $G$. The proof of the theorem is complete.
\end{proof}

\section{Hereditary classes of cubic graphs}
\label{sec:hereditary}

In this section, we briefly discuss hereditary classes of cubic graphs. It is these classes that allowed us to come up with more conjectures related to Conjectures \ref{conj:P10conjecture} and \ref{conj:S10Conjecture}. 

If $G$ and $H$ are two cubic graphs with $H\prec G$ or $G\prec H$, then we will say that $G$ and $H$ are comparable. A (not necessarily finite) set of cubic graphs is said to be an anti-chain, if any two cubic graphs from the set are not comparable. Let $\mathcal{C}$ be the class of all connected cubic graphs. If $\mathcal{M}\subseteq \mathcal{C}$ is a class of connected cubic graphs, then we will say that $\mathcal{M}$ is hereditary, if for any cubic graphs $G$ and $H$, if $H\in \mathcal{M}$ and $H\prec G$, then $G\in \mathcal{M}$. Assume that $\mathcal{B}\subseteq \mathcal{M}$ is a subset of some hereditary class $\mathcal{M}$ of cubic graphs. We will say that $\mathcal{B}$ is a basis for $\mathcal{M}$, if $\mathcal{B}$ is an anti-chain and for any connected cubic graph $G$ we have that $G\in \mathcal{M}$ if and only if there is a cubic graph $H\in \mathcal{B}$, such that $H\prec G$.

Our starting question is the following: does every hereditary class of cubic graphs admit a finite basis, that is, a basis with finitely many elements? It turns out that the answer to this question is negative. Let $\mathcal{I}$ be the infinite anti-chain of cubic graphs constructed in \cite{Samal2017}. Consider the class $\mathcal{M}$ of connected cubic graphs $G$, such that for any $G$ we have: $G\in \mathcal{M}$, if and only if there is a cubic graph $H\in \mathcal{I}$, such that $H\prec G$. It is easy to see that $\mathcal{M}$ is a hereditary class of cubic graphs without a finite basis.

Despite this, one may still look for interesting hereditary classes arising in Graph theory, that admit a finite basis. Below, we discuss some such classes. The first class is $\mathcal{C}$-the class of all connected cubic graphs. Clearly, it is hereditary. Observe that any connected cubic graph admitting an $S_{10}$-coloring belongs to $\mathcal{C}$. On the other hand, Conjecture \ref{conj:S10Conjecture} predicts that any cubic graph from $\mathcal{C}$ admits an $S_{10}$-coloring. Thus, we can view Conjecture \ref{conj:S10Conjecture} as a statement that $S_{10}$ forms a basis for $\mathcal{C}$.

Let $\mathcal{C}_b$ be the class of all connected bridgeless cubic graphs. (h) of Lemma \ref{lem:propHcolorings} implies that $\mathcal{C}_b$ is a hereditary class of cubic graphs. Observe that any connected cubic graph admitting a $P_{10}$-coloring belongs to $\mathcal{C}_b$. On the other hand, Conjecture \ref{conj:P10conjecture} predicts that any bridgeless cubic graph from $\mathcal{C}_b$ admits a $P_{10}$-coloring. Thus, we can view Conjecture \ref{conj:P10conjecture} as a statement that $P_{10}$ forms a basis for $\mathcal{C}_b$.

Let $\mathcal{C}_3$ be the class of all connected 3-edge-colorable cubic graphs. (b) of Lemma \ref{lem:propHcolorings} implies that $\mathcal{C}_3$ is a hereditary class of cubic graphs. Let $H$ be any connected 3-edge-colorable cubic graph. Observe that any cubic graph $G$ is 3-edge-colorable if and only if $H\prec G$. Thus, $H$ forms a basis for $\mathcal{C}_3$.

\begin{figure}[ht]
\centering
\begin{minipage}[b]{.5\textwidth}
  \begin{center}
	\tikzstyle{every node}=[circle, draw, fill=black!50,
                        inner sep=0pt, minimum width=4pt]
                        
		\begin{tikzpicture}
																								
			\node[circle,fill=black,draw] at (-5.5,-1) (n1) {};
																								
			\node[circle,fill=black,draw] at (-6, -0.5) (n2) {};
																								
			\node[circle,fill=black,draw] at (-5,-0.5) (n3) {};
																								
			\node[circle,fill=black,draw] at (-3.5,-1) (n4) {};
																								
			\node[circle,fill=black,draw] at (-4, -0.5) (n5) {};
																								
			\node[circle,fill=black,draw] at (-3,-0.5) (n6) {};
																								
			\node[circle,fill=black,draw] at (-1.5,-1) (n7) {};
																								
			\node[circle,fill=black,draw] at (-2, -0.5) (n8) {};
																								
			\node[circle,fill=black,draw] at (-1,-0.5) (n9) {};
																								
			\node[circle,fill=black,draw] at (-5.5,-2) (n10) {};
																								
			\node[circle,fill=black,draw] at (-3.5,-2) (n11) {};
																								 
			\node[circle,fill=black,draw] at (-1.5,-2) (n12) {};
																								
			\path[every node]
			(n1) edge  (n2)
																								    
			edge  (n3)
			edge (n10)
																								   	
			(n2) edge (n3)
			edge [bend left] (n3)
																								       
			(n3) 
			(n4) edge (n5)
			edge (n6)
			edge (n11)
																								    
			(n5) edge (n6)
			edge [bend left] (n6)
			(n6)
																								   
			(n7) edge (n8)
			edge (n9)
			edge (n12)
																								    
			(n8) edge (n9)
			edge [bend left] (n9)
																								   
			(n10) edge (n11)
			edge (n12)
																								   
			(n10) edge [bend right] (n12)
																								  
			;
		\end{tikzpicture}
																
	\end{center}
	
	\caption{The graph $S_{12}$.}\label{fig:Sylvester12}
\end{minipage}%
\begin{minipage}[b]{.5\textwidth}
  	\begin{center}
	
	\tikzstyle{every node}=[circle, draw, fill=black!50,
                        inner sep=0pt, minimum width=4pt]
                        
		\begin{tikzpicture}[scale=0.75]
																								
			\node[circle,fill=black,draw] at (0,0) (n1) {};
																								
			\node[circle,fill=black,draw] at (-2.5, 0) (n2) {};
																								
			\node[circle,fill=black,draw] at (2.5,0) (n3) {};
																								
			\node[circle,fill=black,draw] at (-1.5,-0.75) (n4) {};
																								
			\node[circle,fill=black,draw] at (1.5, -0.75) (n5) {};
																								
			\node[circle,fill=black,draw] at (-1,-2) (n6) {};
																								
			\node[circle,fill=black,draw] at (1,-2) (n7) {};
																								
			\node[circle,fill=black,draw] at (-2, -3) (n8) {};
																								
			\node[circle,fill=black,draw] at (2,-3) (n9) {};
			
			
			\node[circle,fill=black,draw] at (-1,2) (n01) {};
			
			\node[circle,fill=black,draw] at (0,1) (n02) {};
			
			\node[circle,fill=black,draw] at (1,2) (n03) {};

			\path[every node]
			(n1) edge  (n6)
																								    
			edge  (n7)
																								   	
			(n2) edge (n4)
			edge (n8)
																								       
			(n3) edge (n5)
			edge (n9)
			(n4) edge (n5)
			edge (n7)

			(n5) edge (n6)
																								   
			(n6)edge (n8)
																								   
			(n7) edge (n9)

			(n8) edge (n9)
			
			(n01) edge (n02)
			      edge (n03)
			      edge (n2)
			 (n02) edge (n03)
			        edge (n1)
			 (n03) edge (n3)
			;

		\end{tikzpicture}
																
	\end{center}
	\caption{The graph $P_{12}$.}\label{fig:Petersen12}
\end{minipage}
\end{figure}

Let $\mathcal{C}_p$ be the class of all connected cubic graphs containing a perfect matching. (c) of Lemma \ref{lem:propHcolorings} implies that $\mathcal{C}_p$ is a hereditary class of cubic graphs. Observe that any connected cubic graph admitting an $S_{12}$-coloring (the graph from Figure \ref{fig:Sylvester12}) belongs to $\mathcal{C}_p$. On the other hand, we suspect that

\begin{conjecture}
\label{conj:S12conjecture} Any cubic graph with a perfect matching admits an $S_{12}$-coloring.
\end{conjecture}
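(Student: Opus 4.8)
The plan is to run the standard program for $H$-coloring statements: pick a counterexample $G$ of smallest order, strip it by reductions that keep it inside the class $\mathcal{C}_p$ of cubic graphs with a perfect matching, and then confront a highly connected core. Write $S_{12}$ as a central triangle $a_1a_2a_3$ in which each $a_i$ is joined by a bridge $a_ix_i$ to the apex $x_i$ of a ``banana'' on $\{x_i,y_i,z_i\}$, with edges $x_iy_i$, $x_iz_i$ and a doubled edge $y_iz_i$. First I would dispose of the trivial case: if $G$ is $3$-edge-colorable, then fixing any vertex $w$ of $S_{12}$ and sending the three color classes of $G$ onto the three edges of $\partial_{S_{12}}(w)$ is an $S_{12}$-coloring; hence a minimal counterexample is a connected, non-$3$-edge-colorable cubic graph with a perfect matching.

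Next come the reductions. For bridges, take an end-block $B$ with cut-vertex $v$ and leaving bridge $b$; a degree count gives $|V(B)|$ odd, hence $b$ lies in every perfect matching of $G$ and $B-v$ has a perfect matching. If $B$ is not already a banana, replace it by one; the result is a smaller member of $\mathcal{C}_p$, and to recover an $S_{12}$-coloring of $G$ from one of it, one colors $B$ with $v$ sent to a banana apex $x_i$. For this lifting (and for gluing across cuts) one should carry along a rooted strengthening of the induction: for a connected graph with a single degree-$2$ vertex $v$ whose deletion has a perfect matching, there is a partial $S_{12}$-coloring placing $v$ at any prescribed $x_i$ (or at $a_i$ with $b$ the unused bridge). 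A contractible triangle $K$ can be contracted \emph{provided} $G/K$ still has a perfect matching; using that $S_{12}$ contains a triangle, an $S_{12}$-coloring of $G/K$ lifts to $G$ by giving opposite edges equal colors. The contraction fails to stay in $\mathcal{C}_p$ precisely when every perfect matching of $G$ avoids all three edges of $K$ (equivalently, uses all three edges of the $3$-cut $\partial_G(K)$) — which is exactly the situation of the central triangle of $S_{12}$, and this is the reason the conjecture must use $S_{12}$ rather than the smaller $S_{10}$. Finally, $2$-edge-cuts and cyclic $3$-edge-cuts are spliced in the usual way, checking that membership in $\mathcal{C}_p$ is preserved (a perfect matching meets a $2$-cut in $0$ or $2$ edges and a $3$-cut in $1$ or $3$, which must be arranged compatibly with the cut) and that the colorings of the pieces glue via the rooted strengthening.

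After all this one is left with a cyclically $4$-edge-connected cubic graph $G$ with a perfect matching $M$ (the special triangles above being the only residual subtlety), so $G = M \cup F$ with $F$ a $2$-factor. The picture I would aim for here is to use the three ``arms'' $\{a_ix_i, x_iy_i, x_iz_i\}$ of $S_{12}$ as three ordinary colors on the part of $G$ that behaves like a $3$-edge-colorable graph, and to push the obstructions to proper $3$-edge-colorability into the central triangle $a_1a_2a_3$ and the six doubled banana edges $y_iz_i$ — mimicking the way the doubled edges of the Sylvester graph's bananas break $3$-edge-colorability. Concretely this asks for a spanning ``near-$3$-coloring'' of $G$ with a controlled collection of defect edges, each embeddable into a banana while the complement is threaded through the arms.

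The main obstacle is exactly this core case: I do not see a general construction, and this should be expected, since (as announced in the abstract) the $S_{12}$-conjecture implies the still-open $S_{10}$-conjecture, so the core cannot be easier than a well-known open problem and a complete proof would require a genuinely new idea. A realistic program is therefore (a) to complete the reduction machinery above in full, thereby reducing the conjecture to cyclically $4$-edge-connected graphs, and (b) to settle the core case for structured families — for instance cubic graphs admitting a $2$-factor with few odd cycles, graphs of small oddness, or graphs obtained from $P_{10}$ by standard operations — as supporting evidence.
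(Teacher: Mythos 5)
The statement you are asked about is not a theorem of the paper: it is Conjecture~\ref{conj:S12conjecture}, introduced by the authors and left open. The paper offers no proof at all --- only supporting evidence (a verification for claw-free cubic graphs, by citation to \cite{AnushSylvester}) and the result that Conjecture~\ref{conj:S12conjecture} implies the open Sylvester conjecture (Conjecture~\ref{conj:S10Conjecture}). So there is no paper proof to compare your argument against, and your text, by its own admission, is a reduction program rather than a proof.

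The genuine gap is the one you name yourself: the cyclically $4$-edge-connected core case is completely untouched, and since the conjecture implies Conjecture~\ref{conj:S10Conjecture}, no routine argument can be expected to close it. But I would add that even the reduction machinery you treat as ``standard'' is not established and is likely to be nearly as hard as the conjecture itself. The step ``replace an end-block $B$ by a banana and recover a coloring of $G$ by coloring $B$ with $v$ sent to a banana apex'' silently assumes a rooted coloring lemma --- that every bridgeless near-cubic piece with one degree-$2$ vertex and the relevant matching property admits a partial $S_{12}$-coloring with a prescribed image for its root --- and this is essentially the conjecture restricted to such pieces, not a known fact; the same applies to your splicing at $2$-edge-cuts and cyclic $3$-edge-cuts, where both the preservation of membership in $\mathcal{C}_p$ and the compatibility of the rooted colorings on the two sides need proofs you do not supply. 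Your correct individual observations (a $3$-edge-colorable cubic graph is $H$-colorable for any cubic $H$; a bridge in a cubic graph lies in every perfect matching; expanding a vertex into a triangle preserves $H$-colorability by giving opposite edges equal colors, and $G/K\in\mathcal{C}_p$ fails exactly when every perfect matching uses all of $\partial_G(K)$) are fine, but they do not add up to a proof, and your own conclusion that the statement remains open is the accurate one.
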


Conjecture \ref{conj:S12conjecture} predicts that all cubic graphs from $\mathcal{C}_p$ admit an $S_{12}$-coloring. Thus, we can view Conjecture \ref{conj:S12conjecture} as a statement that $S_{12}$ forms a basis for $\mathcal{C}_p$. Let us note that Conjecture \ref{conj:S12conjecture} has been verified for claw-free cubic graphs in \cite{AnushSylvester}.

Let $\mathcal{C}^{(4)}$ be the class of all connected cubic graphs $G$ with $k(G)\leq 4$. (d) of Lemma \ref{lem:propHcolorings} implies that $\mathcal{C}^{(4)}$ is a hereditary class of cubic graphs. Observe that any connected cubic graph admitting a $P_{12}$-coloring (the graph from Figure \ref{fig:Sylvester12}) belongs to $\mathcal{C}^{(4)}$. On the other hand, we suspect that

\begin{conjecture}
\label{conj:P12conjecture} Any cubic graph $G$ with $k(G)\leq 4$ admits a $P_{12}$-coloring.
\end{conjecture}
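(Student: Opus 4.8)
The plan is to exploit the rigid local structure of a cover by four perfect matchings and transport it onto the structure carried by $P_{12}$. First, some reductions. Fix a cover $E(G)=M_1\cup M_2\cup M_3\cup M_4$ by four (not necessarily distinct) perfect matchings. At every vertex $v$ each $M_i$ meets $\partial_G(v)$ in exactly one edge, so the multiplicities of the three edges of $\partial_G(v)$ (the number of $M_i$'s containing the edge) sum to $4$ and hence are $1,1,2$ in some order. Consequently the multiplicity-$2$ edges form a perfect matching $M^{*}$ of $G$, the $2$-factor $G-M^{*}$ decomposes into cycles each properly edge-coloured by the indices $1,2,3,4$ (the two cycle-edges at a vertex carrying the index-pair complementary to that of the $M^{*}$-edge there), and $G$ is bridgeless — a bridge would have to lie in all four matchings (otherwise some $M_i$ would induce a perfect matching on a component of $G-e$, which has odd order), forcing a neighbouring edge to have multiplicity $0$. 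Finally, if $\chi'(G)=3$ a $P_{12}$-coloring is immediate: colour the three classes of a proper $3$-edge-colouring of $G$ with the three edges of $P_{12}$ incident to a fixed vertex, so that the colour set at every vertex of $G$ is exactly that star. We may therefore assume $\chi'(G)=4$, equivalently $k(G)=4$.

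The heart of the argument is to build $f\colon E(G)\to E(P_{12})$ directly from the data $(M^{*},M_1,\dots,M_4)$. The relevant observation is that $P_{12}$ itself admits a cover by four perfect matchings with exactly the same $(1,1,2)$ local pattern and, being the Petersen graph with one vertex blown up into a triangle, is rich enough to provide the flexibility one needs; its twelve vertices realise a controlled collection of the $\binom{4}{2}=6$ possible local index-patterns. The intended construction assigns to each vertex of $G$ a vertex of $P_{12}$, and to each edge of $G$ an edge of $P_{12}$ at the two assigned vertices, so that the four-matching data of $G$ is pushed forward to that of $P_{12}$; the cycles of $G-M^{*}$ are then realised as closed walks in $P_{12}$, and the freedom in routing those walks is exactly what one must spend to make all the star conditions hold.

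The step I expect to be the main obstacle is global consistency, and this is where the triangle of $P_{12}$ bites. Unlike for the triangle-free targets $P_{10}$ and $S_{10}$, the convenient criterion ``adjacent edges $\mapsto$ adjacent edges $\;\Rightarrow\;$ $H$-coloring'' fails for $P_{12}$, so the construction cannot be verified edge by edge. Concretely, the set of edges of $G$ that $f$ sends into the triangle of $P_{12}$ is forced to be a disjoint union of cycles of length divisible by $3$ that behave like ``rainbow'' closed walks around the triangle; producing such a subgraph that is simultaneously compatible with the matching $M^{*}$ and with a Petersen-type colouring of the remainder of $G$ is the crux. A promising route is to first route onto the triangle a suitable even subgraph of $G$ (say one governed by $M^{*}$), then complete a $P_{10}$-like colouring of what is left, using a description of which edges of $P_{12}$ may be fictive in a $P_{12}$-coloring; in other words, one wants to show that the obstruction to a direct $P_{10}$-colouring of $G$ can always be absorbed into the triangle. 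As a warm-up and a way to calibrate the method, I would first carry out this program for claw-free cubic graphs via Oum's structure theorem (Proposition~\ref{prop:OumClawfreebridgelessCharac}), in parallel with the verification of Conjecture~\ref{conj:S12conjecture} for that class; and should a uniform construction prove elusive, I would fall back on reducing $G$ through its parallel pairs and non-trivial $2$- and $3$-edge-cuts to the cyclically $4$-edge-connected case — checking at each step that membership in $\mathcal{C}^{(4)}$ is preserved — leaving only finitely many small base graphs to be settled by hand.
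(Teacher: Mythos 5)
The statement you are addressing is Conjecture~\ref{conj:P12conjecture}, which the paper itself leaves open: it offers no proof, only evidence (Theorem~\ref{thm:P10P12thm}, Corollary~\ref{cor:P12corollary}, Proposition~\ref{prop:P12prop}) that $P_{12}$ is the right candidate graph for the class $\mathcal{C}^{(4)}$. Your text is accordingly not a proof but a research programme. The preliminary reductions you do carry out are fine: the $(1,1,2)$ multiplicity pattern at each vertex of a $4$-perfect-matching cover, the resulting perfect matching $M^{*}$ of doubly covered edges, bridgelessness of any $G$ with $k(G)\leq 4$, and the dispatch of the case $\chi'(G)=3$ by sending the three colour classes to the three edges of a fixed star $\partial_{P_{12}}(y)$. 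But the heart of the argument is never executed: the map $f$ is only ``intended'', the routing of the cycles of $G-M^{*}$ as closed walks in $P_{12}$ is never defined, and no argument is given that the star condition $f(\partial_G(x))=\partial_{P_{12}}(y)$ can be met at every vertex simultaneously. You correctly identify the crux --- since $P_{12}$ contains a triangle, the easy criterion ``adjacent edges map to adjacent edges'' no longer certifies an $H$-coloring, and the preimage of the triangle must be a carefully controlled even subgraph compatible with both $M^{*}$ and a Petersen-type colouring of the rest --- but identifying the obstacle is not overcoming it, and nothing in the proposal shows it can always be absorbed.

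The fallback you sketch also does not close the gap. Reducing through parallel pairs and nontrivial $2$- and $3$-edge-cuts to the cyclically $4$-edge-connected case cannot end with ``finitely many small base graphs'': the cyclically $4$-edge-connected members of $\mathcal{C}^{(4)}$ form an infinite family, so that route merely relocates the conjecture rather than proving it; moreover, preservation of $k(G)\leq 4$ under such cut reductions is asserted, not proved. Keep in mind the strength of what you are attempting: by Lemma~\ref{lem:propHcolorings}(e) the conjecture implies Conjecture 4.9 of \cite{CQJoins}, and together with Conjecture~\ref{conj:5CDC} it implies the Petersen coloring conjecture, so any complete proof would be a major advance. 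As a genuinely tractable first step, your suggestion to treat claw-free graphs via Proposition~\ref{prop:OumClawfreebridgelessCharac}, in parallel with the verification of Conjecture~\ref{conj:S12conjecture} for that class in \cite{AnushSylvester}, is reasonable, but it is not carried out here either.
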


Conjecture \ref{conj:P12conjecture} predicts that all cubic graphs from $\mathcal{C}^{(4)}$ admit a $P_{12}$-coloring. Thus, we can view Conjecture \ref{conj:P12conjecture} as a statement that $P_{12}$ forms a basis for $\mathcal{C}^{(4)}$. Also, note that (e) of Lemma \ref{lem:propHcolorings} implies that Conjecture 4.9 from \cite{CQJoins} is a consequence of Conjecture \ref{conj:P12conjecture}.

\section{The main results}
\label{sec:main}

In this section, we obtain our main results. First, we discuss the choice of graphs $P_{12}$ and $S_{12}$ in Conjectures \ref{conj:P12conjecture} and \ref{conj:S12conjecture}, respectively. For this purpose, we recall the following two theorems that are proved in \cite{PetersenRemark}.

\begin{theorem}\label{thm:Petersen}
If $G$ is a connected bridgeless cubic graph with $G \prec P_{10}$, then $G = P_{10}$.
\end{theorem}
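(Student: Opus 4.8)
The plan is to read a $G$-coloring of $P_{10}$ as a covering map of graphs and then pin down $|V(G)|$ by a short divisibility count. Unwinding the notation, $G\prec P_{10}$ means that $P_{10}$ admits a $G$-coloring, i.e. there is a map $f\colon E(P_{10})\to E(G)$ such that for every vertex $x$ of $P_{10}$ there is a vertex $y$ of $G$ with $f(\partial_{P_{10}}(x))=\partial_G(y)$. My first step would be to record two elementary consequences. Since $\partial_{P_{10}}(x)$ and $\partial_G(y)$ each have exactly three elements and $f(\partial_{P_{10}}(x))=\partial_G(y)$, the restriction of $f$ to $\partial_{P_{10}}(x)$ is a \emph{bijection} onto $\partial_G(y)$. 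Moreover, because $G$ is loopless, two distinct vertices of a connected cubic graph on more than two vertices cannot have the same set of incident edges (equal edge-neighborhoods would force three parallel edges between them, hence a two-vertex component); so, provided $|V(G)|>2$, the assignment $x\mapsto y$ gives a well-defined map $\phi\colon V(P_{10})\to V(G)$.

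Before exploiting $\phi$, I would dispose of the degenerate case $|V(G)|=2$. The only connected cubic graph on two vertices is the triple edge, which is $3$-edge-colorable. But applying part (b) of Lemma \ref{lem:propHcolorings} to the $G$-coloring of $P_{10}$ (taking the lemma's two graphs to be $P_{10}$ and $G$) gives $\chi'(P_{10})\le\chi'(G)$, and since the Petersen graph is not $3$-edge-colorable we get $\chi'(G)=4$. Hence $G$ is not the triple edge, so $|V(G)|\ge 4$ and $\phi$ is well defined.

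The heart of the argument is to verify that $(\phi,f)$ behaves like a covering map. For each edge $e=x_1x_2$ of $P_{10}$ one has $f(e)\in\partial_G(\phi(x_1))\cap\partial_G(\phi(x_2))$, so $f(e)$ joins $\phi(x_1)$ and $\phi(x_2)$, and these are distinct because $f(e)$ is not a loop; thus $\phi$ is a graph homomorphism, and the local bijectivity above says $\phi$ restricts to a bijection $\partial_{P_{10}}(x)\to\partial_G(\phi(x))$ at every vertex. From this I would show the fibres of $\phi$ all have the same size: if $e'=y_1y_2\in E(G)$, then each $x\in\phi^{-1}(y_1)$ has a unique incident edge mapping to $e'$, whose other endpoint lies in $\phi^{-1}(y_2)$, giving a bijection $\phi^{-1}(y_1)\to f^{-1}(e')\to\phi^{-1}(y_2)$; hence adjacent vertices of $G$ have equal fibre sizes, and by connectedness of $G$ all fibres have a common size $d$ (the same argument shows $\phi$ is surjective, since its image is closed under adjacency). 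Therefore $|V(P_{10})|=d\,|V(G)|$, i.e. $10=d\,|V(G)|$. As $G$ is cubic, $|V(G)|$ is even, divides $10$, and is at least $4$, forcing $|V(G)|=10$ and $d=1$. A one-sheeted covering is an isomorphism, so $\phi$ (with $f$ as its edge map) is an isomorphism of $P_{10}$ onto $G$, whence $G=P_{10}$.

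The step I expect to require the most care is exactly the passage from ``$f$ is locally bijective at each vertex'' to ``all fibres of $\phi$ have the same cardinality'', that is, establishing the covering-space property combinatorially rather than invoking it as a black box: one must check that traversing an edge of $P_{10}$ moves between fibres reversibly, so that fibre size is constant along the connected graph, and simultaneously that $\phi$ is surjective. Once this is in hand, eliminating the two-vertex graph via Lemma \ref{lem:propHcolorings}(b) and running the divisibility count $10=d\,|V(G)|$ are routine, and the conclusion $G=P_{10}$ is immediate.
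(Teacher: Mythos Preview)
The paper does not prove this theorem at all; it only recalls it from \cite{PetersenRemark}. So there is no in-paper argument to compare your proposal against, and I will simply assess the proposal on its own.

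Your covering-map strategy is natural, but it has a real gap at the step you yourself flag as delicate. You assert that for an edge $e=x_1x_2$ of $P_{10}$, ``$f(e)$ joins $\phi(x_1)$ and $\phi(x_2)$, and these are distinct because $f(e)$ is not a loop.'' What you have actually established is only that $f(e)\in\partial_G(\phi(x_1))\cap\partial_G(\phi(x_2))$, i.e.\ that each of $\phi(x_1)$ and $\phi(x_2)$ is \emph{some} endpoint of $f(e)$. Nothing prevents $\phi(x_1)=\phi(x_2)=y$ while $f(e)=yz$ with $z\neq y$; the non-loop condition only says the two endpoints of $f(e)$ are distinct, not that $\phi(x_1)$ and $\phi(x_2)$ occupy different endpoints. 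That this collapse genuinely occurs in $H$-colorings is exhibited in Case~1 of the paper's proof of Theorem~\ref{thm:P10P12thm}: there the three vertices of the triangle $T\subset P_{12}$ all receive the same $\phi$-value $v$, while the three triangle edges (non-loops) map bijectively to $\partial_G(v)$.

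Once adjacent vertices may share a $\phi$-value, your fibre bijection $\phi^{-1}(y_1)\to\phi^{-1}(y_2)$ along an edge $y_1y_2$ breaks down (the ``other endpoint'' of the lifted edge may land back in $\phi^{-1}(y_1)$), and with it both the surjectivity of $\phi$ and the divisibility $10=d\,|V(G)|$. To salvage the covering argument you would need to rule out $\phi(x_1)=\phi(x_2)$ for adjacent $x_1,x_2$ using something specific to $P_{10}$ --- its girth, the structure of its perfect matchings, or its non-$3$-edge-colorability --- and, relatedly, you will need to use the bridgelessness of $G$, which your current write-up never invokes. The looplessness of $G$ alone, as the $P_{12}$ example shows, is not sufficient.
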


\begin{theorem}\label{thm:Sylvester}
If $G$ is a connected cubic graph with $G \prec S_{10}$, then $G = S_{10}$.
\end{theorem}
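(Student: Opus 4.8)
My plan is to show that the given $G$-coloring $f\colon E(S_{10})\to E(G)$ is, together with an induced map on vertices, a covering map of graphs, and then to finish by counting sheets. For each vertex $x$ of $S_{10}$ fix $y_x\in V(G)$ with $f(\partial_{S_{10}}(x))=\partial_G(y_x)$; since $|\partial_{S_{10}}(x)|=3=|\partial_G(y_x)|$, the restriction $f|_{\partial_{S_{10}}(x)}$ is a bijection onto $\partial_G(y_x)$, so in particular $f$ is injective on the edges incident to each vertex of $S_{10}$. The vertex $y_x$ is unique unless two vertices of $G$ share all three incident edges, which (as $G$ is connected) happens only when $G$ consists of two vertices joined by a triple edge; but for that $G$, a $G$-coloring of $S_{10}$ would be exactly a proper $3$-edge-coloring of $S_{10}$, and none exists — in a gadget the two parallel edges need distinct colors, which forces the two edges from that gadget to its cubic vertex $b_i$ to receive the same color. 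So we may assume $x\mapsto y_x$ is well defined.

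The heart of the proof is to show that $f$ collapses no edge, i.e.\ $y_u\neq y_v$ for every edge $uv$ of $S_{10}$. Granting this, every edge $e=uv$ is sent to an edge incident to both $y_u$ and $y_v$, hence — as $G$ has no loops — to the edge with endpoints $\{y_u,y_v\}$; thus $(x\mapsto y_x,\ e\mapsto f(e))$ is a graph homomorphism, and it is locally bijective by the previous paragraph. Its image $G'$ satisfies $\partial_G(y_x)\subseteq E(G')$ for every $x$, so $V(G')$ is closed under taking $G$-neighbours, and connectedness of $G$ forces $G'=G$; thus $f$ realises $S_{10}$ as a covering of $G$. Since $G$ is connected, $|V(S_{10})|$ is a multiple of $|V(G)|$, so $|V(G)|$ is an even divisor of $10$, i.e.\ $|V(G)|\in\{2,10\}$. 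The case $|V(G)|=2$ is excluded just as above, so $|V(G)|=10$, the covering has one sheet, and $G=S_{10}$.

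It remains to rule out collapsed edges, and here the specific shape of $S_{10}$ is used together with Lemma \ref{lem:propHcolorings}(g): a central vertex $c$ with three bridges $e_i=cb_i$, and for each $i$ a gadget with parallel edges $d_i,d_i'$ joining $p_i,q_i$, plus edges $a_i=b_ip_i$ and $a_i'=b_iq_i$. First, $y_{p_i}\neq y_{q_i}$: otherwise $\{f(a_i),f(d_i),f(d_i')\}=\partial_G(y_{p_i})=\partial_G(y_{q_i})=\{f(a_i'),f(d_i),f(d_i')\}$ gives $f(a_i)=f(a_i')$, contradicting injectivity of $f$ on $\partial_{S_{10}}(b_i)$; note that then $f(d_i),f(d_i')$ are parallel edges of $G$ between $y_{p_i}$ and $y_{q_i}$, in particular not bridges. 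Next, $y_{b_i}\notin\{y_{p_i},y_{q_i}\}$: if $y_{b_i}=y_{p_i}$, comparing $\partial_G$ at $b_i$ and at $p_i$ gives $\{f(a_i'),f(e_i)\}=\{f(d_i),f(d_i')\}$, but $f(e_i)$ is a bridge of $G$ while $f(d_i),f(d_i')$ are not (and symmetrically for $q_i$). The remaining, and hardest, case is $y_c\neq y_{b_i}$: if $y_c=y_{b_i}$, comparing $\partial_G$ at $c$ and at $b_i$ yields $\{f(a_i),f(a_i')\}=\{f(e_j),f(e_k)\}$, the two bridges at $y_c$ other than $f(e_i)$; since all three edges at $y_c$ are then bridges, $G-y_c$ has three components $G_1,G_2,G_3$, with $G_\ell$ containing the endpoint of $f(e_\ell)$ distinct from $y_c$, and one checks $y_{p_i}$ must be that endpoint of $f(a_i)=f(e_j)$ and $y_{q_i}$ that of $f(a_i')=f(e_k)$ (the alternatives $y_{p_i}=y_c$ or $y_{q_i}=y_c$ would make $f(d_i),f(d_i')$ bridges again); but then $f(d_i)$ joins vertices in two different components of $G-y_c$, which is impossible. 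As every edge of $S_{10}$ is one of the $e_i,a_i,a_i',d_i,d_i'$, this completes the plan. I expect this last, bridge, case to be the main obstacle, since it is the only one requiring a global structural observation about $G$ rather than a purely local comparison of incidence sets.
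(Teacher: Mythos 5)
Your proof is correct. Note that the present paper does not prove Theorem \ref{thm:Sylvester} at all: it is recalled from \cite{PetersenRemark}, so the only possible comparison is with that external source, not with an argument in this text. Your route --- upgrading the $G$-coloring $f\colon E(S_{10})\to E(G)$ to a locally bijective homomorphism (a covering) and then counting sheets --- is a clean, self-contained argument, and its essential content is exactly where you located it: ruling out collapsed edges. All edge types of $S_{10}$ are handled correctly: the unconditional fact that $f(d_i),f(d_i')$ are two parallel edges between $y_{p_i}\neq y_{q_i}$ (hence non-bridges), played against Lemma \ref{lem:propHcolorings}(g) applied to the three bridges of $S_{10}$, kills the $a_i$-type collapses, and the three-component decomposition of $G-y_c$ correctly disposes of the bridge case. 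Two steps are stated tersely but are standard and harmless: (i) the constant fibre size of a locally bijective homomorphism from the connected $S_{10}$ onto the connected $G$, which gives that $|V(G)|$ divides $10$ (for an edge $e=vv'$ of $G$, vertices in the fibre over $v$ correspond bijectively to edges in the fibre over $e$, using local injectivity and the absence of collapsed edges); and (ii) the parity of $|V(G)|$, which comes from the handshake lemma for cubic graphs and excludes $|V(G)|=5$, while $|V(G)|=2$ is the triple-edge graph you already excluded via the non-$3$-edge-colourability of $S_{10}$. One wording nit: ``the edge with endpoints $\{y_u,y_v\}$'' should be ``an edge with endpoints $\{y_u,y_v\}$'', since $G$ may have parallel edges; nothing in your argument relies on its uniqueness.
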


The first theorem suggests that in Conjecture \ref{conj:P10conjecture} the graph $P_{10}$ cannot be replaced with any other connected bridgeless cubic graph. Similarly, the second theorem suggests that in Conjecture \ref{conj:S10Conjecture}  the graph $S_{10}$ cannot be replaced with other connected cubic graph. Now, we are going to obtain similar results for Conjectures \ref{conj:P12conjecture} and \ref{conj:S12conjecture}.

\begin{theorem}\label{thm:P10P12thm}
Let $G$ be a connected bridgeless cubic graph with $G \prec P_{12}$. Then either $G=P_{10}$ or $G= P_{12}$. 
\end{theorem}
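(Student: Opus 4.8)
The plan is to analyze an arbitrary $P_{12}$-coloring $f$ of a connected bridgeless cubic graph $G$ and show that $G$ is forced to be one of two graphs. I would begin by recording the structure of $P_{12}$ (Figure \ref{fig:Petersen12}): it is the Petersen graph $P_{10}$ with one edge $uv$ subdivided twice and a triangle attached — equivalently, $P_{12}$ contains a triangle $T$ on three vertices $\{a_1,a_2,a_3\}$ and $P_{12}/T$ (or a closely related contraction) relates to $P_{10}$. The key is to understand which edges of $P_{12}$ can be \emph{fictive} for a $P_{12}$-coloring of $G$, i.e. not lie in the image $f(E(G))$; by transitivity of $\prec$ and Theorem \ref{thm:Petersen} applied to the bridgeless situation, if $f$ uses all edges outside the triangle part in a way that "folds" onto $P_{10}$, then $G \prec P_{10}$ forces $G = P_{10}$. (Here I would invoke the characterization of fictive edges that the abstract says is proved earlier in Section \ref{sec:main} — since the excerpt ends before that theorem, in the writeup I would cite it as the preceding result.)

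Concretely, I would argue as follows. Let $f$ be a $P_{12}$-coloring of $G$. Since $G$ is bridgeless, by (g) of Lemma \ref{lem:propHcolorings} no edge of $G$ maps to a bridge of $P_{12}$; one then checks $P_{12}$ has no bridges, so this gives no restriction directly, but the even-subgraph pullback property (f) does constrain things. The decisive case split is on the triangle $T$ of $P_{12}$. Either (i) $f^{-1}(\partial_{P_{12}}(a_i))$ realizes a vertex of $G$ for each $i$ — i.e. the triangle's three "external" edges and its three internal edges are all essentially used — in which case the three edges of $T$ together with their geometry force a small rigid subgraph of $G$, and pushing the coloring down the subdivided edge shows the whole of $G$ must be $P_{12}$ itself (every vertex-pattern of $P_{12}$ gets realized exactly once, which by a counting/injectivity argument in the spirit of Theorem \ref{thm:Petersen} pins $G = P_{12}$); or (ii) some edge of $T$ is fictive, and then by the fictive-edge characterization the coloring factors through the graph obtained by deleting that edge and suppressing degree-two vertices, which is $P_{10}$ (the two subdivision vertices get absorbed), giving $P_{10} \prec G$ hence $G \prec P_{10}$ in the comparability sense; Theorem \ref{thm:Petersen} then yields $G = P_{10}$.

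To make case (i) rigorous I would use that $G$ is bridgeless and connected together with the fact that the vertex of $P_{12}$ of the subdivided path adjacent to $T$ has a neighborhood pattern that cannot be "shortcut": an Euler/parity argument on $f^{-1}$ of the small even subgraphs of $P_{12}$ through the triangle shows the preimages of the two subdivision vertices and of $T$ each contribute a distinct realized vertex-pattern to $G$, and since $P_{12}$ has $12$ vertices realizing $12$ distinct incidence patterns, $|V(G)| \geq 12$; combined with the obvious $|V(G)| = |V(P_{12})|$ when the coloring is "pattern-injective", one gets that $f$ is an isomorphism-type map and $G = P_{12}$. I expect the main obstacle to be exactly this rigidity step: ruling out a $P_{12}$-coloring of some \emph{larger} bridgeless cubic graph that uses all of $P_{12}$'s vertex-patterns but is not $P_{12}$ — this is where one genuinely needs the analogue of the argument behind Theorem \ref{thm:Petersen} (that $P_{10}$ is "$\prec$-minimal" among bridgeless cubic graphs) adapted to the two extra subdivision vertices, rather than anything purely combinatorial about matchings.
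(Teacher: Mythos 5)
There is a genuine gap, and it comes at the very first step: you have read the relation $\prec$ in the wrong direction. In this paper $H\prec G$ means that \emph{$G$ admits an $H$-coloring}, i.e.\ there is a map $f:E(G)\to E(H)$ satisfying the vertex condition. So the hypothesis $G\prec P_{12}$ of Theorem \ref{thm:P10P12thm} means that there is a \emph{$G$-coloring of $P_{12}$}, a map $f:E(P_{12})\to E(G)$; it is the small fixed graph $P_{12}$ that is being colored by edges of $G$, and this is what rigidly constrains $G$. Your proposal instead analyzes ``an arbitrary $P_{12}$-coloring $f$ of $G$'' (a map $E(G)\to E(P_{12})$), discusses which edges of $P_{12}$ are fictive, and tries to conclude via Theorem \ref{thm:Petersen}. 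That is the hypothesis of Conjecture \ref{conj:P12conjecture}, not of this theorem, and with that reading the statement you are trying to prove is simply false: Proposition \ref{prop:P12prop}(b) of the paper exhibits infinitely many bridgeless cubic graphs admitting $P_{12}$-colorings, none of which are $P_{10}$ or $P_{12}$. So the argument cannot be repaired within your framework; the case analysis on fictive edges, the step ``$P_{10}\prec G$ hence $G\prec P_{10}$,'' and the counting of realized vertex-patterns all address the wrong object.

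A secondary but related error is the structural description of $P_{12}$: it is not $P_{10}$ with an edge subdivided twice and a triangle attached (a cubic graph has no degree-two ``subdivision vertices''); it is $P_{10}$ with one vertex expanded into a triangle $T$, so that $P_{12}/T=P_{10}$. With the correct direction the paper's proof is short: given a $G$-coloring $f$ of $P_{12}$, the three pairwise adjacent edges $f(e_1),f(e_2),f(e_3)$ that color the triangle $T$ either all meet at one vertex of $G$, in which case contracting $T$ yields a $G$-coloring of $P_{10}$ and Theorem \ref{thm:Petersen} gives $G=P_{10}$; or they form a triangle $T_0$ of $G$, in which case $f$ induces a $G/T_0$-coloring of $P_{12}/T=P_{10}$, so $G/T_0=P_{10}$ and hence $G=P_{12}$.
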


\begin{proof}
Assume that $f$ is a $G$-coloring of $P_{12}$. Consider the triangle $T$ in $P_{12}$. Assume that the edges of $T$ are $e_1, e_2, e_3$. Since these three edges are pairwise adjacent in $P_{12}$, we have that the edges $f(e_1), f(e_2), f(e_3)$ are pairwise adjacent in $G$. We have two cases to consider:

\medskip

Case 1: There is a vertex $v$ of $G$, such that $\partial_G(v)=\{f(e_1), f(e_2), f(e_3)\}$. Observe that in this case the edges of the 3-edge-cut $\partial_{P_{12}}(V(T))$ are colored by $f(e_1), f(e_2), f(e_3)$, respectively. Thus, if we contract $T$ in $P_{12}$, we will get a $G$-coloring of $P_{10}$. Hence, by Theorem \ref{thm:Petersen}, $G=P_{10}$.

\medskip

Case 2: The edges $f(e_1), f(e_2), f(e_3)$ form a triangle $T_0$ in $G$. Observe that in this case the edges of the 3-edge-cut $\partial_{P_{12}}(V(T))$ are colored by the edges of the 3-edge-cut $\partial_{G}(V(T_0))$. Thus, $f$ induces a $G/T_{0}$-coloring of $P_{12}/T=P_{10}$. Hence, by Theorem \ref{thm:Petersen}, $G/T_0=P_{10}$, which implies that $G= P_{12}$. The proof of the theorem is complete.
\end{proof}

\begin{corollary}
\label{cor:P12corollary} Let $G$ be a connected bridgeless cubic graph with $k(G)\leq 4$ and $G\prec P_{12}$. Then $G= P_{12}$.
\end{corollary}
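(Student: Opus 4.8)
The plan is to read this off Theorem \ref{thm:P10P12thm} together with the one external fact that the Petersen graph cannot be covered by four perfect matchings. Since $G$ is a connected bridgeless cubic graph with $G\prec P_{12}$, Theorem \ref{thm:P10P12thm} immediately gives that $G=P_{10}$ or $G=P_{12}$, so the whole task reduces to eliminating the possibility $G=P_{10}$.

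To do that I would invoke the classical fact that $k(P_{10})=5$: the Petersen graph is the standard example showing the bound in Conjecture \ref{conj:Berge} is attained, and in particular its edge-set is not the union of four perfect matchings. (If a self-contained argument is wanted, one can note that any two distinct perfect matchings of $P_{10}$ meet in exactly one edge and the complement of a perfect matching is a pair of disjoint $5$-cycles, from which a short counting argument shows four perfect matchings leave at least one edge uncovered; but citing $k(P_{10})=5$ suffices.) Hence $k(P_{10})=5>4\geq k(G)$, so $G\neq P_{10}$, and therefore $G=P_{12}$.

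There is essentially no obstacle here beyond recording the value $k(P_{10})=5$; the corollary is a direct specialization of Theorem \ref{thm:P10P12thm} to graphs satisfying the hypothesis $k(G)\leq 4$, i.e. to graphs in the class $\mathcal{C}^{(4)}$, which is exactly the class appearing in Conjecture \ref{conj:P12conjecture}. This is also why the corollary is the natural analogue, for $\mathcal{C}^{(4)}$, of Theorem \ref{thm:Petersen}: it says that within $\mathcal{C}^{(4)}$ no connected bridgeless cubic graph other than $P_{12}$ itself can replace $P_{12}$ in Conjecture \ref{conj:P12conjecture}.
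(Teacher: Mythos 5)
Your argument is exactly the intended one: the paper states this corollary without proof as an immediate consequence of Theorem \ref{thm:P10P12thm}, with the case $G=P_{10}$ excluded by the well-known fact $k(P_{10})=5>4$. Your proposal is correct and matches this approach; the cited fact $k(P_{10})=5$ is standard (each edge of $P_{10}$ lies in exactly two of its six perfect matchings, and any two distinct perfect matchings meet in exactly one edge, so any four of them miss an edge).
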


\begin{theorem}\label{thm:S10S12thm}
Let $G$ be a connected cubic graph with $G \prec S_{12}$. Then either $G=S_{10}$ or $G= S_{12}$. 
\end{theorem}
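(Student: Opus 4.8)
The plan is to imitate the proof of Theorem~\ref{thm:P10P12thm} almost verbatim, with $S_{10}$, $S_{12}$ and Theorem~\ref{thm:Sylvester} taking over the roles of $P_{10}$, $P_{12}$ and Theorem~\ref{thm:Petersen}. The structural fact that makes this work is that $S_{12}$ (Figure~\ref{fig:Sylvester12}) possesses a unique contractible triangle $T$, and that $S_{12}/T = S_{10}$, the contracted vertex $[T]$ being precisely the central vertex of $S_{10}$, that is, the unique vertex of $S_{10}$ all of whose incident edges are bridges.

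So suppose $f$ is a $G$-coloring of $S_{12}$ and write $e_1,e_2,e_3$ for the edges of $T$. The edges $f(e_1),f(e_2),f(e_3)$ are pairwise distinct: for the common endpoint $x$ of any two of $e_1,e_2,e_3$, the map $f$ is injective on $\partial_{S_{12}}(x)$ since $f(\partial_{S_{12}}(x))$ is a three-element set $\partial_G(y)$. Being three pairwise adjacent, pairwise distinct edges of the cubic graph $G$, the edges $f(e_1),f(e_2),f(e_3)$ either all meet at a single vertex $v$ of $G$ or form a triangle $T_0$ of $G$; as in Theorem~\ref{thm:P10P12thm}, these are the only possibilities, and they give two cases.

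In Case~1 one checks --- using the injectivity of $f$ on $\partial_{S_{12}}(n)$ for each $n \in V(T)$, exactly as in Theorem~\ref{thm:P10P12thm} --- that $f$ sends the $3$-edge-cut $\partial_{S_{12}}(V(T))$ onto $\partial_G(v)$; hence contracting $T$ turns $f$ into a $G$-coloring of $S_{12}/T = S_{10}$, and Theorem~\ref{thm:Sylvester} yields $G = S_{10}$. In Case~2, each vertex of $T_0$ is incident to the two edges of $T_0$ and to exactly one further edge, so $T_0$ is contractible and $G/T_0$ is a cubic graph; one checks similarly that $f$ maps $\partial_{S_{12}}(V(T))$ onto $\partial_G(V(T_0))$, so $f$ descends to a $(G/T_0)$-coloring $\bar f$ of $S_{12}/T = S_{10}$, and Theorem~\ref{thm:Sylvester} gives $G/T_0 = S_{10}$.

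Passing from $G/T_0 = S_{10}$ to $G = S_{12}$ is the one genuinely new point compared with Theorem~\ref{thm:P10P12thm}: there this step is free because $P_{10}$ is vertex-transitive, whereas $S_{10}$ is not, so one must identify which vertex of $S_{10}$ is expanded to produce $G$. Here $\bar f$ maps $\partial_{S_{12}/T}([T])$ onto $\partial_{G/T_0}([T_0])$, and the three edges of $\partial_{S_{12}/T}([T])$ are bridges of $S_{12}/T = S_{10}$; hence, by part~(g) of Lemma~\ref{lem:propHcolorings} applied to $\bar f$, the three edges at $[T_0]$ are bridges of $G/T_0 \cong S_{10}$, and since $S_{10}$ has a unique vertex all of whose incident edges are bridges, $[T_0]$ is its central vertex. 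Thus $G$ is obtained from $S_{10}$ by expanding its central vertex into a triangle, so $G = S_{12}$. The step I expect to require the most care --- and thus the main obstacle --- is the claim used in Case~2 that $\bar f$ is well defined, i.e.\ that no vertex of $S_{12}$ outside $V(T)$ is mapped by $f$ onto a vertex of $T_0$. I would settle this by analysing $G_1 := f^{-1}(E(T_0))$: a vertex of $S_{12}$ has degree $2$ in $G_1$ if $f$ maps it onto a vertex of $T_0$ and degree $0$ otherwise, so $G_1$ is a disjoint union of cycles; now $f^{-1}$ of each edge of $T_0$ is a matching (part~(a) of Lemma~\ref{lem:propHcolorings}), the two parallel edges inside any petal of $S_{12}$ are necessarily sent to two distinct edges of $G$ with the same pair of endpoints while no two edges of $T_0$ are parallel, and $S_{12}$ has no cycle of length greater than $3$; together these force the only cycle of $G_1$ to be $T$ itself, so exactly the vertices of $V(T)$ are mapped onto $V(T_0)$ and $\bar f$ is well defined. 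With this in place both cases close and the proof is complete.
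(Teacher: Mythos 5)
Your overall strategy is the paper's: split into the two cases, reduce to $S_{10}$ via Theorem \ref{thm:Sylvester}, and in Case 2 identify the contracted vertex of $G/T_0$ as the central vertex of $S_{10}$ using the bridges. But there is a genuine gap in Case 2: your justification that $T_0$ is contractible is not valid. You argue that each vertex of $T_0$ is incident to the two edges of $T_0$ and to exactly one further edge; this is true of \emph{every} triangle of \emph{every} cubic graph, including non-contractible ones --- for instance the triangles inside the end-blocks of $S_{10}$ and $S_{12}$ themselves, where the ``further edges'' at two of the three vertices are one and the same parallel edge, so contracting the triangle creates a loop and the quotient is only a pseudo-graph. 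Contractibility of $T_0$ is exactly what must be established \emph{before} you may form the cubic graph $G/T_0$, define $\bar f$, and invoke Theorem \ref{thm:Sylvester}. The paper gets it from part (g) of Lemma \ref{lem:propHcolorings} applied to $f$ itself (not to $\bar f$): the three edges of $\partial_{S_{12}}(V(T))$ are bridges of $S_{12}$, and their $f$-images are precisely the third edges at the three vertices of $T_0$; by (g) these images are bridges of $G$, hence none of them can join two vertices of $T_0$ (such an edge would lie on a $2$-cycle with an edge of $T_0$), so they are three distinct edges leaving $V(T_0)$. Thus $T_0$ is contractible and $\partial_G(V(T_0))$ consists of three bridges; at that point your identification of the new vertex with the central vertex of $S_{10}$ is immediate, and the later application of (g) to $\bar f$ becomes superfluous.

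The same omission affects your well-definedness argument for $\bar f$, which is otherwise a legitimate point (the paper passes over it with an ``observe'', so raising it is a plus). Your listed facts do rule out a digon of $S_{12}$ lying in $G_1=f^{-1}(E(T_0))$, since its two parallel edges would be sent to two distinct parallel edges of $G$ both belonging to $E(T_0)$, and no two edges of $T_0$ are parallel. But they do not rule out a triangle inside an end-block of $S_{12}$ (which uses only \emph{one} of the two parallel edges) being mapped onto $E(T_0)$: excluding that case needs, again, that no edge of $G$ outside $E(T_0)$ joins two vertices of $T_0$ --- equivalently, that the third edges at the three vertices of $T_0$ are pairwise distinct --- which is exactly the contractibility statement you did not prove. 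Once (g) is applied to $f$ as above, both the contraction step and the well-definedness check go through, and the remainder of your argument coincides with the paper's proof.
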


\begin{proof} Assume that $f$ is a $G$-coloring of $S_{12}$. Consider the central triangle $T$ in $S_{12}$, that is, the unique triangle $T$ such that all edges of $\partial_{S_{12}}(V(T))$ are bridges. Assume that the edges of $T$ are $e_1, e_2, e_3$. Since these three edges are pairwise adjacent in $S_{12}$, we have that the edges $f(e_1), f(e_2), f(e_3)$ are pairwise adjacent in $G$. We have two cases to consider:

\medskip

Case 1: There is a vertex $v$ of $G$, such that $\partial_G(v)=\{f(e_1), f(e_2), f(e_3)\}$. Observe that in this case the edges of the 3-edge-cut $\partial_{S_{12}}(V(T))$ are colored by $f(e_1), f(e_2), f(e_3)$, respectively. Thus, if we contract $T$ in $S_{12}$, we will get a $G$-coloring of $S_{10}$. Hence, by Theorem \ref{thm:Sylvester}, $G=S_{10}$.

\medskip

Case 2: The edges $f(e_1), f(e_2), f(e_3)$ form a triangle $T_0$ in $G$. Observe that in this case the edges of the 3-edge-cut $\partial_{S_{12}}(V(T))$ are colored by the edges of the 3-edge-cut $\partial_{G}(V(T_0))$. Moreover, since all edges of $\partial_{S_{12}}(V(T))$ are bridges, by (g) of Lemma \ref{lem:propHcolorings}, we have that the three edges of $\partial_{G}(V(T_0))$ are bridges. This, in particular, means that $T_0$ is a contractible triangle in $G$. Observe that $f$ induces a $G/T_{0}$-coloring of $S_{12}/T=S_{10}$. Hence, by Theorem \ref{thm:Sylvester}, $G/T_0=S_{10}$. Moreover, the new vertex $v_{T_0}$ of $G/T_0$ corresponding to $T_0$, is incident to three bridges. Hence $v_{T_0}$ is the unique cut-vertex of $G/T_0=S_{10}$ that is incident to three bridges. This means that $G=S_{12}$. The proof of the theorem is complete.
\end{proof}

\begin{corollary}
\label{cor:S12corollary} Let $G$ be a connected cubic graph with a perfect matching such that $G\prec S_{12}$. Then $G= S_{12}$.
\end{corollary}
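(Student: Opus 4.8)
The plan is to deduce the corollary directly from Theorem \ref{thm:S10S12thm} together with the single extra observation that $S_{10}$ has no perfect matching. By Theorem \ref{thm:S10S12thm}, a connected cubic graph $G$ with $G \prec S_{12}$ is either $S_{10}$ or $S_{12}$, so the entire task reduces to excluding the possibility $G = S_{10}$ under the hypothesis that $G$ contains a perfect matching.

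To show that $S_{10}$ has no perfect matching, I would exploit its block structure as already described in Section \ref{sec:intro}: $S_{10}$ has three end-blocks, each a non-contractible triangle in which two vertices are joined by a pair of parallel edges and the third vertex is incident to a bridge. All three of these bridges are incident to the unique cut-vertex $u$ of $S_{10}$ (the central vertex adjacent to the three end-blocks). Suppose $S_{10}$ had a perfect matching $M$. Then $M$ covers $u$ by exactly one edge, so $M$ contains at most one of the three bridges; hence for at least two end-blocks $B$ the bridge leaving $B$ does not lie in $M$, which forces all three vertices of $B$ to be covered by edges of $M$ lying inside $B$ — impossible, since $B$ has an odd number of vertices. (Equivalently, $S_{10} - u$ has three odd components while $\{u\}$ has size one, so Tutte's condition fails.) Therefore $S_{10}$ has no perfect matching.

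Since $G$ is assumed to contain a perfect matching, we conclude $G \neq S_{10}$, and Theorem \ref{thm:S10S12thm} then gives $G = S_{12}$, as desired. I do not expect any genuine obstacle here; the only step needing a little care is the non-existence of a perfect matching in $S_{10}$. It may also be worth recording, although it is not logically needed for the proof, that $S_{12}$ itself does have a perfect matching — for instance, the three bridges together with one edge from each doubled pair in the end-blocks — so that the class $\mathcal{C}_p$, and hence the statement of the corollary, is not vacuous.
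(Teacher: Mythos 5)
Your proof is correct and follows exactly the route the paper intends: the corollary is immediate from Theorem~\ref{thm:S10S12thm} once one notes that $S_{10}$ fails Tutte's condition at its unique cut-vertex (three odd components after its deletion) and hence has no perfect matching. Your verification of that fact, and the side remark that $S_{12}$ does have a perfect matching, are both accurate.
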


In the next statement, we discuss the following problem: assume that a bridgeless cubic $G$ graph admits a $P_{10}$-coloring such that one of the edges of $P_{10}$ is not used. What can we say about $G$? We discuss the related problem for Conjecture \ref{conj:P12conjecture} afterwards. Let us note that the following statement is proved by Eckhard Steffen.

\begin{proposition}
\label{prop:P10prop} Let $G$ be a bridgeless cubic that admits a $P_{10}$-coloring $f$, such that for an edge $e$ of $P_{10}$, we have: $f^{-1}(e)=\emptyset$. Then $\chi'(G)=3$.
\end{proposition}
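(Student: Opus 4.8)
The plan is to exploit the structure of $P_{10}$ together with its well-known symmetries. Recall that $P_{10}$ is edge-transitive, so it suffices to understand what happens when any single fixed edge $e$ is missing from the image of $f$. The starting observation is that if $f^{-1}(e) = \emptyset$, then $f$ is really a coloring of $G$ by the edges of the graph $P_{10} - e$ (with the vertex condition of the $H$-coloring still intact at every vertex of $G$, using only vertices $y$ of $P_{10}$ whose incident triple $\partial_{P_{10}}(y)$ avoids $e$). First I would identify exactly which vertices of $P_{10}$ can still serve as valid "palette vertices": a vertex $y$ of $P_{10}$ is usable precisely when $e \notin \partial_{P_{10}}(y)$, i.e. $y$ is not one of the two endpoints of $e$. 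So for every vertex $x$ of $G$, the triple $f(\partial_G(x))$ equals $\partial_{P_{10}}(y)$ for some $y$ among the eight vertices of $P_{10}$ not incident to $e$.

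Next I would use a counting/parity argument on the edge $e = y_1 y_2$ of $P_{10}$ to see how the two edges of $P_{10}$ "opposite" to $e$ at $y_1$ and at $y_2$ behave. Since no vertex of $G$ maps its star onto $\partial_{P_{10}}(y_1)$ or $\partial_{P_{10}}(y_2)$, the four edges of $P_{10}$ incident to $\{y_1,y_2\}$ other than $e$ can only be used at vertices of $G$ that see them together with edges not in $\{$the rest of $\partial(y_1), \partial(y_2)\}$. Tracing this through the Petersen graph I expect to find that $P_{10}-e$ deformation-retracts, for the purposes of being a valid color-palette, onto a proper subgraph; concretely, $P_{10}$ minus an edge and its two endpoints is a graph on eight vertices, and I would check that the only sub-configuration that can carry all the vertex-stars of $G$ forces the palette down to three pairwise "color classes". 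The cleanest way to package this: show that the usable vertices of $P_{10}$, namely the eight not incident to $e$, can be partitioned so that the edges appearing in their stars fall into exactly three parts, each part a matching of $P_{10}$; then pulling these three matchings back via $f$ and invoking parts (a) and (c) of Lemma~\ref{lem:propHcolorings} gives three perfect matchings of $G$ that partition $E(G)$, i.e. a proper $3$-edge-coloring, so $\chi'(G)=3$.

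Concretely, here is the mechanism I would aim for. Fix $e=y_1y_2$. Let $N(y_1)=\{y_1',y_1'',y_2\}$ and $N(y_2)=\{y_2',y_2'',y_1\}$. The edges of $P_{10}$ available to $f$ are $E(P_{10})\setminus\{e\}$, but the constraint that every $x\in V(G)$ has $f(\partial_G(x))=\partial_{P_{10}}(y)$ with $y\notin\{y_1,y_2\}$ means the edges $y_1y_1'$, $y_1y_1''$ are always used \emph{together} with the third edge at $y_1'$ resp.\ $y_1''$, and never as part of a triple centered at $y_1$. I would now argue that in the Petersen graph this rigidity propagates: the subgraph spanned by all triples $\partial_{P_{10}}(y)$ over the eight usable $y$ is exactly $P_{10}-e$, and $P_{10}-e$ is $3$-edge-colorable (it is bridgeless-cubic minus an edge, hence has maximum degree $3$ with only two vertices of degree $2$, and one checks directly it is Class~1). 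Fix a proper $3$-edge-coloring $c$ of $P_{10}-e$ with color classes $M_1,M_2,M_3$; then $f^{-1}(M_1),f^{-1}(M_2),f^{-1}(M_3)$ are matchings covering $E(G)$ by Lemma~\ref{lem:propHcolorings}(a), and each is perfect by Lemma~\ref{lem:propHcolorings}(c) applied to a perfect matching of $P_{10}$ extending $M_i$ --- or more directly, since the three classes partition $E(G)$ and $G$ is cubic, each class meets every vertex exactly once. Hence $\chi'(G)\le 3$, and since $G$ is bridgeless cubic it is not $K_4$-free-trivial, so $\chi'(G)=3$.

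The main obstacle I anticipate is the middle step: verifying rigorously that $f^{-1}(e)=\emptyset$ really forces the palette to live inside a $3$-edge-colorable subgraph, rather than merely inside $P_{10}-e$ abstractly. One has to rule out the a priori possibility that $f$ uses edges of $P_{10}-e$ in a way that still requires a genuine (non-$3$-colorable) pattern, which comes down to the combinatorial fact that every edge of $P_{10}$ not incident to $e$ lies in a triple $\partial(y)$ with $y$ usable, and that the incidence structure of these usable triples is precisely that of the cubic graph $P_{10}-e$ with its two degree-$2$ vertices suppressed --- a graph which is $3$-edge-colorable. Establishing this last identification cleanly, using edge-transitivity of $P_{10}$ to fix $e$ once and for all and then a short case analysis (or an explicit $3$-edge-coloring exhibited on a drawing), is where the real work lies; everything after it is a direct application of Lemma~\ref{lem:propHcolorings}.
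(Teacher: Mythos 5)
Your argument collapses at the step you yourself identified as the crux: the claim that $P_{10}-e$ is $3$-edge-colorable is false. The Petersen graph minus any edge still has chromatic index $4$. Indeed, suppose $M_1,M_2,M_3$ were a proper $3$-edge-coloring of $P_{10}-e$ with $e=y_1y_2$. If the same class, say $M_1$, misses both $y_1$ and $y_2$, then $M_1\cup\{e\}$ is a perfect matching of $P_{10}$ and its complement, a $2$-factor of $P_{10}$, would be partitioned into the two matchings $M_2,M_3$; but every $2$-factor of $P_{10}$ is a union of two $5$-cycles, which cannot be split into two matchings. If instead $y_1$ and $y_2$ are missed by different classes, then the third class $M_3$ is a perfect matching of $P_{10}$ avoiding $e$, and $M_1\cup M_2$ would have to cover the complementary $2$-factor minus $e$; that set still contains an intact $5$-cycle (the one not through $e$), again impossible. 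So your ``deformation of the palette'' cannot be established, because the palette $P_{10}-e$ is genuinely Class~2, and the strategy of pulling back a $3$-edge-coloring of the palette does not get off the ground. (A secondary slip: even granting such classes, their $f$-preimages being perfect would need each $M_i$ to extend to a perfect matching of $P_{10}$, which you do not verify; and the closing sentence about $K_4$ is not a lower-bound argument, though $\chi'(G)\geq 3$ is trivial for cubic $G$.)

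The correct mechanism, and the one the paper uses, is pointwise rather than global: choose two perfect matchings $M_1,M_2$ of $P_{10}$ with $M_1\cap M_2=\{e\}$ (every edge of $P_{10}$ lies in exactly two perfect matchings, and distinct perfect matchings of $P_{10}$ meet in exactly one edge). By Lemma~\ref{lem:propHcolorings}(c), $f^{-1}(M_1)$ and $f^{-1}(M_2)$ are perfect matchings of $G$, and since $f^{-1}(e)=\emptyset$ they are edge-disjoint; in a cubic graph two edge-disjoint perfect matchings leave a third perfect matching as complement, so $\chi'(G)=3$. Note that this does not contradict $\chi'(P_{10}-e)=4$: the three resulting color classes of $G$ are not required to map onto three matchings of $P_{10}-e$, which is exactly the flexibility your approach gives up.
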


\begin{proof} (\cite{SteffenP}) Assume that the edge $e$ of $P_{10}$ is not used in a $P_{10}$-coloring $f$ of $G$. We have that there exist two perfect matchings $M_1$ and $M_2$ of $P_{10}$, such that $M_1\cap M_2=\{e\}$. By (c) of Lemma \ref{lem:propHcolorings}, we have that $f^{-1}(M_1)$ and $f^{-1}(M_2)$ are perfect matchings in $G$. Since the edge $e$ is not used in $f$, we have that the perfect matchings are edge-disjoint in $G$. Thus $\chi'(G)=3$. The proof is complete.
\end{proof}

Next, we characterize the edges of $P_{12}$, which can be fictive in a $P_{12}$-coloring of a graph with $k(G)\leq 4$.

\begin{proposition} Let $G$ be a bridgeless cubic graph and let $T$ be the unique triangle of $P_{12}$.
\begin{enumerate}[(a)]
    \item If $G$ admits a $P_{12}$-coloring $f$, such that for an edge $e\notin T$ of $P_{12}$, we have that $f^{-1}(e)=\emptyset$, then $\chi'(G)=3$.
    
    \item There exist infinitely many bridgeless cubic graphs $G$ with $k(G)=4$, such that $G$ admits a $P_{12}$-coloring $f$, such that for any edge $e\in T$, we have: $f^{-1}(e)=\emptyset$.
\end{enumerate}
\label{prop:P12prop}
\end{proposition}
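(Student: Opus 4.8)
For part (a) I would argue exactly as in Proposition~\ref{prop:P10prop}, the key point being that every edge $e\in E(P_{12})\setminus E(T)$ is the intersection of two perfect matchings of $P_{12}$. Granting this, choose $M_1,M_2$ with $M_1\cap M_2=\{e\}$; by (c) of Lemma~\ref{lem:propHcolorings} the sets $f^{-1}(M_1)$ and $f^{-1}(M_2)$ are perfect matchings of $G$, and $f^{-1}(M_1)\cap f^{-1}(M_2)=f^{-1}(M_1\cap M_2)=f^{-1}(e)=\emptyset$, so $G$ has two disjoint perfect matchings and $\chi'(G)=3$. To establish the claim I would use the contraction $P_{12}/T=P_{10}$ and the induced bijection $\phi\colon E(P_{12})\setminus E(T)\to E(P_{10})$, splitting into two cases. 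If $\phi(e)$ is not incident to the contracted vertex $v_0$ (the twelve ``inner'' edges), then $\phi(e)$ lies in exactly two perfect matchings $N_1,N_2$ of $P_{10}$ with $N_1\cap N_2=\{\phi(e)\}$ (a standard property of the Petersen graph); these match $v_0$ through two distinct edges, and their natural lifts to $P_{12}$ are perfect matchings meeting exactly in $e$. The three remaining, ``external'', edges are handled by listing the eight perfect matchings of $P_{12}$ and exhibiting one suitable pair, the automorphism of $P_{12}$ permuting the corners of $T$ taking care of the others.

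For part (b) I would first record a reduction: $G$ has a $P_{12}$-coloring in which every edge of $T$ is fictive if and only if $G$ has a $P_{10}$-coloring realizing no star of the vertex $v_0$ of $P_{10}=P_{12}/T$ (no vertex $x$ of $G$ has $g(\partial_G x)=\partial_{P_{10}}(v_0)$). Indeed, if $T$ is entirely fictive then each star $f(\partial_G x)$ avoids $E(T)$, so it is the star of a vertex outside $T$, and composing with $\phi$ produces the required $P_{10}$-coloring; it cannot realize $\partial_{P_{10}}(v_0)$ because $\phi^{-1}(\partial_{P_{10}}(v_0))$ is the $3$-edge-cut $\partial_{P_{12}}(V(T))$, which is not a vertex-star. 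The converse is the same computation read backwards.

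Next I would observe that $k(P_{12})=4$: replacing a vertex by a triangle leaves the chromatic index unchanged, so $\chi'(P_{12})=\chi'(P_{10})=4$ and $k(P_{12})\ge4$, while a direct check shows that four of the eight perfect matchings of $P_{12}$ already cover $E(P_{12})$. Hence, by (d) of Lemma~\ref{lem:propHcolorings}, any bridgeless cubic $G$ with $P_{12}\prec G$ has $k(G)\le4$, so $k(G)=4$ whenever $\chi'(G)=4$. Then I would build the family: let $G_0$ consist of two disjoint copies of $P_{10}-v_0$ joined by a $3$-edge-cut that matches corresponding ex-neighbours of $v_0$, and let $G_{k+1}$ be obtained from $G_k$ by replacing a vertex with a triangle; each $G_k$ is cubic. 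Colouring each copy of $P_{10}-v_0$ by an isomorphism with $P_{10}$ and the three cut-edges by the three edges of $P_{10}$ incident to $v_0$ gives $G_0$ a $P_{10}$-coloring realizing exactly the nine stars at vertices $\ne v_0$; since triangle-expansion preserves this feature, every $G_k$ has a $P_{10}$-coloring not realizing $\partial_{P_{10}}(v_0)$, hence, by the reduction, a $P_{12}$-coloring with $T$ fictive, so $P_{12}\prec G_k$, and in particular $G_k$ is bridgeless by (h) of Lemma~\ref{lem:propHcolorings}. Moreover $\chi'(G_0)=4$: the colour classes of a hypothetical proper $3$-edge-colouring are perfect matchings, so each meets the $3$-edge-cut in an odd number of edges, forcing one edge of each colour in the cut; restoring $v_0$ on either side would then yield a proper $3$-edge-colouring of $P_{10}$, a contradiction. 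As triangle-expansion preserves $\chi'$, $\chi'(G_k)=4$ and therefore $k(G_k)=4$ for all $k$; and $|V(G_k)|=18+2k$, so the $G_k$ are pairwise non-isomorphic.

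The part I expect to cost the most effort is (b): setting up the reduction through $\phi$ and, above all, verifying $k(P_{12})=4$, which is exactly what makes the upper bound on $k(G_k)$ available. Once these are in hand, the checks that $G_0$ carries the desired Petersen coloring, that $\chi'(G_0)=4$, and that everything propagates under triangle-expansion are routine; part (a) is the argument of Proposition~\ref{prop:P10prop} together with the finite matching computations in $P_{12}$.
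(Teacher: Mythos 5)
Your part (a) is correct and is essentially the paper's own argument: the paper also reduces to finding, for each $e\notin E(T)$, two perfect matchings of $P_{12}$ meeting exactly in $e$, handling edges off the cut $\partial_{P_{12}}(V(T))$ by lifting the two Petersen perfect matchings through the corresponding edge of $P_{10}$, and the three cut edges by an explicit pair (one matching meeting the cut once, one meeting it three times); the pair you leave to a finite check does exist, so this part matches the paper.

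Part (b) is also correct but follows a somewhat different route. The paper takes an arbitrary $3$-edge-colorable cubic graph $H$, replaces every vertex by a copy of $P_{10}-v$, colors the $H$-edges by the three cut edges of $P_{12}$ and each copy by its own edges: the lower bound $k(G)\geq 4$ comes from $P_{10}-v$ not being $3$-edge-colorable, the upper bound from (d) of Lemma \ref{lem:propHcolorings} (with $k(P_{12})\leq 4$ used tacitly, as it already is in Section \ref{sec:hereditary}), and the coloring avoids $T$ by construction. Your base graph $G_0$ is exactly the smallest member of that family (take $H$ to be the two-vertex cubic graph with three parallel edges), but you generate the infinite family by triangle expansions rather than by varying $H$, and you interpose an explicit equivalence between ``$P_{12}$-colorings with all of $T$ fictive'' and ``$P_{10}$-colorings realizing no star of the contracted vertex $v_0$''. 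What your route buys: it isolates and actually verifies $k(P_{12})=4$, which the paper never states, and the reduction through $\phi$ is a clean reformulation that makes the role of $v_0$ transparent. What it costs: you additionally need the (routine, and true, but unproved in your sketch) facts that triangle expansion preserves the special $P_{10}$-coloring — color the triangle edge opposite the corner on $e_i$ with the third color at the realized vertex — and that it preserves the chromatic index; the paper's construction avoids both and yields a broader family in one step. No genuine gap in either part.
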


\begin{proof} (a) We follow the approach of the proof of Proposition \ref{prop:P10prop}, that is, we find two perfect matchings of $P_{12}$ whose intersection is $e$. Assume that $e\notin T$. 

If $e\notin \partial_{P_{12}}(V(T))$, then we have that there exist two perfect matchings $M_1$ and $M_2$ of $P_{10}$, such that $M_1\cap M_2=\{e\}$. Now, these two perfect matchings can be uniquely extended to perfect matchings $N_1$ and $N_2$ of $P_{12}$. Observe that $N_1\cap N_2=\{e\}$.

On the other hand, if $e\in \partial_{P_{12}}(V(T))$, then one can find a perfect matching $N_1$ intersecting $\partial_{P_{12}}(V(T))$ in a single edge and a perfect matching $N_2$ intersecting $\partial_{P_{12}}(V(T))$ in three edges, such that $N_1\cap N_2=\{e\}$.

(b) Start with arbitrary 3-edge-colorable cubic graph $H$ and consider the cubic graph $G$ obtained from $H$ by replacing every vertex of $G$ with $P_{10}-v$. Since $P_{10}-v$ is not 3-edge-colorable, we have that $G$ is not $3$-edge-colorable, hence $k(G)\geq 4$. Let us show that we have equality here. Consider the three edges incident to $v$, and let it be our colors in a 3-edge-coloring of $H$. Now, color the remaining copies of $P_{10}-v$ in $G$ by edges of $P_{10}-v$, so that each edge is colored with its copy. As a result, we get a $P_{12}$-coloring of $G$. Thus, by (d) of Lemma \ref{lem:propHcolorings}, we have $k(G)\leq 4$. Hence $k(G)=4$. Moreover, in the $P_{12}$-coloring of $G$ the edges of $T$ are not used. The proof is complete.
\end{proof}

In the final part of the paper we establish some connections among the discussed conjectures.

\begin{theorem} Conjecture \ref{conj:S12conjecture} implies Conjecture \ref{conj:S10Conjecture}.
\end{theorem}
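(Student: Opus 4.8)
The plan is to combine two facts. The first is the (unconditional) containment $S_{10}\prec S_{12}$. Let $T$ be the triangle of $S_{12}$; as observed in the proof of Theorem~\ref{thm:S10S12thm}, $S_{12}/T=S_{10}$, so contraction of $T$ gives a canonical bijection between $E(S_{10})$ and $E(S_{12})\setminus E(T)$. I would define $\pi\colon E(S_{12})\to E(S_{10})$ to be this bijection off $T$, and send the three edges of $T$ to the three edges of $S_{10}$ incident to the contracted vertex. One then checks that there is exactly one way to do the latter so that $\pi(\partial_{S_{12}}(x))$ is a vertex-star of $S_{10}$ for each of the three vertices $x$ of $T$: the star at such an $x$ must contain the $\pi$-image of the unique bridge of $S_{12}$ leaving $T$ at $x$, and this pins everything down, while at every vertex outside $T$ the star condition is automatic. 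Hence $\pi$ is an $S_{10}$-coloring of $S_{12}$. Now if $G$ has a perfect matching, Conjecture~\ref{conj:S12conjecture} gives $S_{12}\prec G$, and transitivity of $\prec$ yields $S_{10}\prec G$.

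It remains to treat a cubic graph $G$ without a perfect matching. By Petersen's theorem $G$ then has a bridge, and I would induct on $|V(G)|$. Fix a bridge $e=uv$, write $G-e=A\sqcup B$ with $u\in A$, $v\in B$, and let $D$ denote the triangle-with-one-edge-doubled graph (the fragment that occurs three times in $S_{10}$); it has a single vertex of degree two. Cap $A$ by joining $u$ to the degree-two vertex of a fresh copy of $D$, producing a cubic graph $\widehat A$, and likewise $\widehat B$. The gluing step is: $S_{10}\prec\widehat A$ and $S_{10}\prec\widehat B$ together imply $S_{10}\prec G$. Given $S_{10}$-colorings $f_A,f_B$ of $\widehat A,\widehat B$, part~(g) of Lemma~\ref{lem:propHcolorings} forces each of them to send its new pendant bridge onto a bridge of $S_{10}$; since $\mathrm{Aut}(S_{10})$ permutes the three bridges of $S_{10}$ transitively, I can replace $f_B$ by $\sigma\circ f_B$ with a suitable $\sigma\in\mathrm{Aut}(S_{10})$ so that the two pendant-bridge colors agree, and then defining $f$ to be $f_A$ on $E(A)$, $\sigma\circ f_B$ on $E(B)$, and that common color on $e$ gives an $S_{10}$-coloring of $G$ (the star condition need only be rechecked at $u$ and $v$, where it is inherited from $\widehat A$ and $\widehat B$, using that an automorphism of $S_{10}$ carries vertex-stars to vertex-stars).

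If some bridge of $G$ has both sides of size at least five, then both $\widehat A$ and $\widehat B$ are smaller than $G$ and induction finishes the job. Otherwise every bridge cuts off exactly a copy of $D$; then the bridge tree of $G$ is a star, so $G$ is a $2$-edge-connected core $C$ together with $k\ge 3$ pendant copies of $D$ (and $G=S_{10}$, with $S_{10}\prec G$ trivially, when $C$ is a single vertex). In this remaining case I would color $G$ directly: send each pendant copy of $D$ together with its bridge isomorphically onto one of the three arms of $S_{10}$, which makes each port of $C$ (an attachment vertex of a pendant $D$) map to a vertex of $S_{10}$ that is incident to a bridge; the colors then required on the edges of $C$ are exactly those of a suitable ``$S_{10}$-precoloring'' of $C$ in which the ports map to the four bridge-incident vertices of $S_{10}$, and producing such a precoloring is the heart of the matter.

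The last case is where I expect the real work to lie. The first two ingredients are routine, and the gluing lemma is straightforward once one notices that the diamond cap is precisely what makes the symmetry of $S_{10}$ on its three arms usable. But for graphs that are a $2$-edge-connected core with pendant diamonds there is no size reduction, and coloring the core so that its ports behave correctly is essentially equivalent to the original problem for such graphs; one cannot even reduce it to $3$-edge-coloring the core, since a $2$-edge-connected subcubic graph with degree-two vertices need not be $3$-edge-colorable (already $K_4$ with one edge subdivided is not). Making this direct argument work, exploiting the full structure of $S_{10}$ rather than only its three bridges to route the ports through the arms, is the main obstacle.
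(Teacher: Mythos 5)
Your first two ingredients are fine: the map contracting the central triangle does give $S_{10}\prec S_{12}$ (this is exactly how the paper uses ``$S_{12}$ has an $S_{10}$-coloring''), so the perfect-matching case follows from Conjecture~\ref{conj:S12conjecture} by transitivity, and your gluing lemma across a bridge (using (g) of Lemma~\ref{lem:propHcolorings} and the transitive action of the automorphisms of $S_{10}$ on its three bridges) is correct. But the proof is not complete, and you say so yourself: the bridge-decomposition induction stalls precisely on graphs that are a $2$-edge-connected core with pendant copies of $D$, and for those you have no argument --- indeed at that point your plan no longer invokes Conjecture~\ref{conj:S12conjecture} at all, so you are left trying to prove an instance of the Sylvester conjecture directly, which you correctly identify as ``essentially equivalent to the original problem''. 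That unresolved case is the entire content of the theorem, so this is a genuine gap rather than a routine detail.

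The missing idea, which is how the paper handles every perfect-matching-free $G$ uniformly (no bridge surgery, no induction), is to apply the hypothesis to an auxiliary graph rather than to pieces of $G$: choose a \emph{smallest} set $U\subseteq V(G)$ such that replacing the vertices of $U$ by triangles yields a cubic graph $H$ with a perfect matching (replacing all vertices works, since $E(G)$ is then a perfect matching, so such a $U$ exists). Conjecture~\ref{conj:S12conjecture} gives an $S_{12}$-coloring $f$ of $H$, and minimality of $U$ is used twice: if some new triangle were colored by a vertex-star of $S_{12}$, contracting it would give a smaller $U$ (via (c) of Lemma~\ref{lem:propHcolorings}); and if some new triangle were colored by an end-block triangle $\{a_1,a_2,a_3\}$, then pulling back a perfect matching of $S_{12}$ through $a$ and $a_3$ meets that triangle's $3$-cut in one edge, and contracting again contradicts minimality. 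Hence all new triangles are colored by the central triangle $\{x,y,z\}$; since $\{x,y,z\}$ is an even subgraph, recoloring $x,y,z$ by the bridges $a,b,c$ and contracting the new triangles (and the central triangle of $S_{12}$) produces an $S_{10}$-coloring of $G$. Without this (or some substitute for your core-with-pendant-diamonds case), your argument does not establish the implication.
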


\begin{proof} Assume that Conjecture \ref{conj:S12conjecture} is true. We claim that any cubic graph $G$ admits an $S_{10}$-coloring. In this proof, we will assume the following notation for the edges of $S_{12}$: the three bridges of $S_{12}$ are denoted by $a, b, c$, the edges of the unique contractible triangle of $S_{12}$ are denoted by $x,y,z$, such that $x$ and $a$, $y$ and $b$, $z$ and $c$ are opposite edges. Finally, the edges of the end-block containing a vertex incident to $a$ have the following labels: the edges incident to $a$ are $a_1$ and $a_2$, and the parallel edges are $a_3$ and $a_4$. Similarly, we label other edges by $b_1, b_2, b_3, b_4$ and $c_1, c_2, c_3, c_4$.

Let $G$ be a cubic graph. If $G$ contains a perfect matching, then it has an $S_{12}$-coloring. Since $S_{12}$ has an $S_{10}$-coloring, we have the statement in this case. Thus, without loss of generality, we can assume that $G$ does not contain a perfect matching.

Consider the graph $G_{\Delta}$ obtained from $G$ by replacing all vertices of $G$ with triangles. Observe that $G_{\Delta}$ contains a perfect matching. An example of such a matching would be $E(G)$.

Thus, there exists a smallest subset $U\subseteq V(G)$, such that if we replace all vertices of $U$ with triangles, we will get a cubic graph $H$ containing a perfect matching. 

By Conjecture \ref{conj:S12conjecture}, $H$ admits an $S_{12}$-coloring $f$. Now, we claim that all triangles of $H$ corresponding to vertices of $U$ are colored with triangles of $S_{12}$.

Assume the opposite, that is, there is a triangle $T$ corresponding to a vertex of $H$, such that $f(E(T))=\partial_{S_{12}}(v)$ for some vertex $v$ of $S_{12}$. Consider the graph $H'$ obtained from $H$ by contracting $T$. Observe that the resulting graph $H'$ still has an $S_{12}$-coloring, hence by (c) of Lemma \ref{lem:propHcolorings} it contains a perfect matching. However, this violates the definition of the set $U$, since we found a smaller subset of vertices, whose replacement with triangles was leading to a cubic graph containing a perfect matching.

Now, all triangles of $H$ corresponding to vertices of $U$ are colored with triangles of $S_{12}$. Let us show that all these triangles corresponding to vertices of $U$ are colored with the central triangle of $S_{12}$, that is the only contractible triangle of $S_{12}$.

On the opposite assumption, assume that $T$, one of these triangles, is colored with other triangles of $S_{12}$. Without loss of generality, we can assume that the edges of this triangle of $S_{12}$ are $a_1, a_2, a_3$. Thus the set of edges leaving $T$, are colored with $a$ and $a_4$. Two of them are colored with $a_4$, and one is colored with $a$.

Let $M$ be a perfect matching of $S_{12}$ containing the edges $a$ and $a_3$. By (c) of Lemma \ref{lem:propHcolorings}, we have that $F=f^{-1}(M)$ is a perfect matching in $H$. Now, observe that $|F\cap \partial_H(V(T))|=1$. Consider the cubic graph $H''$ obtained from $H$ by contracting $T$. Observe that $F\backslash (F \cap  E(T))$ is a perfect matching of $H''$. This violates the definition of the set $U$, since we found a smaller subset of vertices, whose replacement with triangles was leading to a cubic graph containing a perfect matching.

Thus, all triangles of $H$ corresponding to $U$ are colored with the edges $x, y, z$ of the central triangle of $S_{12}$.

Observe that $G$ can be obtained from $H$ by contracting all the triangles corresponding to $U$. Now, using the $S_{12}$-coloring of $H$, we obtain an $S_{10}$-coloring of $G$. Contract all triangles of $H$ corresponding to $U$ and the central triangle of $S_{12}$ to obtain $S_{10}$, and re-color the edges of $H$ having color $x$ with the color $a$, the edges of $H$ with color $y$ with color $b$ and finally, the edges of $H$ with color $z$ with color $c$, respectively. Since $x,y,z$ form an even subgraph in $S_{12}$, by (f) of Lemma \ref{lem:propHcolorings}, the edges of $f^{-1}(\{x,y,z\})$ will form an even subgraph, that is vertex-disjoint union of cycles. Hence, after the re-coloring we obtain an $S_{10}$-coloring of $G$. The proof of the theorem is complete.
\end{proof}

\begin{theorem} Conjectures \ref{conj:5CDC} and \ref{conj:P12conjecture} imply Conjecture \ref{conj:P10conjecture}.
\end{theorem}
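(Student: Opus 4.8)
The plan is to connect the two hypotheses through the triangle-inflation operation: given a connected bridgeless cubic graph $G$, let $G_{\Delta}$ be the graph obtained from $G$ by replacing every vertex with a triangle. First I would record the elementary properties of $G_{\Delta}$. It is again connected, cubic, and bridgeless, because no triangle edge can be a bridge and every old edge of $G$ still lies on a cycle of $G_{\Delta}$ inherited from a cycle of $G$; and it is \emph{claw-free}, since at each vertex of a triangle two of the three neighbours are joined by a triangle edge. (It suffices to treat connected $G$, by applying the conclusion componentwise.)

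Next the two conjectures enter. Since $G_{\Delta}$ is claw-free and bridgeless, Conjecture~\ref{conj:5CDC}, via its reformulation Conjecture~\ref{conj:ClawFreeBerge} (Theorem~\ref{thm:Giuseppe5CDCClawFreeEq}), gives $k(G_{\Delta})\le 4$. Hence Conjecture~\ref{conj:P12conjecture} provides a $P_{12}$-coloring of $G_{\Delta}$. Contracting the unique triangle of $P_{12}$ returns $P_{10}$, and the contraction map is a $P_{10}$-coloring of $P_{12}$ (edges off the triangle keep their names, and the three triangle edges are assigned, in the unique compatible way, the three edges of $P_{10}$ at the contracted vertex); thus $P_{10}\prec P_{12}$. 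By transitivity of $\prec$ we obtain a $P_{10}$-coloring $f$ of $G_{\Delta}$.

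It remains to push $f$ down to $G$. For a vertex $v$ of $G$ let $T_v$ be the corresponding triangle of $G_{\Delta}$. The three edges of $T_v$ are pairwise adjacent, so $f$ sends them to three pairwise adjacent edges of $P_{10}$; as $P_{10}$ is triangle-free, pairwise adjacent edges must share a common end, so $f(E(T_v))=\partial_{P_{10}}(y_v)$ for some vertex $y_v$. Each of the three edges of $G_{\Delta}$ leaving $T_v$ is adjacent to exactly two edges of $T_v$, and — using triangle-freeness of $P_{10}$ a second time — the only edge of $P_{10}$ adjacent to two given edges of $\partial_{P_{10}}(y_v)$ is the third one; hence the three edges leaving $T_v$ receive precisely the three colours of $\partial_{P_{10}}(y_v)$. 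Therefore, putting $\bar f(e)=f(e')$ for every edge $e$ of $G$, where $e'$ is the unique edge of $G_{\Delta}$ joining the two triangles at the ends of $e$, gives a $P_{10}$-coloring of $G$: it is well defined and $\bar f(\partial_G(v))=\partial_{P_{10}}(y_v)$ for every $v$. Thus $P_{10}\prec G$, which is Conjecture~\ref{conj:P10conjecture}.

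The only genuinely non-routine point is this final descent: one must check that a $P_{10}$-coloring of the inflated graph $G_{\Delta}$ forces each triangle $T_v$ together with its three outgoing edges to be coloured exactly by the edge set $\partial_{P_{10}}(y_v)$ of a single vertex of $P_{10}$, and each time this is exactly what triangle-freeness of $P_{10}$ supplies. Everything else is a straightforward assembly of the cited ingredients — the reformulation of the $(5,2)$-cover conjecture, the $P_{12}$-conjecture, and transitivity of the $H$-coloring relation.
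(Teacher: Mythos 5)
Your proposal is correct and follows essentially the same route as the paper: inflate $G$ to the claw-free bridgeless cubic graph $G_{\Delta}$, use Conjecture~\ref{conj:5CDC} via Theorem~\ref{thm:Giuseppe5CDCClawFreeEq} to get $k(G_{\Delta})\le 4$, apply Conjecture~\ref{conj:P12conjecture}, compose with $P_{10}\prec P_{12}$, and contract the triangles using triangle-freeness of $P_{10}$. The only differences are cosmetic: the paper first disposes of the case $k(G)\le 4$ separately (which your uniform argument makes unnecessary), while you spell out in more detail the verification that the outgoing edges of each inflated triangle inherit exactly the colours $\partial_{P_{10}}(y_v)$, a step the paper leaves implicit.
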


\begin{proof} Assume that Conjectures \ref{conj:5CDC} and \ref{conj:P12conjecture} are true, and let $G$ be a bridgeless cubic graph. Let us show that $G$ admits a $P_{10}$-coloring. If $k(G)\leq 4$, then by Conjecture \ref{conj:P12conjecture} it has a $P_{12}$-coloring. Since $P_{12}$ admits a $P_{10}$-coloring, we have that $G$ admits a $P_{10}$-coloring. Thus, without loss of generality, we can assume that $k(G)\geq 5$.

Consider the graph $H$ obtained from $G$ by replacing all vertices of $G$ with triangles. Observe that $H$ is a claw-free bridgeless cubic graph. Hence by Conjectures \ref{conj:5CDC} and \ref{conj:ClawFreeBerge}, and Theorem \ref{thm:Giuseppe5CDCClawFreeEq}, $k(H)\leq 4$. Thus, by Conjecture \ref{conj:P12conjecture}, $H$ admits a $P_{12}$-coloring. Since $P_{12}$ admits a $P_{10}$-coloring, we have that $H$ admits a $P_{10}$-coloring $f$. Observe that since $P_{10}$ is triangle-free, we have that for any triangle $T$ of $H$ there is a vertex $v$ of $P_{10}$, such that $f(E(T))=\partial_{P_{10}}(v)$. Thus, if we contract all the triangles of $H$ that correspond to vertices of $G$, we will obtain a $P_{10}$-coloring of $G$. The proof of the theorem is complete.
\end{proof}

\section{Future work}
\label{sec:future}

In this section, we discuss some open problems and conjectures that are interesting in our point of view. In the previous section, we established a connection between Conjectures \ref{conj:P12conjecture} and \ref{conj:P10conjecture}, and Conjectures \ref{conj:S12conjecture} and \ref{conj:S10Conjecture}. We suspect that this relationship can be extended to a linear order among these four conjectures. Related with this, we would like to offer:

\begin{conjecture}
\label{conj:P10S12implicationConj} Conjecture \ref{conj:P10conjecture} implies Conjecture \ref{conj:S12conjecture}.
\end{conjecture}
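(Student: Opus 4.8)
The plan is to reduce the conjecture to the treatment of bridges, the base case being supplied by the observation that $P_{10}$ itself is $S_{12}$-colorable.

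\emph{Step 1: $S_{12}\prec P_{10}$.} First I would verify, by an explicit construction, that the Petersen graph admits an $S_{12}$-coloring; indeed one using only the five edges of a single end-block of $S_{12}$ together with the bridge attached to it. Labelling that end-block so that $\alpha$ is its cut-vertex, $u^*_1,u^*_2$ the two vertices joined by the parallel pair, $e_1=\alpha u^*_1$ and $e_2=\alpha u^*_2$ the legs, $e_3,e_4$ the parallel edges, and $a$ the attached bridge, one colors the fifteen edges of $P_{10}$ using only $a,e_1,e_2,e_3,e_4$ so that each vertex of $P_{10}$ is sent to one of $\alpha,u^*_1,u^*_2$; this is a finite check. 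Since $\prec$ is transitive, Conjecture \ref{conj:P10conjecture} then yields $S_{12}\prec G$ for every bridgeless cubic graph $G$, so Conjecture \ref{conj:S12conjecture} holds in the bridgeless case.

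\emph{Step 2: induction on the number of bridges.} Let $G$ be connected, cubic, with a perfect matching $M$ (recall every bridge of $G$ lies in $M$) and with at least one bridge $e=uv$. Let $G_u$ be the component of $G-e$ containing $u$, let $u_1,u_2$ be the neighbors of $u$ in $G_u$, put $\widehat{G_u}=(G_u-u)+u_1u_2$, and define $\widehat{G_v}$ symmetrically. Each of these is a connected cubic graph with a perfect matching (the restriction of $M$, which avoids the new edge) and with strictly fewer bridges than $G$, so the induction hypothesis applies. One then re-inserts $u$, $v$ and $e$: by (g) of Lemma \ref{lem:propHcolorings} one must put a bridge of $S_{12}$ on $e$ and a vertex of $S_{12}$ incident to it on $u$ and on $v$, and re-color $uu_1,uu_2$ and $vv_1,vv_2$. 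The colors available for $uu_1$ and $uu_2$ are essentially the color ``missing'' at $u_1$, resp.\ at $u_2$, after deleting $u_1u_2$, and naively these coincide, which is not proper at $u$. The remedy is the parallel edges in the end-blocks of $S_{12}$: if the coloring of $\widehat{G_u}$ sends $u_1u_2$ to a leg $e_1$ with both $u_1$ and $u_2$ at the double-edge vertex $u^*_1$, then in $G$ one re-routes $u_1$ to $u^*_2$ while keeping $u_2$ at $u^*_1$, so that $uu_1$ and $uu_2$ receive the two distinct legs and $f(u)=\alpha$, $f(e)=a$; symmetrically at $v$.

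\emph{The main obstacle.} For Step 2 the induction hypothesis must be strengthened so that the colorings of $\widehat{G_u}$ and $\widehat{G_v}$ can be required to exhibit the local pattern above at the edge $u_1u_2$, resp.\ $v_1v_2$. The obvious strengthening --- that an $S_{12}$-coloring can realize a prescribed local pattern at any non-forced edge --- is false: for the graph $\Theta$ with two vertices and three parallel edges no $S_{12}$-coloring realizes that pattern, and $\Theta$ does arise as a $\widehat{G_u}$, precisely when $G_u$ is a copy of an $S_{10}$-end-block. But in exactly that degenerate situation the required coloring of that side of $G$ is available outright (map the $S_{10}$-end-block identically onto the end-block of $S_{12}$), so recursion is unnecessary there. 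Thus the real task is to pin down the correct finite menu of admissible local patterns (and of degenerate side-graphs to be handled directly), to prove that every cubic graph with a perfect matching and fewer bridges realizes one of them at a prescribed non-forced edge, and to check that the bridgeless base case --- Step 1 applied to $P_{10}$ and transported along $P_{10}$-colorings of arbitrary bridgeless cubic graphs --- realizes them as well. Keeping this menu simultaneously small enough to propagate through the induction and rich enough to perform the gluing is, I expect, the crux; the remaining steps should be routine.
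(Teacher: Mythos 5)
The statement you are trying to prove is Conjecture~\ref{conj:P10S12implicationConj}, which the paper itself leaves open: it is offered in the ``Future work'' section with no proof, so there is no argument of the authors to compare yours against, and your text would have to stand on its own as a complete proof. It does not. Your Step~1 is fine and checkable: one can indeed color $P_{10}$ using only the five edges of one end-block of $S_{12}$ together with its attached bridge (take an $8$-cycle of $P_{10}$, alternate the two parallel edges on it, put the bridge on the edge joining the two remaining vertices, and use the two legs on the remaining six edges), so by transitivity of $\prec$ Conjecture~\ref{conj:P10conjecture} settles the bridgeless case of Conjecture~\ref{conj:S12conjecture}. But Step~2 is only a plan: the gluing at a bridge requires an $S_{12}$-coloring of $\widehat{G_u}$ (and of $\widehat{G_v}$) realizing a prescribed local pattern at the new edge $u_1u_2$, and you explicitly do not formulate, let alone prove, the strengthened induction hypothesis (``the correct finite menu of admissible local patterns'') that would make the recursion go through. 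That strengthening is essentially a statement of the same difficulty as Conjecture~\ref{conj:S12conjecture} itself, restricted to graphs with fewer bridges; asserting that finding it ``is the crux'' and that ``the remaining steps should be routine'' concedes that the proof is missing exactly where the difficulty lies.

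Two smaller points. First, your claimed obstruction is wrong: the graph $\Theta$ with two vertices and three parallel edges does realize the pattern you need --- color the designated edge with a leg $a_1$ and the other two parallel edges with $a_3$ and $a_4$, sending both vertices to the doubled vertex $u^*_1$; this is a valid $S_{12}$-coloring by the definition (both vertices may map to the same vertex of $S_{12}$). So the naive strengthening is not refuted by $\Theta$, but it is also not proved by you, and proving it (including its propagation through the induction and its validity for the bridgeless base case transported along $P_{10}$-colorings) is the entire problem. Second, some of your bookkeeping claims deserve justification rather than assertion, e.g.\ that $\widehat{G_u}$ has strictly fewer bridges than $G$ (this does hold, using that $G$ has a perfect matching and hence, by a parity count on the components created by bridges, no vertex of $G$ is incident to three bridges, so $u_1u_2$ is never a bridge of $\widehat{G_u}$; note also that by (g) of Lemma~\ref{lem:propHcolorings} a vertex incident to three bridges could not be $S_{12}$-colored at all, since no vertex of $S_{12}$ meets three bridges). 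In summary: a reasonable outline whose base case is correct, but with the central inductive lemma left as an open problem --- consistent with the fact that the implication remains a conjecture in the paper.
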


All hereditary classes that we discussed up to now either had or are conjectured to have a basis with one element. One may wonder whether there is a hereditary class of cubic graphs arising from an interesting graph theoretic property, such that the basis of the class contains at least two graphs. For a positive integer $k$ let $\mathcal{C}_k$ be the class of connected cubic graphs $G$ with $\chi'_N(G)\leq k$. (i) of Lemma \ref{lem:propHcolorings} implies that $\mathcal{C}_k$ is a hereditary class of cubic graphs. Recently, it was shown that for any simple cubic graph $\chi'_N(G)\leq 7$ \cite{NormalCubics7}. By using this result, a simple inductive proof can be obtained for the following extension of this result:

\begin{theorem}
\label{thm:Multigraphs7} Let $G$ be a cubic graph admitting a normal $k$-edge-coloring for some integer $k$. Then $\chi'_N(G)\leq 7$.
\end{theorem}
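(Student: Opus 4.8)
The plan is to argue by induction on the number of pairs of parallel edges of $G$. We may assume $G$ is connected, since a normal coloring of $G$ restricts to one of every component, and a $2$-vertex component carrying a triple edge is $3$-edge-colorable, hence has normal chromatic index $3$. If $G$ is simple, we are done by the cited result for simple cubic graphs (there the hypothesis is automatic). So suppose $G$ has a pair of parallel edges $e_1,e_2$ joining vertices $u,v$, and let $e_u=uu'$ and $e_v=vv'$ be the third edges at $u$ and $v$; since $G$ is connected and has no triple edge, $u',v'\notin\{u,v\}$, and $\{e_u,e_v\}$ is a $2$-edge-cut. The first point is that normality constrains the local picture: around the edge $e_1$ there are only the four edges $e_1,e_2,e_u,e_v$ (the edge $e_2$ is incident to both $u$ and $v$), so at most four colors occur and $e_1$ cannot be rich in a normal coloring $\alpha$; hence $e_1$ is poor, and since $\alpha(e_1)\neq\alpha(e_2)$ while $\alpha(e_u),\alpha(e_v)\notin\{\alpha(e_1),\alpha(e_2)\}$, being poor forces $\alpha(e_u)=\alpha(e_v)$. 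In particular $e_u$ and $e_v$ are non-adjacent, i.e.\ $u'\neq v'$. (This is exactly the obstruction that prevents the Sylvester graph, whose pendant triangles have $u'=v'$, from admitting any normal coloring.)

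Let $G'$ be the cubic graph obtained from $G$ by deleting $u,v$ and adding the edge $u'v'$; then $G'$ is connected and has one pair of parallel edges fewer than $G$. The argument now has two transfer steps. Step~(I): $G'$ admits a normal coloring, so the induction hypothesis yields $\chi'_N(G')\le 7$. Step~(II): a normal $7$-edge-coloring $\beta$ of $G'$ lifts to a normal $7$-edge-coloring of $G$. Step~(II) is the easy direction: keep $\beta$ on the common edges, give $e_u$ and $e_v$ the color $\beta(u'v')$, and give $e_1,e_2$ the two colors that $\beta$ uses on the other two edges at $u'$. Then every edge incident to $u'$ or $v'$ sees exactly the same set of colors as under $\beta$ and stays poor or rich; $e_u$ becomes poor; $e_v$ acquires exactly the color neighborhood $u'v'$ had, hence its status; and $e_1,e_2$ are poor precisely because $e_u$ and $e_v$ received equal colors. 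So the crux of the whole proof is Step~(I) — ensuring the reduced graph still admits a normal coloring.

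For Step~(I) the natural candidate is the coloring inherited from $\alpha$: restrict $\alpha$ to $E(G)\setminus\{e_1,e_2,e_u,e_v\}$ and color $u'v'$ by $\alpha(e_u)=\alpha(e_v)$. This is proper, and normal except possibly at $u'v'$; running through the possibilities for whether $e_u,e_v$ are poor or rich in $\alpha$ shows $u'v'$ is poor or rich \emph{unless} $e_u$ and $e_v$ are both rich and the two colors on the non-$e_u$ edges at $u'$ meet the two colors on the non-$e_v$ edges at $v'$ in exactly one color. Handling this residual configuration is the main obstacle, and naive local recoloring tends to cascade. The robust fix I would use is to perform the reduction differently: replace the cut side $\{u,v\}$ by a \emph{diamond} (a copy of $K_4$ minus an edge) attached by its two $2$-valent vertices to $u'$ and $v'$. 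One checks directly that $\alpha$ extends to a normal coloring of this modified graph, with all five diamond edges poor and the two attaching edges keeping the color $\alpha(e_u)$; this preserves the "one fewer parallel pair" count and lets the induction run. The cost is reflected in the reverse transfer: to lift a normal $7$-coloring back one needs that in \emph{every} normal coloring the two edges attaching a diamond get the same color. This is immediate when the diamond is colored with three colors (normality forces its central edge and both attaching edges to carry one common color), but when the diamond is colored "richly" — so that all seven colors already appear around it — the argument is more delicate, and that case is where the real care is needed.
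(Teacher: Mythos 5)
The paper never writes this proof out; it only asserts that ``a simple inductive proof can be obtained'' from the simple-cubic-graph result of \cite{NormalCubics7}, and your framework (induction on the number of digons, base case the simple-graph theorem) is surely the intended one. Your preliminary observations are correct: a parallel edge can never be rich, so it is poor, forcing $\alpha(e_u)=\alpha(e_v)$ and $u'\neq v'$; and your Step~(II), lifting a normal $7$-edge-coloring of the suppressed graph $G'$ back to $G$, is sound. The problem is that the whole weight of the proof rests on Step~(I), and as written it is not established. First, the inherited coloring of $G'$ fails exactly in the case you identify ($e_u,e_v$ both rich with $|\{c_1,c_2\}\cap\{d_1,d_2\}|=1$), and note that this is not merely a defect of that particular coloring: a normal coloring of $G$ only ever induces a coloring of $G'$ that is normal except possibly at the single edge $u'v'$, so it is not even clear that $G'$ admits \emph{any} normal coloring, which is what your induction hypothesis requires.

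Second, the ``robust fix'' does not close the gap, because the claim it needs is false. In a normal $7$-edge-coloring of the diamond-modified graph in which the five diamond edges receive five distinct colors (your ``rich'' case), normality of the four outer diamond edges forces only that each attaching edge avoids those five colors; with seven colors available, the two attaching edges may legitimately receive the two \emph{different} remaining colors, and nothing links them. (Forward, your extension of $\alpha$ to the diamond with all five edges poor is fine; it is the reverse transfer that breaks.) When the attaching colors differ, contracting the diamond back to the digon forces $e_u$ and $e_v$ to share a color, hence forces a color change at $u'$ or at $v'$, which can destroy the poor/rich status of the other edges there and cascades. So the case you flag as ``delicate'' is not a delicate case of a true statement; it is a counterexample to the statement you invoke, and handling it requires a genuinely new idea (for instance a stronger inductive statement, or an argument that a normal $7$-edge-coloring of the reduced graph can always be modified so that the two attaching edges agree). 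As it stands, the proof is incomplete precisely at its crux.
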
 

Theorem \ref{thm:Multigraphs7} suggests that $\mathcal{C}_k$ is meaningful when $k=3,4,5,6,7$. Below, we discuss these classes for each of these values. When $k=3$, $\mathcal{C}_k$ represents the class of connected 3-edge-colorable cubic graphs. Thus, our notation is consistent with that of Section \ref{sec:hereditary}. When $k=4$, it can be easily seen that a cubic graph admits a normal 4-edge-coloring, if and only if it admits a $3$-edge-coloring. Thus, $\mathcal{C}_4=\mathcal{C}_3$. When $k=5$, Jaeger has shown that a cubic graph admits a $P_{10}$-coloring if and only if it admits a normal 5-edge-coloring. On the other hand, we have that any cubic graph admitting a $P_{10}$-coloring, has to be bridgeless. Thus, if Conjecture \ref{conj:P10conjecture} is true, then $\mathcal{C}_5=\mathcal{C}_b$. Finally, when $k=6$ or $k=7$, we suspect that the bases of the classes $\mathcal{C}_6$ and $\mathcal{C}_7$ contain infinitely many cubic graphs. We are able to show that the basis of $\mathcal{C}_7$ must contain at least two graphs. Let $\mathcal{B}$ be any basis of $\mathcal{C}_7$. It can be easily seen that we can assume that it does not contain a 3-edge-colorable graph. Moreover, by a simple inductive proof, one can show that all elements of $\mathcal{B}$ can be assumed to be simple graphs. Now, let $S_{16}$ be the graph from Figure \ref{fig:S16}. The following two results are proved in \cite{AnushSylvester}:

\begin{figure}[ht]
  \begin{center}
	
	\tikzstyle{every node}=[circle, draw, fill=black!50,
                        inner sep=0pt, minimum width=4pt]
	
		\begin{tikzpicture}
																							
			\node[circle,fill=black,draw] at (-5.5,-1) (n1) {};
			
			\node[circle,fill=black,draw] at (-6, -0.5) (n2) {};
			\node[circle,fill=black,draw] at (-6, 0.5) (n22) {};
																								
			\node[circle,fill=black,draw] at (-5,-0.5) (n3) {};
			\node[circle,fill=black,draw] at (-5,0.5) (n33) {};
																								
			\node[circle,fill=black,draw] at (-3.5,-1) (n4) {};

			\node[circle,fill=black,draw] at (-4, -0.5) (n5) {};
			\node[circle,fill=black,draw] at (-4, 0.5) (n55) {};
																								
			\node[circle,fill=black,draw] at (-3,-0.5) (n6) {};
			\node[circle,fill=black,draw] at (-3, 0.5) (n66) {};
																								
			\node[circle,fill=black,draw] at (-1.5,-1) (n7) {};
																								
			\node[circle,fill=black,draw] at (-2, -0.5) (n8) {};
			\node[circle,fill=black,draw] at (-2, 0.5) (n88) {};
																								
			\node[circle,fill=black,draw] at (-1,-0.5) (n9) {};
			\node[circle,fill=black,draw] at (-1, 0.5) (n99) {};
																								
			\node[circle,fill=black,draw] at (-3.5,-2) (n10) {};

			\path[every node]
			(n1) edge  (n2)

			edge  (n3)
			edge (n10)
																								   	
			(n2) edge (n22)
			    edge (n33)
																								       
			(n3) edge (n22)
			    edge (n33)
			(n4) edge (n5)
			edge (n6)
			edge (n10)
			
			(n22) edge (n33)
																								    
			(n5) edge (n55)
			    edge (n66)
			(n6) edge (n55)
			    edge (n66)
			    
			 (n55) edge (n66)
																								   
			(n7) edge (n8)
			edge (n9)
			edge (n10)
																								    
			(n8) edge (n88)
			    edge (n99)
		(n9) edge (n88)
		    edge (n99)
		   
		 (n88) edge (n99)
																								  
			;
		\end{tikzpicture}
																
	\end{center}
								
	\caption{The graph $S_{16}$.}
	\label{fig:S16}

\end{figure}
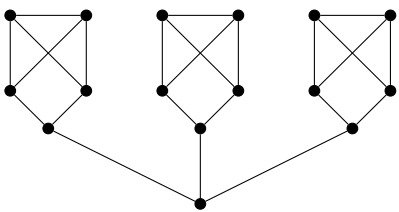


\begin{theorem}
\label{thm:S16Simple} Let $G$ be a simple graph with $G\prec S_{16}$. Then $G= S_{16}$. 
\end{theorem}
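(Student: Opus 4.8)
The statement to prove is Theorem \ref{thm:S16Simple}: if $G$ is a simple cubic graph with $G \prec S_{16}$, then $G = S_{16}$.

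\medskip

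\textbf{The plan.} The strategy is to mimic the proof of Theorem \ref{thm:S10S12thm} (the $S_{10}$/$S_{12}$ result), exploiting the structural features of $S_{16}$. Note that $S_{16}$ is obtained from $S_{10}$ by replacing each of the three parallel-edge pairs (the ``tips'' of the three end-blocks) with a diamond $K_4 - e$; equivalently, $S_{16}$ has a central triangle $T$ whose three boundary edges are bridges, and each of the three resulting end-blocks is a copy of $K_4-e$ attached by a bridge. First I would fix a $G$-coloring $f$ of $S_{16}$ and examine the image of the central triangle $T = \{e_1, e_2, e_3\}$. Since these three edges are pairwise adjacent in $S_{16}$, the edges $f(e_1), f(e_2), f(e_3)$ are pairwise adjacent in $G$, so they either share a common vertex of $G$ or form a triangle $T_0$ in $G$. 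In the first case, contracting $T$ turns $f$ into a $G$-coloring of $S_{16}/T$; in the second case, $f$ descends to a $(G/T_0)$-coloring of $S_{16}/T$. So the first reduction is: understand $G$-colorings of the graph $S_{16}/T$, which is a cubic (pseudo-)graph consisting of a single vertex joined by three bridges to three copies of $K_4-e$ — call it $\widehat{S}_{16}$, a ``generalized Sylvester'' graph with diamonds in place of the digon tips.

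\medskip

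\textbf{Key steps.} (1) Show that in any $G$-coloring of $\widehat{S}_{16}$, each bridge must map to a bridge of $G$ (by (g) of Lemma \ref{lem:propHcolorings}), and analyze how each $K_4-e$ end-block is colored. An end-block $K_4-e$ has two degree-$3$ vertices and carries a bridge to the rest of the graph; its edges, being highly interconnected, force the image under $f$ to be a very restricted configuration in $G$ — I expect to show the image is again a $K_4-e$ together with a bridge (or collapses onto fewer vertices, which then has to be ruled out using simplicity of $G$ or the structure forced elsewhere). (2) Iterate the contraction argument: after peeling off the central triangle, show each end-block's image is forced, and conclude $G = S_{16}$ (in the triangle case $G = S_{16}$ directly; in the vertex case one gets a ``smaller'' Sylvester-type graph, but one should check this cannot actually occur for simple $G$, or it coincides with one of the finitely many small exceptions). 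I would organize this as a case analysis exactly parallel to the two cases in Theorem \ref{thm:S10S12thm}, with an inner induction or direct structural argument on the three end-blocks. Throughout, the constraint that $G$ is \emph{simple} is what prevents the image of a $K_4-e$ (or of the whole graph) from degenerating to something with parallel edges, and it is what pins down $G$ uniquely.

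\medskip

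\textbf{Main obstacle.} The hard part will be the end-block analysis: unlike the digon in $S_{10}$/$S_{12}$ (whose two parallel edges are forced to map to two parallel edges, a very rigid situation), a $K_4 - e$ block has more internal edges and more potential images in $G$, so ruling out all ``degenerate'' colorings — e.g., where two edges of the diamond get the same color-neighborhood, or where the diamond's image wraps back onto the central part — requires careful bookkeeping of which edge-sets in $G$ can be pairwise adjacent in the required pattern. A clean way to control this is to observe that $K_4-e$ contains a triangle, so its image under $f$ again either shares a vertex or is a triangle in $G$, giving a further sub-contraction; recursing this way should terminate because each step strictly decreases $|V(G)|$ or forces a bridge structure, and simplicity together with the already-known Theorem \ref{thm:Sylvester} (classifying $G \prec S_{10}$) closes the base cases. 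I would lean on Theorem \ref{thm:Sylvester} heavily: once all diamonds are contracted we are back to a $G$-coloring of $S_{10}$, which forces $G/(\text{contracted diamonds}) = S_{10}$, and then unwinding the contractions while tracking that the three contracted vertices are exactly the bridge-incident ones recovers $G = S_{16}$.
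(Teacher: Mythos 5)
The paper itself does not prove Theorem \ref{thm:S16Simple}: it is imported from \cite{AnushSylvester}, so there is no in-paper argument to compare yours with, and your text has to be judged on its own. As written it is a plan rather than a proof, and its very first step rests on a wrong description of $S_{16}$. The graph $S_{16}$ (Figure \ref{fig:S16}) has \emph{no central triangle}: it consists of a central vertex incident to three bridges, each bridge leading to a five-vertex end-block, namely a diamond $K_4-e$ whose two degree-two vertices are joined to an apex vertex that carries the bridge. Your ``equivalent'' description --- a central triangle whose boundary edges are bridges, with three end-blocks equal to $K_4-e$ --- describes a $15$-vertex object that is not even cubic (a $K_4-e$ attached by a single bridge leaves a vertex of degree two); the central triangle is a feature of $S_{12}$, not of $S_{16}$. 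Hence the opening reduction, ``contract the central triangle $T$ of $S_{16}$,'' cannot be carried out. A correct analogue would instead start from the central vertex: its three incident edges are bridges of $S_{16}$, so by (g) of Lemma \ref{lem:propHcolorings} their images are pairwise adjacent bridges of $G$, and since no edge of a triangle is a bridge they must share a common vertex of $G$.

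Even granting such a repair, the decisive content is explicitly left open in your sketch: the analysis of how the five-vertex end-blocks can be colored and where the simplicity of $G$ is used. This cannot be waved through, because the ``degenerate'' colorings you hope to exclude really occur: one can check that $S_{10}\prec S_{16}$ (map each diamond onto the doubled edge of an end-block of $S_{10}$, sending the edge between the two new vertices into the triangle), so without the simplicity hypothesis the statement is false, and any proof must pinpoint exactly the step at which a simple $G$ prevents a diamond from collapsing onto a digon. Your fallback --- contract triangles inside the diamonds, invoke Theorem \ref{thm:Sylvester}, then ``unwind'' --- is also shaky as stated: the two triangles of a diamond share an edge; the image of a triangle may be a vertex star, in which case nothing in $G$ is contracted and $|V(G)|$ does not decrease; the contractions can create multiple edges or loops, so the intermediate graphs need not be simple and Theorem \ref{thm:Sylvester} alone does not finish; and knowing that some contraction of $G$ equals $S_{10}$ does not determine $G$ unless one tracks precisely which vertices arose from contractions. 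In short, the proposal gestures at a plausible strategy modeled on Theorem \ref{thm:S10S12thm}, but its structural premise about $S_{16}$ is incorrect and the core of the argument is missing.
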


\begin{theorem}
\label{thm:PetersenSimple} Let $G$ be a simple graph with $G\prec P_{10}$. Then $G= P_{10}$. 
\end{theorem}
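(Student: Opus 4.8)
The plan is to mimic the proof strategy of Theorems \ref{thm:Petersen}, \ref{thm:Sylvester} and \ref{thm:S16Simple}: we assume a $G$-coloring $f$ of $P_{10}$ and extract rigid structural information from the fact that $P_{10}$ is a small, highly symmetric, triangle-free graph. First I would observe that since $P_{10}$ is triangle-free, for every vertex $x$ of $P_{10}$ the image $f(\partial_{P_{10}}(x))$ is the set of three edges incident to some vertex $\pi(x)$ of $G$ (it cannot be a triangle of $G$, since the three edges at $x$ are pairwise adjacent but do not form a triangle, and the image of adjacent edges must be adjacent). This already gives a map $\pi:V(P_{10})\to V(G)$. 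The key point to establish is that $\pi$ is a graph homomorphism from $P_{10}$ onto $G$ that is locally injective (a covering-type map): if $xy\in E(P_{10})$ then $\pi(x)\pi(y)=f(xy)\in E(G)$, and distinct edges at $x$ get distinct images, so $\pi$ restricted to $\partial_{P_{10}}(x)$ is a bijection onto $\partial_G(\pi(x))$.

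Next I would use this local bijectivity together with $|V(P_{10})|=10$ and connectivity of $G$ to pin down $|V(G)|$. A locally bijective homomorphism (graph covering) $\pi:P_{10}\to G$ has all fibers of the same size $d=10/|V(G)|$, so $|V(G)|\in\{1,2,5,10\}$; the cases $|V(G)|=1,2$ are impossible since then $G$ has a loop or is not simple (and $G$ must be cubic and simple), leaving $|V(G)|=5$ (a double cover) or $|V(G)|=10$. In the latter case $\pi$ is an isomorphism and $G=P_{10}$, as desired. So the whole argument reduces to ruling out $|V(G)|=5$: there is no simple cubic graph $G$ on $5$ vertices at all, since a cubic graph has an even number of vertices. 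Hence $|V(G)|=10$ and $G\cong P_{10}$.

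I would present the argument in this order: (1) triangle-freeness of $P_{10}$ forces the vertex-map $\pi$ and its local bijectivity; (2) local bijectivity forces $\pi$ to be a covering map, hence equal fiber sizes and $|V(G)|\mid 10$; (3) parity (cubic $\Rightarrow$ even order) together with simplicity eliminates $|V(G)|\in\{1,2,5\}$; (4) therefore $\pi$ is an isomorphism and $G=P_{10}$. The step I expect to be the main obstacle — or at least the one needing the most care — is (2): arguing cleanly that a locally bijective homomorphism between finite graphs has constant fiber size (equivalently, that the "number of preimages" is locally constant and hence constant by connectedness of $G$), and making sure the definition of $H$-coloring genuinely yields \emph{local} bijectivity and not merely local surjectivity. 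Once that is nailed down, the divisibility and parity bookkeeping is routine. An alternative to (2)–(3), should the covering-space language feel heavy, is to argue directly: pick any vertex $x_0$ of $P_{10}$, note $\pi$ maps the (at most) distance-$2$ ball around $x_0$ injectively into $G$ unless two vertices of $P_{10}$ collapse, then use the girth-$5$ property of $P_{10}$ to show any collapse propagates to a proper quotient that is not simple cubic — but I expect the covering-map phrasing to be the cleanest.
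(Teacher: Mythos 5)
The paper itself does not prove Theorem \ref{thm:PetersenSimple}; it quotes it from \cite{AnushSylvester}, so there is no in-paper proof to compare against, and I judge your argument on its own. It has a genuine gap at its central step. From the definition of a $G$-coloring you do get, for every vertex $x$ of $P_{10}$, a vertex $\pi(x)$ of $G$ with $f(\partial_{P_{10}}(x))=\partial_G(\pi(x))$, and $f$ restricted to $\partial_{P_{10}}(x)$ is a bijection onto $\partial_G(\pi(x))$ (this is immediate from the definition; triangle-freeness of $P_{10}$ plays no role here, since the star-to-star condition is part of what an $H$-coloring means). What does \emph{not} follow is that $\pi$ is a homomorphism, i.e. that $\pi(x)\neq\pi(y)$ for every edge $xy$ of $P_{10}$: if $\pi(x)=\pi(y)=v$, the definition is perfectly consistent --- $f(xy)$ is simply some edge of $\partial_G(v)$, and the remaining two edges at $x$ and at $y$ are matched with the remaining two edges at $v$. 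Your covering-map machinery (constant fibre size, hence $|V(G)|$ divides $10$) becomes available only after such ``collapsed'' edges are excluded, and excluding them is essentially the whole content of the theorem, not a preliminary observation.

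That this is not a removable technicality is shown by a concrete example: fix a proper $3$-edge-colouring of $K_{3,3}$ with classes $M_1,M_2,M_3$, fix a vertex $v$ of $K_4$ with incident edges $a_1,a_2,a_3$, and set $f(e)=a_i$ for $e\in M_i$. This is a $K_4$-colouring of $K_{3,3}$ that is star-bijective at every vertex, yet the induced vertex map is constant: it is not a homomorphism, the fibres are far from constant, and $|V(K_4)|$ does not divide $|V(K_{3,3})|$. So the implication ``star-bijectivity $\Rightarrow$ covering $\Rightarrow$ $|V(G)|$ divides $10$'' is false in general, and a correct proof must use specific properties of $P_{10}$ (it is not $3$-edge-colourable, has girth $5$, and has a rigid perfect-matching structure) to rule out collapsed edges. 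Non-$3$-edge-colourability of $P_{10}$ only excludes the totally degenerate case in which $\pi$ is constant (that would yield a proper $3$-edge-colouring of $P_{10}$); partial collapses are the real difficulty, and your closing remark that ``any collapse propagates to a proper quotient that is not simple cubic'' via girth $5$ is exactly the missing argument --- stated, but not carried out. (A smaller point: as in Theorems \ref{thm:Petersen} and \ref{thm:Sylvester}, connectedness of $G$ must be assumed, since otherwise $P_{10}$ together with an extra component is a counterexample; your argument does use connectivity, which is the intended reading. The divisibility and parity bookkeeping in your steps (3)--(4) is fine once a genuine covering is in hand.)
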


Theorems \ref{thm:S16Simple} and \ref{thm:PetersenSimple} suggest that the only way to color the graphs $S_{16}$ and $P_{10}$ with simple graphs is to take them in the basis $\mathcal{B}$. Thus, $\mathcal{B}$ must contain at least two graphs. 

\medskip

Finally, we would like to discuss some algorithmic problems. For a fixed connected cubic graph $H$ consider a decision problem which we call the $H$-problem:

\begin{problem}
\label{prob:Hproblem} ($H$-problem) Given a connected cubic graph $G$, the goal is to decide whether $G$ admits an $H$-coloring.
\end{problem}

Observe that when $H$ is $3$-edge-colorable, we have that $H$-problem is equivalent to testing $3$-edge-colorability of the input graph $G$, which is NP-complete \cite{Holyer}. When $H=S_{10}$, we have that all instances of $H$-problem have a trivial ``yes" answer provided that Conjecture \ref{conj:S10Conjecture} is true. Thus, this problem is polynomial time solvable if Conjecture \ref{conj:S10Conjecture} is true. When $H=S_{12}$, Conjecture \ref{conj:S12conjecture} implies that $H$-problem is equivalent to testing the existence of a perfect matching in the input graph $G$. This is known to be polynomial-time solvable. When $H=P_{10}$, Conjecture \ref{conj:P10conjecture} implies that $H$-problem is equivalent to testing bridgelessness of the input graph $G$. This problem is also polynomial time solvable. Finally, when $H=P_{12}$, Conjecture \ref{conj:P12conjecture} implies that $H$-problem is equivalent to testing whether the input graph $G$ can be covered with four perfect matchings. The latter problem is proved to be NP-complete in \cite{EsperetGiuseppe}. Thus, depending on the choice of $H$, the $H$-problem may or may not be NP-complete. Let $\mathcal{C}_{NP}$ be the class of all connected cubic graphs $H$, for which the $H$-problem is NP-complete. We suspect that:

\begin{conjecture}
\label{conj:NPhereditaryClass} $\mathcal{C}_{NP}$ is a hereditary class of cubic graphs.
\end{conjecture}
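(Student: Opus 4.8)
The plan is to verify the definition of hereditary class directly: I must show that if $H$ is a connected cubic graph for which the $H$-problem (Problem~\ref{prob:Hproblem}) is NP-complete and $G$ is a connected cubic graph with $H\prec G$, then the $G$-problem is NP-complete. Since $G$ is fixed, membership in NP is immediate --- a $G$-coloring of the input is a certificate of size linear in the input, checkable vertex by vertex in polynomial time --- so the whole content is a polynomial-time reduction from the (NP-hard) $H$-problem to the $G$-problem.

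The reduction I would attempt is a vertex-replacement construction exploiting the hypothesis $H\prec G$. Fix once and for all an $H$-coloring $g\colon E(G)\to E(H)$ (possible since $H\prec G$), chosen to use as many edges of $H$ as possible. Given an instance $G_0$, build a connected cubic graph $\phi(G_0)$ by replacing every vertex $x$ of $G_0$ with a copy of a fixed cubic gadget $\Gamma$ carrying three pendant half-edges, and identifying half-edges of adjacent copies along the edges of $G_0$. The gadget $\Gamma$ should be engineered so that the macroscopic behaviour of any $G$-coloring of $\phi(G_0)$ is forced to be an $H$-coloring of $G_0$ read through $g$; concretely, (1) for every vertex $y$ of $H$ there is a $G$-coloring of $\Gamma$ whose half-edges receive $G$-edges $\tilde e_1,\tilde e_2,\tilde e_3$ with $\{g(\tilde e_1),g(\tilde e_2),g(\tilde e_3)\}=\partial_H(y)$, and such partial colorings can be glued along a common half-edge, and conversely (2) in every $G$-coloring of $\phi(G_0)$ the three half-edge colors of each copy of $\Gamma$ have $g$-images equal to $\partial_H(y)$ for some $y\in V(H)$. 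Given such a $\Gamma$, an $H$-coloring $f_0$ of $G_0$ is assembled into a $G$-coloring of $\phi(G_0)$ gadget by gadget via (1) (on each gluing edge use any $G$-edge sent by $g$ to the $f_0$-color of the corresponding edge of $G_0$), while a $G$-coloring of $\phi(G_0)$ produces, via (2), an $H$-coloring of $G_0$; thus $\phi$ is the required reduction.

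The hard part --- and the reason this is stated only as a conjecture --- is the construction and analysis of $\Gamma$, especially property (2), which must rule out every ``rogue'' $G$-coloring of $\phi(G_0)$ that does not descend to an $H$-coloring of $G_0$; there is no single gadget visibly correct for an arbitrary pair $H\prec G$, since $\prec$ can be realized in structurally very different ways. I would start with the elementary cases: when $G$ arises from $H$ by expanding one vertex into a triangle, or by contracting a contractible triangle, one can take $\Gamma$ to be $G$ with that vertex, respectively that triangle, deleted and argue as in the proofs of Theorems~\ref{thm:P10P12thm} and~\ref{thm:S10S12thm}; the case $H$ $3$-edge-colorable is also immediate, since then $G$ is $3$-edge-colorable and the $G$-problem is literally $3$-edge-colorability. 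The genuine obstacle is the general case: one would need to decompose an arbitrary $H$-coloring of $G$ into a bounded number of such elementary moves (triangle expansions and contractions, subdivisions) and show the associated gadgets compose --- a combinatorial decomposition problem that I expect to be where the real work, if it can be done at all, lies. A conceivable shortcut would be a padding or self-reducibility argument showing that recognizing $G$-colorable graphs is no easier than recognizing $H$-colorable graphs, but I do not currently see how to carry this out.
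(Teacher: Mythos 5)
The statement you are trying to prove is Conjecture~\ref{conj:NPhereditaryClass}, which the paper itself leaves open: there is no proof in the paper to compare against, and your proposal does not settle it either. What you have correctly in place is the shape of what must be shown (if the $H$-problem is NP-complete and $H\prec G$, then the $G$-problem is NP-complete; membership in NP for fixed $G$ is indeed immediate) and the observation that it suffices to reduce the $H$-problem, which is NP-hard by hypothesis, to the $G$-problem. But the entire mathematical content is the existence of your gadget $\Gamma$, above all property (2), and you do not construct it; as you yourself say, no single gadget is visibly correct for an arbitrary pair $H\prec G$. The elementary moves you propose to fall back on (expanding a vertex into a triangle, contracting a contractible triangle, the $3$-edge-colorable case) do not generate the relation $\prec$ in general, and no decomposition of an arbitrary $H$-coloring of $G$ into such moves is available in the paper or in the literature it cites. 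So the proposal reduces the conjecture to a different, and if anything harder-looking, open combinatorial problem rather than proving it.

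Two further soft spots are worth flagging even within your own plan. First, your gluing step in (1) needs more than that the $g$-images of the three half-edge colors form $\partial_H(y)$: the shared edge of $\phi(G_0)$ carries a single color in $E(G)$, and the adjacency constraints of a $G$-coloring must hold simultaneously at both of its endpoints, so ``such partial colorings can be glued'' is itself a nontrivial property that $\Gamma$ must be engineered to have, not a consequence of (1) as stated. Second, fixing one $H$-coloring $g$ of $G$ and reading all $G$-colorings of $\phi(G_0)$ through that particular $g$ is restrictive: a rogue $G$-coloring of $\phi(G_0)$ has no obligation to be compatible with $g$, which is exactly why (2) is the crux. A cleaner, though no easier, route to the conjecture is via Conjecture~\ref{conj:dichotomy}: if $P_{12}\prec H$ and $H\prec G$ then $P_{12}\prec G$ by transitivity of $\prec$, so the dichotomy (together with $\mathrm{P}\neq\mathrm{NP}$) would imply heredity of $\mathcal{C}_{NP}$ — but that replaces the statement by a stronger conjecture, and in its absence your argument remains a program, not a proof.
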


We also would like to offer the following conjecture, which presents a dichotomy for $H$-problems:

\begin{conjecture}
\label{conj:dichotomy} Let $H$ be a connected cubic graph. Then:
\begin{itemize}
    \item if $H$ admits a $P_{12}$-coloring, then the $H$-problem is NP-complete;
    \item if $H$ does not admit a $P_{12}$-coloring, then the $H$-problem is polynomial-time solvable.
\end{itemize}
\end{conjecture}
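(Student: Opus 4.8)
The statement is really two implications and I would attack them separately, in both cases bootstrapping from the conjectures already on the table. One preliminary remark settles half of each side for free: for \emph{every} fixed connected cubic graph $H$ the $H$-problem lies in NP, since an $H$-coloring $f\colon E(G)\to E(H)$ is a certificate of size linear in $|E(G)|$ and verifying that at each vertex $x$ of $G$ the three values $f(\partial_G(x))$ form the star $\partial_H(y)$ of some vertex $y$ of $H$ is immediate. So in the first bullet the content is NP-hardness, and in the second the content is a polynomial algorithm.

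\textbf{The NP-complete case ($P_{12}\prec H$).} I would derive this from two ingredients. The first is Conjecture \ref{conj:NPhereditaryClass}, that $\mathcal{C}_{NP}$ is hereditary: whenever $H'\prec H$ and the $H'$-problem is NP-complete, so is the $H$-problem. The natural route to this is a polynomial reduction of the $H'$-problem to the $H$-problem, using a fixed $H'$-coloring of $H$ (which exists because $H'\prec H$): one replaces each vertex — or each edge — of an input graph $G$ by a gadget assembled from copies of $H$, engineered so that the $H$-colorings of the resulting graph $G^{*}$ are exactly the ``lifts'' of the $H'$-colorings of $G$. Transitivity of $\prec$ hands over the easy inclusion of yes-instances; the gadget must be designed so that it admits no $H$-coloring incompatible with an $H'$-coloring of $G$. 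The second ingredient is that the $P_{12}$-problem is NP-complete, which by Conjecture \ref{conj:P12conjecture} is the same as deciding $k(G)\le 4$, and this is the theorem of \cite{EsperetGiuseppe}. Combining the two, $P_{12}\in\mathcal{C}_{NP}$, hence $H\in\mathcal{C}_{NP}$ for every $H$ with $P_{12}\prec H$.

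\textbf{The polynomial case ($P_{12}\not\prec H$).} By (d) of Lemma \ref{lem:propHcolorings}, $P_{12}\prec H$ forces $k(H)\le 4$, and conversely Conjecture \ref{conj:P12conjecture} gives $k(H)\le 4\Rightarrow P_{12}\prec H$; thus ``$H$ admits no $P_{12}$-coloring'' is precisely $k(H)\ge 5$, and I must show the $H$-problem is polynomial for every such $H$. The strategy is to identify the up-set $\{G : H\prec G\}$ with a hereditary class that has an efficient membership test, using the extreme cases isolated earlier in the paper as templates: if $H$ has no perfect matching and three bridges meeting at a vertex, then Theorems \ref{thm:Sylvester} and \ref{thm:S10S12thm} together with Conjecture \ref{conj:S10Conjecture} pin $H$ down to $S_{10}$, and the $H$-problem is trivially ``yes''; if $H$ is bridgeless with $k(H)=5$, then Conjecture \ref{conj:P10conjecture} gives $P_{10}\prec H$, so $\{G:H\prec G\}\subseteq\mathcal{C}_b$ and one would decide membership inside $\mathcal{C}_b$ in polynomial time; the perfect-matching variant runs through Corollary \ref{cor:S12corollary}, forcing $H=S_{12}$ and the $H$-problem to be ``does $G$ have a perfect matching''. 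The conjectured linear order among the four coloring conjectures (Conjecture \ref{conj:P10S12implicationConj} and the implications proved in this section) would be the organizing principle tying these cases together.

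\textbf{Where the difficulty lies.} Two steps look genuinely hard. On the hardness side everything hinges on the gadget behind Conjecture \ref{conj:NPhereditaryClass}: one needs a substitution operation forcing the $H$-colorings of $G^{*}$ to project onto $H'$-colorings of $G$ with no spurious colorings surviving, and I do not see a uniform construction valid for all pairs $H'\prec H$. On the algorithmic side the obstacle is that $k(H)\ge 5$ permits $H$ to be an arbitrary cyclically complicated cubic graph — for instance an arbitrary snark with $k=5$ — and for such $H$ there is no visible structural description of $\{G:H\prec G\}$ beyond its inclusion in $\mathcal{C}_b$; recognizing this class efficiently would seem to require a substantially sharper form of the Petersen and Sylvester coloring conjectures, or a new idea. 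Consequently I expect a full proof of Conjecture \ref{conj:dichotomy} to remain out of reach until at least Conjectures \ref{conj:P12conjecture} and \ref{conj:NPhereditaryClass} are settled.
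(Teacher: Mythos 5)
The statement you are trying to prove is Conjecture~\ref{conj:dichotomy}: the paper offers no proof of it. It is posed in the future-work section, supported only by the surrounding observations that the $H$-problem is NP-complete for $3$-edge-colorable $H$ (via \cite{Holyer}), is trivial or easy for $H\in\{S_{10},S_{12},P_{10}\}$ modulo Conjectures~\ref{conj:S10Conjecture}, \ref{conj:S12conjecture} and \ref{conj:P10conjecture}, and is NP-complete for $H=P_{12}$ modulo Conjecture~\ref{conj:P12conjecture} together with \cite{EsperetGiuseppe}. Your proposal reproduces exactly these conditional remarks (membership in NP, $P_{12}\in\mathcal{C}_{NP}$ assuming Conjecture~\ref{conj:P12conjecture}, propagation up the order $\prec$ assuming Conjecture~\ref{conj:NPhereditaryClass}), and you candidly say that you cannot close the gaps; so what you have is a research plan consistent with the paper's motivation, not a proof of the statement --- which is as expected, since the statement is open.

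Two of the gaps deserve to be named precisely. For the hardness bullet, everything rests on Conjecture~\ref{conj:NPhereditaryClass}, which is itself an open conjecture of the paper; the missing content is exactly the gadget you gesture at, a substitution construction, uniform in the pair $H'\prec H$, whose $H$-colorings project onto $H'$-colorings of the input with no spurious colorings, and no such construction is known. For the polynomial bullet your sketch does not work even granting every conjecture in the paper: the inclusion $\{G : H\prec G\}\subseteq\mathcal{C}_b$ for bridgeless $H$ is only a necessary condition and yields no decision procedure, and for a general $H$ with $k(H)\geq 5$ (say an arbitrary snark not coverable by four perfect matchings) none of the cited results describes the up-set of $H$ at all. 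Moreover, your reduction of the matching-free case misreads Theorem~\ref{thm:Sylvester}: that theorem constrains the graphs $G$ satisfying $G\prec S_{10}$, and it does not imply that a cubic graph without a perfect matching and with three bridges meeting at a vertex must equal $S_{10}$ (there are infinitely many such graphs), so $H$ is not ``pinned down'' there. In short, both bullets of Conjecture~\ref{conj:dichotomy} remain open, and your text should be presented as a discussion of possible attack routes rather than as a proof.
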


\section*{Acknowledgement}

We thank Giuseppe Mazzuoccolo for proving Theorem \ref{thm:Giuseppe5CDCClawFreeEq}.




\end{document}